\title{Towards Sub-Quadratic Diameter Computation in Geometric Intersection Graphs}
\titlerunning{Towards Sub-Quadratic Diameter in Intersection Graphs}
\author{Karl Bringmann}{Saarland University and Max Planck Institute for Informatics, Saarland Informatics Campus, Saarbrücken, Germany}{bringmann@cs.uni-saarland.de}{}{This work is part of the project TIPEA that has received funding from the European Research Council (ERC) under the European Unions Horizon 2020 research and innovation programme (grant agreement No. 850979).}
\author{Sándor Kisfaludi‑Bak}{Aalto University, Espoo, Finland}{sandor.kisfaludi-bak@aalto.fi}{}{Part of this research was conducted while the author was at the Max Planck Institute for Informatics, and part of it while he was at the Institute for Theoretical Studies, ETH Zürich.}
\author{Marvin Künnemann}{Institute for Theoretical Studies, ETH Zürich, Switzerland}{marvin.kuennemann@eth-its.ethz.ch}{}{Research supported by Dr. Max Rössler, by the Walter Haefner Foundation,
and by the ETH Zürich Foundation. Part of this research was conducted while the author was at the Max Planck Institute for Informatics.}
\author{André Nusser}{BARC, University of Copenhagen, Denmark}{anusser@mpi-inf.mpg.de}{https://orcid.org/0000-0002-6349-869X}{Part of this research was conducted while the author was at Saarbrücken Graduate School of Computer Science and Max Planck Institute for Informatics. The author is supported by the VILLUM Foundation grant 16582.}
\author{Zahra Parsaeian}{Max Planck Institute for Informatics, Saarland Informatics Campus, Saarbrücken, Germany}{zparsaei@mpi-inf.mpg.de}{}{}
\authorrunning{
K. Bringmann,
S. Kisfaludi‑Bak,
M. Künnemann,
A. Nusser,
and Z. Parsaeian} 
\keywords{Hardness in P, Geometric Intersection Graph, Graph Diameter, Orthogonal Vectors, Hyperclique Detection} 
\theoremstyle{plain}
\newtheorem{hypo}[theorem]{Hypothesis}
\newcommand{\diam}{\mathrm{diam}}
\newcommand{\dist}{\mathrm{dist}}
\newcommand{\grid}{\mathrm{GRID}}
\newcommand{\Oh}{\mathcal{O}}
\newcommand{\Otilde}{\widetilde{\Oh}}
\newcommand{\poly}{\mathrm{poly}}
\renewcommand{\leq}{\leqslant}
\renewcommand{\geq}{\geqslant}
\renewcommand{\rho}{\varrho}
\newcommand{\eps}{\varepsilon}
\newcommand{\Reals}{\mathbb{R}}
\newcommand{\etal}{\emph{et~al.}}
\begin{document}
	\maketitle
	
\begin{abstract}
	We initiate the study of diameter computation in geometric intersection
	graphs from the fine-grained complexity perspective. A geometric
	intersection graph is a graph whose vertices correspond to some shapes
	in $d$-dimensional Euclidean space, such as balls, segments, or
	hypercubes, and whose edges correspond to pairs of intersecting
	shapes. The diameter of a graph is the largest distance realized by a
	pair of vertices in the graph.
	
	Computing the diameter in near-quadratic time is possible in several
	classes of intersection graphs [Chan and Skrepetos 2019], but it
	is not at all clear if these algorithms are optimal, especially since
	in the related class of planar graphs the diameter can be computed in
	$\widetilde{\mathcal{O}}(n^{5/3})$ time [Cabello 2019, Gawrychowski et al. 2021].

	In this work we (conditionally) rule out sub-quadratic algorithms in
	several classes of intersection graphs, i.e., algorithms of running
	time $\mathcal{O}(n^{2-\delta})$ for some $\delta>0$. In particular, there are
	no sub-quadratic algorithms already for fat objects in small
	dimensions: unit balls in $\mathbb{R}^3$ or congruent equilateral
	triangles in $\mathbb{R}^2$. For unit segments and congruent equilateral
	triangles, we can even rule out strong sub-quadratic approximations
	already in $\mathbb{R}^2$. It seems that the hardness of approximation may
	also depend on dimensionality: for axis-parallel unit hypercubes
	in~$\mathbb{R}^{12}$, distinguishing between diameter 2 and 3 needs
	quadratic time (ruling out $(3/2-\varepsilon)$- approximations), whereas for
	axis-parallel unit squares, we give an algorithm that distinguishes
	between diameter $2$ and $3$ in near-linear time.

	Note that many of our lower bounds match the best known algorithms up to
	sub-polynomial factors. Ultimately, this fine-grained perspective may
	enable us to determine for which shapes we can have efficient
	algorithms and approximation schemes for diameter computation.
\end{abstract}
	
\section{Introduction}

The diameter of a simple graph $G=(V,E)$ is the largest distance realized
by a pair of its vertices; formally, it is $\diam(G)=\max_
{u,v\in V} \dist_G(u,v)$, where $\dist_G(u,v)$ is the number of edges
on a shortest path from $u$ to $v$. It is one of the crucial
parameters of a graph that can be computed in polynomial time.
Geometric intersection graphs are the standard model for wireless
communication networks~\cite{KuhnWZ08}, but more abstractly, they can
be used to represent networks where the connection of nodes relies on
proximity in some metric space. For a (slightly oversimplified)
example, consider a set of devices in the plane capable of receiving
and transmitting information in a range of radius $2$. These devices
form a communication network that is a unit disk graph. Indeed, two
devices can communicate with each other if and only if their distance
is at most $2$, i.e., if the unit disks centered at the devices have a
non-empty intersection. For our purposes, the underlying metric space
will be $d$-dimensional Euclidean space (henceforth denoted by
$\Reals^d$), and we will consider intersection graphs of common
objects such as balls and segments. For a set $F$ of objects  in
$\Reals^d$ (that is, $F\subset 2^{\Reals^d}$), the corresponding
intersection graph $G[F]$ has vertex set $F$ and edge set
$\{uv\mid u,v \in F,\, u\cap v \neq \emptyset\}$.

Computing the diameter in geometric intersection graphs is an important
task: if the graph represents a communication network, then the
diameter of the network can help estimate the time required to spread
information in the network, as the information needs to go through up
to $\diam(G)$ links to reach its destination. In large networks, it is
also indispensable to have near-linear time algorithms; it is
therefore natural to study if a given class of geometric intersection
graphs admits a near-linear time algorithm for exact or approximate
diameter computation.

The extensive literature on diameter computation serves as a good starting
point. The diameter of an $n$-vertex (unweighted) graph can be
computed in $\Oh(n^\omega \log n)$ expected time, where $\omega<2.37286$
is the exponent of matrix multiplication~\cite{Seidel95}. If the graph
has $m$ edges, then the diameter can also be computed in $\Oh
(mn)$ time~\cite{Thorup99}, which gives a near-quadratic running time
of $\Otilde(n^2)$ in case of sparse graphs, i.e., when $m=\Otilde
(n)$. In fact, these algorithms are capable of computing not only the
diameter, but also all pairwise distances in a graph, known as the
all pairs shortest paths problem.

On the negative side, we know that computing the diameter of a graph
cannot be done in $\Oh(n^{2-\eps})$ time under the Orthogonal Vectors
Hypothesis\footnote{See Section~\ref{sec:prelim} for the definitions
and some background on the hypotheses used in our lower bounds.}
(OV); in fact, deciding if the diameter of a sparse graph is at most $2$ or at least $3$
requires $n^{2-o(1)}$ time under OV~\cite{RodittyW13}\footnote{More precisely, Roditty and Vassilevska-Williams~\cite{RodittyW13} give a reduction from $k$-Dominating Set, which can be adapted to a reduction from OV as described in the beginning of Section~\ref{sec:lower}.}, which rules out
sub-quadratic $(3/2-\eps)$-approximations for all $\eps>0$.

In special graph classes however it is possible to compute the diameter in
sub-quadratic time. In planar graphs, an algorithm with running time
$\Oh(n^2)$ is very easy: one can just run $n$ breadth-first searches,
each of which take linear time because the number of edges is $\Oh
(n)$. It has been a long-standing open problem whether a truly
sub-quadratic algorithm exists for diameter computation, until the
breakthrough of Cabello~\cite{Cabello19}, who used Voronoi diagrams in
planar graphs. The technique was later improved by
Gawrychowski~\etal~\cite{GawrychowskiKMS21}, who obtained a running
time of $\Otilde(n^{5/3})$.

Certain geometric intersections graphs often behave similarly to planar
graphs. The most widely studied classes, (unit) disk and ball graphs
admit approximation schemes for maximum independent set, maximum
dominating set, and several other problems~\cite
{HochbaumM85,HuntMRRRS98,Chan03}, with techniques similar to planar
graphs. Unlike planar graphs, geometric intersection graphs can have
arbitrarily large cliques, but at least the maximum clique can be
approximated efficiently~\cite{BonamyBBCGKRST21}. In fact, planar
graphs are special disk intersection graphs by the circle packing
theorem~\cite{koebe1936kontaktprobleme}. When it comes to computing
the diameter, the similarity with planar graphs is not so easy to see.
Even getting near-quadratic diameter algorithms is non-trivial, as
geometric intersection graphs can be arbitrarily dense.

Chan and Skrepetos~\cite{Chan2019AllPairsSP} provide near-quadratic
($\Otilde(n^2)$) APSP algorithms for several graph classes, including
disks, axis-parallel segments, and fat triangles in the plane, and
cubes and boxes in constant-dimensional space.
Unit disk graphs have a ``weakly'' sub-quadratic algorithm (that is
poly-logarithmically faster than $\Oh(n^2)$)~\cite{ChanS16}. We are not
aware of any $\Oh(n^{2-\eps})$ algorithms for computing the diameter in intersection graphs of any planar shape.

\subparagraph*{Further related work.}
While computing the diameter is known to require time $n^{2\pm o(1)}$ already on sparse graphs (assuming the OV Hypothesis), an extensive line of research including~\cite{AingworthCIM99, RodittyW13, ChechikLRSTW14, CairoGR16, BackursRSWW21, DVVW19, Bonnet21stacs, DalirrooyfardW21, Li21, Bonnet21icalp, DLVW21arxiv} studies the (non-)existence of faster approximation algorithms. On the positive side, this includes in particular a folklore 2-approximation in time $\Otilde(m)$ and a $3/2$-approximation in time $\Otilde(m^{3/2})$~\cite{AingworthCIM99, RodittyW13, ChechikLRSTW14}, both already for weighted digraphs. Remarkably, these algorithms can be shown to be tight: \cite{RodittyW13, BackursRSWW21} establish that the $3/2$-approximation in time $\Otilde(m^{3/2})$ cannot be improved in either approximation guarantee or running time (assuming the $k$-OV Hypothesis), already for unweighted undirected graphs. The near-linear time $2$-approximation is conditionally optimal as well: For unweighted directed graphs, this has been proven independently in~\cite{DalirrooyfardW21, Li21}. For unweighted undirected graphs, following further work~\cite{Bonnet21icalp}, a resolution has been announced only very recently~\cite{DLVW21arxiv}. Thus, approximating the diameter in sparse graphs is quite well understood, including detailed insights into the full accuracy-time trade-off. In the context of our work, the challenge is to obtain a similar understanding for our setting of unweighted, undirected \emph{geometric} graphs, which are non-sparse in general.

Note that for graph classes that are non-sparse, a natural question is whether
diameter can be computed in $\Oh(m+n)$ time, i.e., linear time in the number of
edges plus vertices. The question has been studied by several authors: using
a variant of breadth-first search called \emph{lexicographic breadth-first
search}, one can find a vertex of very large eccentricity. In some classes,
we now know that there is an $\Oh(m+n)$ algorithm for diameter: notably, this holds in interval
graphs as well as $\{$claw,asteroidal triple$\}$-free graphs~\cite{DraganNB97,BrandstadtD03}. In many other graph classes (such as chordal graphs and asteroidal-triple-free graphs) we can get approximations for the diameter that differ only by a
small additive constant from the optimum~\cite
{DraganNB97,Dragan99,CorneilDHP01}. See~\cite{CorneilDK03} for an overview on
the connection of lexicographic BFS and diameter, and see~\cite
{Corneil04survey} for a survey on lexicographic BFS.

Another related direction is to consider edge weighted graph classes. In some
classes of geometric intersection graphs there is a natural weighting to
consider: for example in ball graphs, it is customary to draw the graph edges
with straight segments that connect the centers of the two adjacent disks.
The edges then have a natural weighting by their Euclidean length. This was
considered for unit disk graphs in the plane by Gao and Zhang~\cite
{GaoZ05}, who obtained a $(1+\eps)$-approximation for \textsc{Diameter} in
$\Oh(n^{3/2})$ time for any fixed $\eps>0$. A faster $
(1+\eps)$-approximation with running time $\Oh(n\log^2 n)$ for any fixed
$\eps>0$ was given by Chan and Skrepetos~\cite{ChanS19a}. Since the
underlying graph is not changed by this weighting, it is natural to think
that similar results should be possible also for unweighted unit disk graphs.
It remains an open question whether the complexity of diameter computation is
influenced by the presence of these Euclidean weights.

\subparagraph*{Our results.}

In this article, we show that most of the results of Chan and
Skrepetos~\cite{Chan2019AllPairsSP} cannot be significantly improved
under standard complexity-theoretic assumptions, even if we are only interested in the diameter instead of all pairs shortest paths. In particular, we
rule out sub-quadratic diameter algorithms for fat triangles and axis-aligned
segments in the plane, as well as for unit cubes in $\Reals^3$,
leaving only their $\Otilde(n^{7/3})$ algorithm for arbitrary segments
in $\Reals^2$ as well as their $\Otilde(n^2)$ algorithm for disks
without a matching lower bound.

The \textsc{Diameter} problem has as input a set of geometric objects in
$\Reals^d$ and a number $k$; the goal is to decide whether the diameter of
the intersection graph of the objects is at most $k$. The \textsc
{Diameter-$t$} problem is the same problem, but with $k$ set to the constant
number $t$.
We show the following lower bounds.
\begin{theorem}\label{thm:main}
For all $\delta>0$ there is no $\Oh(n^{2-\delta})$ time algorithm for 
\begin{itemize}
	\item \textsc{Diameter-3}
		in intersection graphs of unit segments in $\Reals^2$ under the
		OV Hypothesis. 
	\item \textsc{Diameter-3}
		in intersection graphs of congruent equilateral
		triangles in $\Reals^2$ under the OV Hypothesis.
	\item \textsc{Diameter} 
		in intersection graphs of unit balls in $\Reals^3$ under the
		OV Hypothesis.
	\item \textsc{Diameter} 
		in intersection graphs of axis-parallel unit cubes in $\Reals^3$ under the
		OV Hypothesis.
	\item \textsc{Diameter}
		in intersection graphs of axis-parallel line segments in $\Reals^2$
		under the OV Hypothesis.
	\item \textsc{Diameter-2}
		in intersection graphs of axis-parallel hypercubes in
		$\Reals^{12}$ under the Hyperclique Hypothesis.
\end{itemize}
\end{theorem}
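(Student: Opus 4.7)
The overall plan is to reduce Orthogonal Vectors (for the first five items) or Hyperclique (for the last item) to the corresponding \textsc{Diameter} problem, inflating an $n$-size OV/Hyperclique instance into an intersection-graph instance of size $\Otilde(n)$. The starting template, going back to Roditty and Vassilevska-Williams~\cite{RodittyW13} and indicated at the beginning of Section~\ref{sec:lower}, encodes an OV instance $A,B \subseteq \{0,1\}^d$ of size $n$ as a sparse graph on $\Oh(n+d)$ vertices: one vertex per $a \in A$, per $b \in B$, per coordinate $c \in [d]$, plus a small constant number of apex vertices. The incidences $a$--$c$ and $b$--$c$ record $a[c]=1$ and $b[c]=1$, while the apex vertices ensure that every pair that does not come from $A \times B$ is within distance two. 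The resulting graph then has diameter $\leq 2$ iff every $(a,b)$ shares a witness coordinate, and diameter $\geq 3$ otherwise. So what remains is to realize this abstract graph as an intersection graph of the prescribed shapes, with \emph{only} the intended intersections, and at total complexity $\Otilde(n+d)$.

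The geometric realizations form the technical heart of the proof and are tailored to each class. For unit segments in $\Reals^2$, my plan is to place the $A$-vertices and $B$-vertices along two well-separated parallel rails and to represent coordinates by short transversal unit segments, using ``connector'' unit segments to bridge the unit-length constraint across the width of the construction. For congruent equilateral triangles in $\Reals^2$, use translates of a single reference triangle at carefully chosen grid points so that intersections occur precisely for the designed pairs. For unit balls and axis-parallel unit cubes in $\Reals^3$, use the three coordinate directions to separately encode the role (vector vs.\ coordinate) and the bit values, exploiting the freedom of a third dimension to avoid crossings between $A$-objects and $B$-objects. For axis-parallel line segments in $\Reals^2$, the natural approach is a grid of horizontal $A$-segments and vertical $B$-segments, where coordinate selection is reflected in the chosen row/column spans. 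In each case, additional padding is needed so that the apex role and the ``everything else is within distance two'' property are also realized geometrically.

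For the last item, axis-parallel hypercubes in $\Reals^{12}$, I would switch to the Hyperclique Hypothesis for a suitable constant~$k$. Axis-parallel hypercubes are particularly amenable to such reductions because intersection in $\Reals^d$ factors coordinatewise, so multipartite incidence constraints (essentially tensors of OV-like conditions) can be realized by encoding each pair of parts into a constant number of dimensions. The target dimension~$12$ arises from encoding a $k$-uniform hyperclique with $\Oh(k)$ dimensions per part plus a few auxiliary coordinates; the resulting gap will already be between diameter~$2$ and~$3$, which simultaneously yields the claimed $(3/2-\eps)$ approximation hardness mentioned in the introduction.

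The main obstacle, uniform across all six constructions, is avoiding \emph{spurious} intersections: unit-size and axis-aligned shapes are rigid, and any unintended intersection between, say, two $A$-vertices would collapse the diameter to~$2$ and destroy the gap, while any missing intended intersection could inflate the diameter even in a NO instance. I expect the bulk of the work to consist of position, scale, and padding choices that robustly preserve the full intersection/non-intersection pattern of the template, together with a careful case analysis verifying that each geometric gadget is locally consistent with the abstract graph while remaining globally separated from all other gadgets.
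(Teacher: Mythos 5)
Your plan hinges on realizing the Roditty--Vassilevska-Williams template (vector nodes, $d$ coordinate nodes, a few apices, gap between diameter $2$ and $3$) directly as a geometric intersection graph, but this is exactly the step that fails and that the paper explicitly works around. A coordinate node $c$ must be adjacent to an \emph{arbitrary} subset of the $2n$ vector nodes (those $v$ with $(v)_c=1$), i.e., you must embed an arbitrary bipartite graph on $L\times R$ with $|L|=n$, $|R|=d$ using unit-size objects; no gadget of the kind you sketch (rails, transversal unit segments, translated triangles) can give a single unit object an arbitrary prescribed neighborhood without creating spurious intersections. The paper avoids this by replacing coordinate nodes with \emph{1-entry nodes} $(v)_k$ (one per $1$-entry of each vector, $\Oh(nd)$ of them), so every required adjacency is either a star at a vector node or a clique of objects through a common point $w_k$; the non-orthogonality witness then becomes a $3$-path $a-(a)_k-(b)_k-b$, and the gap is between diameter $3$ and $4$, not $2$ and $3$. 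Your claimed $2$-vs-$3$ gap is in fact too strong: the paper's Theorem~\ref{thm:alg} solves \textsc{Diameter-2} for axis-parallel unit squares in near-linear time, and Corollary~\ref{cor:approxbounds} only claims $(4/3-\eps)$-approximation hardness for unit segments and triangles, consistent with a $3$-vs-$4$ gap. For unit balls, unit cubes in $\Reals^3$, and axis-parallel segments in $\Reals^2$ the paper abandons constant diameter altogether and builds diameter-$\Theta(d)$ instances out of $2d$ cliques chained by perfect matchings with vector-coordinate nodes $(v)_k$ for \emph{all} $k$; these instances are dense ($\Omega(n^2)$ edges), which is precisely why the geometric representation is needed, and none of this appears in your outline.

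The hyperclique item also has a quantitative error: you state that the reduction inflates an $n$-size instance to $\Otilde(n)$ objects, but the paper's reduction from 3-uniform $6$-hyperclique produces $N=\Oh(n^3)$ hypercubes (one per non-edge triple in $V_1\times V_2\times V_3$, per triple in $V_4\times V_5\times V_6$, and per hyperedge), so that an $\Oh(N^{2-\eps})$ algorithm yields $\Oh(n^{6-3\eps})$, contradicting the $n^{6-o(1)}$ bound. With only $\Otilde(n)$ objects, a quadratic diameter algorithm would give $\Otilde(n^2)$ time for hyperclique, which contradicts nothing. The dimension $12$ arises as $6$ parts times $2$ coordinates per part (an equality gadget per part), not ``$\Oh(k)$ dimensions per part plus auxiliary coordinates.'' So while your high-level instinct (coordinatewise factoring of hypercube intersections) is right, the size accounting and the structure of the reduction need to be reworked before the last item goes through.
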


Our results imply lower bounds for approximations. (See Section~\ref{sec:unitball} for a short proof.)

\begin{corollary}\label{cor:approxbounds}
Under the Orthogonal Vectors and Hyperclique Hypotheses, for all
$\delta,\eps>0$ there is no $\Oh(n^{2-\delta})$ time $
(4/3-\eps)$-approximation for \textsc{Diameter} in intersection graphs of
unit segments or congruent equilateral triangles in $\Reals^2$, and no $
(3/2-\eps)$-approximation in intersection graphs of axis-parallel hypercubes
in $\Reals^{\geq 12}$. Furthermore, for all $\delta>0$ there is no $
\Oh(n^{2-\delta}\poly(1/\eps))$ time approximation scheme that provides a $
(1+\eps)$-approximation for \textsc{Diameter} for any $\eps>0$ in
intersection graphs of axis-parallel unit segments in $\Reals^2$, or unit
balls or axis-parallel unit cubes in $\Reals^3$.
\end{corollary}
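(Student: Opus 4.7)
The plan is to derive the approximation lower bounds by routine reductions from the exact hardness results stated in Theorem~\ref{thm:main}, exploiting the fact that the hard instances separate yes- and no-cases by only a small constant gap in diameter.

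First I would handle the $(4/3-\eps)$- and $(3/2-\eps)$-approximation bounds. A $c$-approximation algorithm for \textsc{Diameter}, on an instance with true diameter $D$, outputs a value $\widetilde{D}$ with $D\leq \widetilde{D}\leq cD$. Hence if exact diameter hardness already rules out distinguishing diameter $t$ from $t+1$ in sub-quadratic time, then any $c$-approximation with $c<(t+1)/t$ also distinguishes these two cases: on a yes-instance it returns at most $ct<t+1$, while on a no-instance it returns at least $t+1$. Instantiating this with \textsc{Diameter-3} for unit segments and congruent equilateral triangles in $\Reals^2$ gives the threshold $c<4/3$, and with \textsc{Diameter-2} for axis-parallel hypercubes in $\Reals^{12}$ gives $c<3/2$. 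For dimensions $d>12$ one simply embeds the $12$-dimensional instance by zero-padding the extra coordinates, so the bound extends to $\Reals^{\geq 12}$.

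Next I would address the $(1+\eps)$-approximation bounds for axis-parallel unit segments in $\Reals^2$, unit balls in $\Reals^3$, and axis-parallel unit cubes in $\Reals^3$. For these, Theorem~\ref{thm:main} rules out exact \textsc{Diameter} in sub-quadratic time, but a look at the corresponding reductions (to be presented later in the paper) shows that in each case the yes- and no-instances differ in diameter by exactly one, and the larger of the two diameters is bounded by some absolute constant $k$ independent of $n$. Given this, setting $\eps:=1/(2k)$ makes any $(1+\eps)$-approximation distinguish diameter $k$ from diameter $k+1$; since $\poly(1/\eps)=\poly(k)$ is then a constant, a hypothetical $\Oh(n^{2-\delta}\poly(1/\eps))$-time approximation scheme would give an $\Oh(n^{2-\delta})$-time exact algorithm, contradicting Theorem~\ref{thm:main}.

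The only non-routine step is the constant-diameter claim used in the second paragraph, i.e., verifying for each of the three reductions that the produced instances have diameter bounded by an absolute constant $k$. This will be a small bookkeeping observation about the layered-gadget structure typical of OV-based diameter reductions, where the ``yes'' diameter is determined by the number of gadget layers, which is $O(1)$ by construction.
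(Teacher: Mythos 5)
Your first paragraph (the $(4/3-\eps)$ and $(3/2-\eps)$ bounds from \textsc{Diameter-3} and \textsc{Diameter-2} hardness, plus padding for $\Reals^{\geq 12}$) is correct and matches the paper's argument.

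The second paragraph contains a genuine error. You claim that the reductions for unit balls, axis-parallel unit cubes, and axis-parallel unit segments produce instances whose diameter is bounded by an absolute constant $k$, because ``the number of gadget layers is $O(1)$ by construction.'' This is false: those constructions (Section~\ref{sec:unitball} and Appendix~\ref{sec:unitseg}) use $2d$ layers of cliques/independent sets $C^T_1,\dots,C^T_{2d}$ and $C^B_1,\dots,C^B_{2d}$, and the resulting graph has diameter $d^*=2d+4$ or $2d+5$, where $d$ is the OV dimension --- which under the (moderate-dimensional) OV Hypothesis is superconstant. Indeed, this is precisely why Theorem~\ref{thm:main} states these three cases as \textsc{Diameter} rather than \textsc{Diameter-$t$} for a fixed $t$, and why the corollary for these shapes only rules out approximation \emph{schemes} with $\poly(1/\eps)$ running-time dependence rather than a fixed constant-factor approximation. (Had the diameter really been an absolute constant, your argument would prove the stronger statement that some fixed $(1+\eps_0)$-approximation is hard, which the corollary deliberately does not claim.)

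The correct argument must let $\eps$ depend on the instance: set $\eps=\Theta(1/d^*)=\Theta(1/d)$, so that a $(1+\eps)$-approximation distinguishes diameter $d^*$ from $d^*+1$ and hence decides OV. The hypothesized running time then becomes $\Oh(n^{2-\delta}\poly(1/\eps))=\Oh(n^{2-\delta}\poly(d))$, which still contradicts the OV Hypothesis because that hypothesis tolerates arbitrary $\poly(d)$ factors. This is the step your proposal is missing; without it, the non-constant diameter breaks your reduction to an exact algorithm.
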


Theorem~\ref{thm:main} shows that sub-quadratic algorithms in many intersection graphs
classes are unlikely to exist; one must wonder if such algorithms are
possible at all? A notable case missing from our lower bounds are the
case of unit disks; indeed, it is possible that unit disk graphs enjoy
sub-quadratic diameter computation. More generally, it is an
interesting open question whether intersection graphs of so-called
pseudodisks admit sub-quadratic diameter algorithms. (Pseudodisks are
objects bounded by Jordan curves such that the boundaries of any pair of
objects have at most two intersection points.) We make a step
towards resolving this problem with the following theorem for
intersection graphs of axis-parallel unit squares --- since
axis-parallel unit squares are pseudodisks.

\begin{theorem}\label{thm:alg} 
	There is an $\Oh(n\log n)$ algorithm for
	\textsc{Diameter-2} in unit square graphs.
\end{theorem}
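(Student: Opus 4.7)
The plan is to identify each unit square with its center $c_v\in\mathbb{R}^2$, so that two squares are adjacent iff $\|c_u-c_v\|_\infty\le 1$. In this formulation, the diameter is at most $2$ iff for every pair of centers $u,v$, the \emph{common-neighbor box}
\[B(u,v) := [\max(x_u,x_v)-1,\,\min(x_u,x_v)+1]\times[\max(y_u,y_v)-1,\,\min(y_u,y_v)+1]\]
contains at least one center of $V$ (which then serves as a common neighbor, or coincides with $u$ or $v$ in case they are adjacent). First I would compute the axis-aligned bounding box in $O(n)$. Since a common neighbor of the leftmost and rightmost centers needs $x$-coordinate in $[x_{\max}-1,x_{\min}+1]$, a bounding-box side exceeding $2$ already forces diameter $>2$, so I return NO in that case. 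Next, any center lying in the \emph{central box} $C:=[x_{\max}-1,x_{\min}+1]\times[y_{\max}-1,y_{\min}+1]$ is at $L_\infty$-distance at most $1$ from every other center and therefore dominates the graph; checking this with a linear scan in $O(n)$ and returning YES if such a center exists handles the ``easy'' subcase.

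The remaining (and interesting) case is when no center lies in $C$. For this I build a $2$D orthogonal range-emptiness data structure in $O(n\log n)$ time and prove a structural lemma reducing the question to a constant number of range-emptiness queries on boxes $B(u^\star,v^\star)$ defined by a canonical family of \emph{extremal pairs}. The key observation is the monotonicity $B(u',v')\subseteq B(u,v)$ whenever $u'$ SW-dominates $u$ and $v'$ NE-dominates $v$ (and symmetrically in the NW/SE direction); hence any pair $(u,v)$ failing to have a common neighbor can be ``pushed'' to a pair where $u$ is Pareto-extremal in a fixed diagonal sense and $v$ is Pareto-extremal in the opposite one, with the new box still empty. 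The extremal pairs I would check are obtained by extremizing each of the four linear functionals $\sigma\in\{x,\,y,\,x+y,\,x-y\}$, with a small number of tie-breaking rules per direction; each resulting check is a single $O(\log n)$ range-emptiness query, so the total time remains $O(n\log n)$.

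The main obstacle is verifying this structural lemma: one must show that extremizing only the four linear functionals $x,y,x+y,x-y$ (with a constant number of tie-breakings) suffices to expose any failing pair. The subtlety is that the Pareto front in a given diagonal direction may contain $\Omega(n)$ incomparable centers, so the monotonicity argument alone does not immediately yield an $O(1)$-size canonical family. I expect the proof to proceed by case analysis on which of the four sides of an empty box $B(u,v)$ witnesses the emptiness: in each case, replacing either $u$ or $v$ by an appropriate extremal center (minimizing or maximizing one of the four functionals) preserves emptiness and lands on a pair that our constant-size family enumerates. Combined with the $O(n)$ preliminary checks and the $O(n\log n)$ range-tree construction, this yields the claimed bound of \cref{thm:alg}.
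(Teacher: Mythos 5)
Your setup is sound and your first reduction is exactly the paper's: the monotone ``pushing'' argument (replacing $u$ by a point that SW-dominates it and $v$ by one that NE-dominates it only shrinks the common-neighbor box) is precisely Lemma~\ref{lem:checkfronts}, which reduces the problem to checking all pairs between two opposite skylines (and symmetrically for the other diagonal). Your preliminary checks (bounding box of side $>2$, a center in the central box) are correct but not needed.

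The gap is the final step. The claim that extremizing the four functionals $x,y,x+y,x-y$ (plus $O(1)$ tie-breaks) exposes every failing pair is false, and no constant-size canonical family exists. Monotonicity only preserves emptiness of $B(u,v)$ when you move $u$ strictly ``outward'' in the SW sense; moving $u$ \emph{along} its Pareto front trades one box boundary for another and does not preserve emptiness. Concretely, take skylines $P=\{p_1,\dots,p_m\}$ and $Q=\{q_1,\dots,q_m\}$ in $x$-order and a set of squares whose doubled versions cover every pair $(p_i,q_j)$ except one interior pair $(p_{i_0},q_{j_0})$ with $1<i_0,j_0<m$; then the unique empty box is witnessed by a pair that is extremal in none of your four directions, so all of your $O(1)$ queries return ``nonempty'' and the algorithm wrongly answers YES. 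The missing idea --- and the heart of the paper's proof --- is a second structural observation: since each doubled square $w^2$ is axis-parallel and each front is an $x$- and $y$-monotone staircase, $w^2$ covers a \emph{contiguous interval} $I_w$ of $P$ and a contiguous interval $J_w$ of $Q$. Hence ``every pair in $P\times Q$ has a common neighbor'' is equivalent to ``the $O(n)$ rectangles $I_w\times J_w$ cover the $|P|\times|Q|$ grid,'' which is decided by a standard plane sweep in $\Oh(n\log n)$ time. Without this (or an equivalent way to test all $\Theta(n^2)$ front pairs in near-linear time), your algorithm is incomplete.
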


The algorithm is based on the insight that the problem can be simplified to the following: given \emph{skylines} $A,B$ and a list of axis-parallel squares $S$, check whether each pair $(a,b) \in A\times B$ is covered by some square $s\in S$. Since any axis-parallel square $s\in S$ covers \emph{intervals} in $A$ and $B$, this problem in turn reduces to checking whether the union of $|S|$ rectangles covers the $A\times B$ grid. Using near-linear skyline computation~\cite{KungLP75},  and a line sweep for the grid covering problem, we obtain a surprisingly simple $\Oh(n \log n)$ time algorithm (in contrast to the quadratic-time hardness in higher dimensions).

\subparagraph*{Organization.} After some preliminaries and the introduction of the complexity-theoretic hypotheses used in the paper, we present our algorithm for unit squares in Section~\ref{sec:alg}. Section~\ref{sec:lower} showcases our lower bound techniques. The lower bounds for unit segments, congruent equilateral triangles as well as for axis-parallel unit segments have a structure similar to two other lower bounds in Section~\ref{sec:lower}, and they can be found in the appendix.

\section{Preliminaries}\label{sec:prelim}

Let $G= (V, E)$ be a graph, and $u$ and $v$ be vertices in $G$. The
distance from $u$ to $v$ is denoted by  $\dist_G(u,v)$, and equals the
number of edges on the shortest path from $u$ to $v$ in $G$. The diameter of
$G$ is denoted by $\diam(G)$  and equals to $ \max_{u,v \in V} \dist_G
(u, v)$. The open and closed neighborhood of a vertex $v$ are $N(v)=
\{u\in V\mid uv \in E\}$ and $N[v]=\{v\}\cup N(v)$, respectively.
Let $A,B \subseteq V$ be sets of vertices. The diameter
of $A$ and $B$ is denoted by $\diam_G(A, B)
= \max_{(a,b) \in A \times B} \dist_{G}(a, b)$.
Finally, let $[n]$ denote the set $\{ 1, \dots , n \}$.

\subsection{Hardness assumptions}

We use two hypotheses from fine-grained complexity theory for our lower bounds. For an overview of this field, we refer to the survey~\cite{VassilevskaW18}.

\subparagraph*{Orthogonal Vectors Hypothesis.} 
Let \textsc{OV} denote the following problem: Given sets $A, B$ of $n$ vectors in $\{0,1\}^d$, determine whether there exists an \emph{orthogonal pair} $a\in A, b\in B$, i.e., for all $i\in [d]$ we have $(a)_i=0$ or $(b)_i = 0$.
Exhaustive search yields an $\Oh(n^2d)$ algorithm, which can be improved for small dimension $d=c\log n$ to $O(n^{2-1/O(\log(c))})$~\cite{AbboudWY15, ChanW16}. For larger dimensions $d=\omega(\log n)$, it is known~\cite{Williams05} that no $O(n^{2-\epsilon})$-time algorithm can exist unless the Strong Exponential Time Hypothesis~\cite{ImpagliazzoP01} fails. Thus, the Strong Exponential Time Hypothesis implies the following (so-called ``moderate-dimensional'') OV Hypothesis. 

\begin{hypo}[Orthogonal Vectors Hypothesis]
	For no $\epsilon > 0$, there is an algorithm that solves \textsc{OV} in time $\Oh(\mathrm{poly}(d) n^{2-\varepsilon})$.
\end{hypo}

By now, there is an extensive list of problems with tight lower bounds (including sub-quadratic equivalences) based on this assumption, see~\cite{VassilevskaW18}.

\subparagraph*{Hyperclique Hypothesis.} For $k\ge 4$, let \textsc
 {3-uniform $k$-Hyperclique} denote the following problem: Given a 3-uniform
 hypergraph $G=(V,E)$, determine whether there exists a \emph{hyperclique} of
 size $k$, i.e., a set $S\subseteq V$ such that for all $e\in \binom{S}{3}$,
 we have $e\in E$. By exhaustive search, we can solve this problem in time
 $\Oh(n^k)$ where $n=|V|$. Unlike the usual \textsc{$k$-Clique} problem
 in graphs, for which a $\Oh(n^{\omega k/3 + \Oh(1)})$ algorithm 
exists~\cite{NesetrilP85}, no techniques are known that would beat exhaustive search by a
 polynomial factor for the problem in hypergraphs. This has lead to the
 hypothesis that exhaustive search is essentially best possible.  

\begin{hypo}[Hyperclique Hypothesis]
	For no $\epsilon > 0$ and $k\ge 4$, there is an algorithm that would
	solve \textsc{3-uniform $k$-Hyperclique} in time $\Oh(n^{k-\epsilon})$.
\end{hypo}

See \cite{LincolnVWW18} for a detailed description of the plausibility
of this hypothesis. Tight conditional lower bounds (including fine-grained
equivalences) have been obtained, e.g., in~\cite{AbboudBDN18, BringmannFK19,
KunnemannM20, AnGIJKPN21}.

\section{Solving the Diameter-2 problem on unit square graphs}\label{sec:alg}

In this section, we are going to present an algorithm with running time
$\Oh(n\log n)$ for the \textsc{Diameter-$2$} problem for unit square
graphs.
For each unit square $v \in V$, we consider the center of $v$, denoted $\dot
{v}$, as the point representing $v$ in the plane; for a square set $X\subset
V$, we use $\dot X$ to denote the set of corresponding centers. Let $\dot
{G} = (\dot{V}, E)$ denote the graph on centers of squares in $G$. Hence, for
all $\{u, v\} \in E(G)$, there is an edge between $\dot{u}$ and $\dot
{v}$. Note that we will often use $\dot G$ and $G$ interchangeably.

Notice that a graph has diameter at most two if and only if for every pair of
vertices $u, v \in V:\; N[u]\cap N[v] \neq \emptyset$, i.e., there is a
square $w$ that both $u$ and $v$ have an intersection with or they intersect
each other. Equivalently, the square of side length $2$ centered at $\dot w$
must cover both $\dot u$ and $\dot v$. For a square $w$, let $w^2$ denote the
side-length-$2$ square of center $\dot w$. Thus, in order to decide whether
$\diam(G)\leq 2$, it is sufficient to check whether for every $u,v\in V$
there exists $w\in V$ such that $\dot u,\dot v \in w^2$.

For a set of points $P$ we define the \textit{top-left front}, $\textup{TLF}
(P)$, and \textit{bottom-right front}, $\textup{BRF}(P)$ as follows
(see Figure \ref{fig:front}a).
\begin{align*}
\textup{TLF}(P) &= \{p\in P \;|\; \forall q \in P\colon p_x \leq q_x \text{ or } p_y \geq q_y\}\\
\textup{BRF}(P) &= \{p\in P \;|\; \forall q \in P\colon p_x \geq q_x \text{ or } p_y \leq q_y\}
\end{align*}

Similarly, we define the \textit{top-right front}, $\textup{TRF}(P)$, and \textit
{bottom-left front}, $\textup{BLF}(P)$ as follows (see Figure \ref
{fig:front}b).
\begin{align*}
	\textup{TRF}(P) &= \{p \in P \,|\, \forall q \in P\colon p_x \geq q_x \text{ or } p_y \geq q_y\}\\
	\textup{BLF}(P) &= \{p \in P \,|\, \forall q \in P\colon p_x \leq q_x \text{ or } p_y \leq q_y\}
\end{align*}

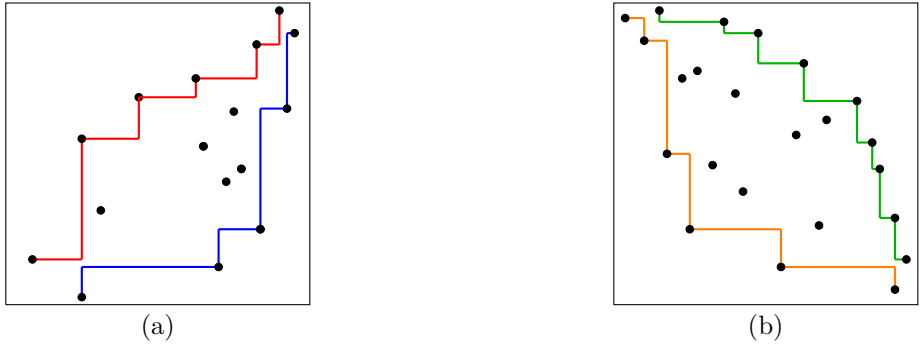
\begin{figure}[t] 
	\centering		
	\begin{tikzpicture}
		\draw node at (2,-.3) {(a)};
		\draw (0,0) -- (4,0) -- (4,4) -- (0,4) -- (0,0);
		
		\draw [color=blue, thick](3.8, 3.6) -- (3.7, 3.6);
		\draw [color=blue, thick](3.7, 2.6) -- (3.7, 3.6);
		\draw [color=blue, thick](3.7, 2.6) -- (3.35, 2.6);
		\draw [color=blue, thick](3.35, 2.6) -- (3.35, 1);
		\draw [color=blue, thick](2.8, 1) -- (3.35, 1);
		\draw [color=blue, thick](2.8, 1) -- (2.8, .5);
		\draw [color=blue, thick](2.8, .5) -- (1, .5);
		\draw [color=blue, thick](1, .5) -- (1, .1);
		\node[draw,circle,inner sep=1pt,fill] at (3.8,3.6) {};
		\node[draw,circle,inner sep=1pt,fill] at (2.9,1.63) {};
		\node[draw,circle,inner sep=1pt,fill] at (3.7,2.6) {};
		\node[draw,circle,inner sep=1pt,fill] at (2.6,2.1) {};
		\node[draw,circle,inner sep=1pt,fill] at (3.1,1.8) {};
		\node[draw,circle,inner sep=1pt,fill] at (3.35,1) {};
		\node[draw,circle,inner sep=1pt,fill] at (2.8,.5) {};
		\node[draw,circle,inner sep=1pt,fill] at (1,.1) {};

		\node[draw,circle,inner sep=1pt,fill] at (3,2.56) {};
\node[draw,circle,inner sep=1pt,fill] at (2.6,2.1) {};
		\node[draw,circle,inner sep=1pt,fill] at (3.1,1.8) {};
		\node[draw,circle,inner sep=1pt,fill] at (3.35,1) {};
		\node[draw,circle,inner sep=1pt,fill] at (1.75,2.75) {};
		\node[draw,circle,inner sep=1pt,fill] at (1.25,1.25) {};
		
		\draw [color=red, thick](3.6, 3.9) -- (3.6, 3.45);
		\draw [color=red, thick](3.6, 3.45) -- (3.3, 3.45);
		\draw [color=red, thick](3.3, 3.45) -- (3.3, 3);
		\draw [color=red, thick](3.3, 3) -- (2.5, 3);
		\draw [color=red, thick](2.5, 2.75) -- (2.5, 3);
		\draw [color=red, thick](2.5, 2.75) -- (1.75, 2.75);
		\draw [color=red, thick](1.75, 2.75) -- (1.75, 2.2);
		\draw [color=red, thick](1.75, 2.2) -- (1, 2.2);
		\draw [color=red, thick](1, 2.2) -- (1, .6);
		\draw [color=red, thick](1, .6) -- (.35, .6);
		\node[draw,circle,inner sep=1pt,fill] at (3.6,3.9) {};
		\node[draw,circle,inner sep=1pt,fill] at (3.3,3.45) {};
		\node[draw,circle,inner sep=1pt,fill] at (2.5,3) {};
		\node[draw,circle,inner sep=1pt,fill] at (1,2.2) {};
		\node[draw,circle,inner sep=1pt,fill] at (.35,.6) {};
		
\draw node at (10,-.3) {(b)};
		\draw (8,0) -- (12,0) -- (12,4) -- (8,4) -- (8,0);
		
		\draw [color=green!70!black!100!, thick](8.6, 3.9) -- (8.6, 3.75);
		\draw [color=green!70!black!100!, thick](9.45, 3.75) -- (8.6, 3.75);
		\draw [color=green!70!black!100!, thick](9.45, 3.75) -- (9.45, 3.6);
		\draw [color=green!70!black!100!, thick](9.45, 3.6) -- (9.9, 3.6);
		\draw [color=green!70!black!100!een, thick](9.9, 3.6) -- (9.9, 3.2);
		\draw [color=green!70!black!100!, thick](9.9, 3.2) -- (10.5, 3.2);
		\draw [color=green!70!black!100!, thick](10.5, 3.2) -- (10.5, 2.7);
		\draw [color=green!70!black!100!, thick](10.5, 2.7) -- (11.2, 2.7);
		\draw [color=green!70!black!100!, thick](11.2, 2.7) -- (11.2, 2.15);
		\draw [color=green!70!black!100!, thick](11.2, 2.15) -- (11.4, 2.15);
		\draw [color=green!70!black!100!, thick](11.4, 1.8) -- (11.4, 2.15);
		\draw [color=green!70!black!100!, thick](11.4, 1.8) -- (11.5, 1.8);
		\draw [color=green!70!black!100!, thick](11.5, 1.8) -- (11.5, 1.15);
		\draw [color=green!70!black!100!, thick](11.5, 1.15) -- (11.7, 1.15);
		\draw [color=green!70!black!100!, thick](11.7, .6) -- (11.7, 1.15);
		\draw [color=green!70!black!100!, thick](11.7, .6) -- (11.85, .6);
		\node[draw,circle,inner sep=1pt,fill] at (8.6,3.9) {};
		\node[draw,circle,inner sep=1pt,fill] at (9.9,3.6) {};
		\node[draw,circle,inner sep=1pt,fill] at (10.5,3.2) {};
		\node[draw,circle,inner sep=1pt,fill] at (11.2,2.7) {};
		\node[draw,circle,inner sep=1pt,fill] at (11.5,1.8) {};
		\node[draw,circle,inner sep=1pt,fill] at (11.85,.6) {};
		
		\node[draw,circle,inner sep=1pt,fill] at (8.9,3) {};
		\node[draw,circle,inner sep=1pt,fill] at (9.1,3.1) {};
		\node[draw,circle,inner sep=1pt,fill] at (9.3,1.85) {};
		\node[draw,circle,inner sep=1pt,fill] at (9.7,1.5) {};
		\node[draw,circle,inner sep=1pt,fill] at (9.6,2.8) {};
		\node[draw,circle,inner sep=1pt,fill] at (9.45,3.75) {};
		\node[draw,circle,inner sep=1pt,fill] at (10.4,2.25) {};
		\node[draw,circle,inner sep=1pt,fill] at (10.8,2.45) {};
		\node[draw,circle,inner sep=1pt,fill] at (11.4,2.15) {};
		\node[draw,circle,inner sep=1pt,fill] at (11.7,1.15) {};
		\node[draw,circle,inner sep=1pt,fill] at (10.7,1.05) {};
		
		\draw[color=orange, thick] (8.15, 3.8) -- (8.4, 3.8);
		\draw[color=orange, thick] (8.4, 3.5) -- (8.4, 3.8);
		\draw[color=orange, thick] (8.4, 3.5) -- (8.7, 3.5);
		\draw[color=orange, thick] (8.7, 3.5) -- (8.7, 2);
		\draw[color=orange, thick] (8.7, 2) -- (9, 2);
		\draw[color=orange, thick] (9, 2) -- (9, 1);
		\draw[color=orange, thick] (9, 1) -- (10.2, 1);
		\draw[color=orange, thick] (10.2, 1) -- (10.2, .5);
		\draw[color=orange, thick] (10.2, .5) -- (11.7, .5);
		\draw[color=orange, thick] (11.7, .5) -- (11.7, .2);
		\node[draw,circle,inner sep=1pt,fill] at (11.7,.2) {};
		\node[draw,circle,inner sep=1pt,fill] at (8.15,3.8) {};
		\node[draw,circle,inner sep=1pt,fill] at (8.4,3.5) {};
		\node[draw,circle,inner sep=1pt,fill] at (8.7,2) {};
		\node[draw,circle,inner sep=1pt,fill] at (9,1) {};
		\node[draw,circle,inner sep=1pt,fill] at (10.2,.5) {};
	\end{tikzpicture}
	\caption{The skylines (or fronts) of a point set $P$. In figure (a), the points on the blue curve are $\textup{BRF}(P)$ and on the red curve are $\textup{TLF}(P)$. In figure (b), the points on the green curve are $\textup{TRF}(P)$ and on the orange curve are $\textup{BLF}(P)$.}
	\label{fig:front}
\end{figure}
	
\begin{lemma}\label{lem:checkfronts}
	The graph $G$ has diameter at most $2$ if and only if
	\[\max\Big(\diam_{\dot G}(\textup{BLF}(\dot{V}),\textup{TRF}(\dot{V})),\diam_{\dot G}(\textup{TLF}(\dot{V}),\textup{BRF}(\dot{V}))\Big)\leq 2.\]
\end{lemma}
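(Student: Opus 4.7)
The forward direction is immediate and I would dispatch it in one line: since $\textup{BLF}(\dot V), \textup{TLF}(\dot V), \textup{BRF}(\dot V), \textup{TRF}(\dot V)$ are all subsets of $\dot V$, any pairwise distance between them is at most $\diam(G)$.

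For the backward direction, the plan is to show that every pair $u, v \in V$ admits a witness square $w \in V$ with $w^2 \supseteq \{\dot u, \dot v\}$; by the observation preceding the lemma, this certifies $\dist_G(u,v) \leq 2$. I would first perform a case split on the diagonal along which $\dot u$ and $\dot v$ lie: after possibly swapping $u$ and $v$ so that $\dot u_y \leq \dot v_y$, either (a) $\dot u_x \leq \dot v_x$ and $\dot v$ lies NE of $\dot u$, or (b) $\dot u_x > \dot v_x$ and $\dot v$ lies NW of $\dot u$. In case (a) I will use $\textup{BLF}/\textup{TRF}$, and in case (b) $\textup{BRF}/\textup{TLF}$; since the two cases are fully symmetric under a reflection, only case (a) needs to be written out.

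In case (a), the key step is to exhibit an \emph{extremal pair} $u' \in \textup{BLF}(\dot V)$ and $v' \in \textup{TRF}(\dot V)$ satisfying $\dot u' \preceq \dot u$ and $\dot v' \succeq \dot v$ in the componentwise partial order. To produce $u'$, I would pick a minimizer of $p_x + p_y$ over the nonempty set $\{p \in \dot V : p_x \leq \dot u_x,\ p_y \leq \dot u_y\}$: any point strictly SW of $u'$ would lie in this same quadrant with smaller $x+y$, contradicting minimality, so $u' \in \textup{BLF}(\dot V)$. An analogous argument, maximizing $p_x + p_y$ over $\{p \in \dot V : p_x \geq \dot v_x,\ p_y \geq \dot v_y\}$, yields $v' \in \textup{TRF}(\dot V)$ with $\dot v' \succeq \dot v$.

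Given $u'$ and $v'$, the hypothesis $\diam_{\dot G}(\textup{BLF}(\dot V), \textup{TRF}(\dot V)) \leq 2$ yields a square $w \in V$ (a common neighbor of $u', v'$, or one of them if the distance is at most $1$) with $w^2 \supseteq \{\dot u', \dot v'\}$. Since $\dot u' \preceq \dot u \preceq \dot v \preceq \dot v'$ componentwise and $w^2$ is axis-aligned, $w^2$ contains the entire bounding box of $\{\dot u', \dot v'\}$, and therefore $\dot u$ and $\dot v$ as well. The one subtle point I would be careful about is the coupling between the diagonal direction of $(\dot u, \dot v)$ and the choice of front pair: using the wrong pair produces extremal $u', v'$ that do \emph{not} sit at opposite corners of an enveloping bounding box. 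Once the case split is set up correctly, the extremality and bounding-box arguments are routine.
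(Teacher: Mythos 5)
Your proof is correct and follows essentially the same route as the paper's: the forward direction is immediate, and the backward direction picks dominating points $u'\in \textup{BLF}(\dot V)$, $v'\in \textup{TRF}(\dot V)$ (or the $\textup{TLF}/\textup{BRF}$ pair, depending on the diagonal) and uses that an axis-parallel square covering $\dot u'$ and $\dot v'$ covers their bounding box and hence $\dot u$ and $\dot v$. The only difference is that you explicitly justify the existence of the dominating front points via an extremal argument (minimizing/maximizing $p_x+p_y$ over a quadrant), a detail the paper leaves implicit with ``select''.
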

\begin{proof}
	If $G$ has diameter at most two, then clearly any pair of subsets of $\dot
	V$ have diameter at most $2$ in $\dot G$. For the other
	direction, consider any pair $a,b\in \dot V$, and assume
	that $a_x\leq b_x$ and $a_y\leq b_y$. We prove that $\dist_{\dot G }(a,b)\leq 2$.

	Select $ a'\in \textup{BLF}(\dot V)$ such that $ a'_x\leq  a_x$ and $ a'_y\leq  a_y$, see
	Figure~\ref{fig:frontsdominate}. Similarly, select $ b'\in \textup{TRF}(\dot V)$
	such that $ b_x\leq  b'_x$ and $ b_y\leq  b'_y$. Then we can observe that
	the minimum bounding box of $\{ a',  b'\}$ covers the minimum bounding box
	of $\{ a,  b\}$. Since $\dist_{\dot G}(a',b')\leq \diam_{\dot G}(\textup{BLF}(\dot
	{V}),\textup{TRF}(\dot{V}))\leq 2$, there exists a square $w\in V$ such that $w^2$
	covers $\{ a',  b'\}$. Consequently, $w^2$ also covers $\{ a,  b\}$, and
	thus $\dist_{\dot G}(a,b)\leq 2$.

	Finally, the case $a_x>b_x$ and $a_y>b_y$ is symmetric, and the cases
	$a_x\!>\!b_x, a_y\!\leq\!b_y$ and $a_x\!\leq\! b_x, a_y\!>\! b_y$ are analogous with
	$\textup{TLF}$ and $\textup{BRF}$ instead of $\textup{TRF}$ and $\textup{BLF}$.
\end{proof}
	
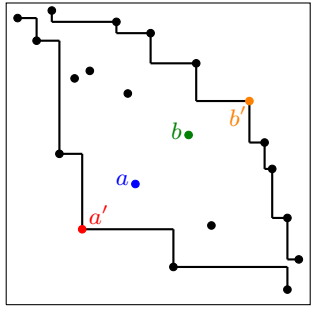
\begin{figure}[t] 
	\centering		
	\begin{tikzpicture}		
		\draw (8,0) -- (12,0) -- (12,4) -- (8,4) -- (8,0);
		
		\draw [thick](8.6, 3.9) -- (8.6, 3.75);
		\draw [thick](9.45, 3.75) -- (8.6, 3.75);
		\draw [thick](9.45, 3.75) -- (9.45, 3.6);
		\draw [thick](9.45, 3.6) -- (9.9, 3.6);
		\draw [thick](9.9, 3.6) -- (9.9, 3.2);
		\draw [thick](9.9, 3.2) -- (10.5, 3.2);
		\draw [thick](10.5, 3.2) -- (10.5, 2.7);
		\draw [thick](10.5, 2.7) -- (11.2, 2.7);
		\draw [thick](11.2, 2.7) -- (11.2, 2.15);
		\draw [thick](11.2, 2.15) -- (11.4, 2.15);
		\draw [ thick](11.4, 1.8) -- (11.4, 2.15);
		\draw [thick](11.4, 1.8) -- (11.5, 1.8);
		\draw [thick](11.5, 1.8) -- (11.5, 1.15);
		\draw [thick](11.5, 1.15) -- (11.7, 1.15);
		\draw [thick](11.7, .6) -- (11.7, 1.15);
		\draw [thick](11.7, .6) -- (11.85, .6);

		\node[draw,circle,inner sep=1pt,fill] at (8.6,3.9) {};
		\node[draw,circle,inner sep=1pt,fill] at (9.9,3.6) {};
		\node[draw,circle,inner sep=1pt,fill] at (10.5,3.2) {};
		\node[color=orange][draw,circle,inner sep=1pt,fill] at (11.2,2.7) {};
		\small\draw[color=orange] node at (11.05, 2.5){$ b'$};
		\node[draw,circle,inner sep=1pt,fill] at (11.5,1.8) {};
		\node[draw,circle,inner sep=1pt,fill] at (11.85,.6) {};
		
		\node[draw,circle,inner sep=1pt,fill] at (8.9,3) {};
		\node[draw,circle,inner sep=1pt,fill] at (9.1,3.1) {};
		
		\node[color=blue][draw,circle,inner sep=1pt,fill] at (9.7,1.6) {};
		\small\draw [color=blue]node at (9.53,1.65) {$ a$};
		\node[draw,circle,inner sep=1pt,fill] at (9.6,2.8) {};
		\node[draw,circle,inner sep=1pt,fill] at (9.45,3.75) {};
		\node[color=green!50!black!100!][draw,circle,inner sep=1pt,fill] at (10.4,2.25) {};
		\small\draw[color=green!50!black!100!] node at (10.24, 2.3){$ b$};
		\node[draw,circle,inner sep=1pt,fill] at (11.4,2.15) {};
		\node[draw,circle,inner sep=1pt,fill] at (11.7,1.15) {};
		\node[draw,circle,inner sep=1pt,fill] at (10.7,1.05) {};
		
		\draw[thick] (8.15, 3.8) -- (8.4, 3.8);
		\draw[thick] (8.4, 3.5) -- (8.4, 3.8);
		\draw[thick] (8.4, 3.5) -- (8.7, 3.5);
		\draw[thick] (8.7, 3.5) -- (8.7, 2);
		\draw[thick] (8.7, 2) -- (9, 2);
		\draw[thick] (9, 2) -- (9, 1);
		\draw[thick] (9, 1) -- (10.2, 1);
		\draw[thick] (10.2, 1) -- (10.2, .5);
		\draw[thick] (10.2, .5) -- (11.7, .5);
		\draw[thick] (11.7, .5) -- (11.7, .2);
		
		\node[draw,circle,inner sep=1pt,fill] at (11.7,.2) {};
		\node[draw,circle,inner sep=1pt,fill] at (8.15,3.8) {};
		\node[draw,circle,inner sep=1pt,fill] at (8.4,3.5) {};
		\node[draw,circle,inner sep=1pt,fill] at (8.7,2) {};
		\node[red][draw,circle,inner sep=1pt,fill] at (9,1) {};
		\small\draw [color=red]node at (9.22,1.2) {$ a'$};
		\node[draw,circle,inner sep=1pt,fill] at (10.2,.5) {};
	\end{tikzpicture}
	\caption{Any square covering $a'$ and $b'$ also covers $a$ and $b$.}\label{fig:frontsdominate}
\end{figure}

Using Lemma~\ref{lem:checkfronts}, we are able to prove Theorem~\ref
{thm:alg}. 

\begin{proof}[Proof of Theorem~\ref{thm:alg}.]
	We start our algorithm by computing $\textup{TLF}(\dot V),\textup{TRF}(\dot V),\textup{BLF}
	(\dot V)$, and $\textup{BRF}(\dot V)$ in $\Oh(n\log n)$ time~\cite{KungLP75}. Let $\dot{P}=\textup{BLF}(\dot V)$ and
	$\dot{Q}=\textup{TRF}(\dot V)$. By Lemma~\ref{lem:checkfronts}, it is sufficient to
	show that in $\Oh(n\log n)$ time we can decide whether $\diam_{\dot G}
	(\dot{P},\dot{Q})\leq 2$; using the same algorithm for $\textup{BRF}(\dot V)$ and $\textup{TLF}
	(\dot V)$ will then get the desired running time.

	In order to check whether $N[\dot{p}]\cap N
	[\dot{q}] \ne \emptyset$ for all $(\dot{p},\dot{q}) \in \dot
	{P} \times \dot{Q}$, we do the following:
		Consider $\dot{P} = \{\dot{p}_1,\dots,\dot{p}_{|\dot{P}|}\}$ and $\dot{Q} = \{\dot{q}_1,\dots,\dot{q}_
	{|\dot{Q}|}\}$ in $x$-order. Also, let $\grid = [|\dot{P}|]\times[|\dot{Q}|]$ be a
	grid where $\dot{p}_i$ corresponds to the $i$-th row and $\dot{q}_j$
	corresponds to the $j$-th column.
	
	For each square $v \in V$, recall that $v^2$ denotes the square with the
	same center but twice the side length. For each $v\in V$, define
	$I_v \subseteq \{1, 2, \dots, |\dot{P}|\}$ such that $i \in I_v$ iff
	$v^2$ contains $\dot{p}_i$. Similarly, $J_v \subseteq
	\{1, 2, \dots, |\dot{Q}|\}$ such that $j \in J_v$ iff $v^2$
	contains $\dot{q}_j$. Since $v^2$ is an axis-parallel square, it covers
	intervals from both $\dot{P}$ and $\dot{Q}$, thus $I_v$ and $J_v$ consist of
	consecutive integers. Therefore, we can think of the sets $I_v \times
	J_v$ as rectangles in $\grid$.
	
	\begin{claim}\label{cl:union}
		We have $N[\dot{p}] \cap N[\dot{q}] \ne \emptyset$ for all $(\dot{p},\dot{q}) \in \dot{P}
		\times \dot{Q}$ if and only if the union of $I_v \times J_v$ over all
		squares $v\in V$ covers $\grid$.
	\end{claim}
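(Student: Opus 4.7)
The plan is to unwind the definitions on both sides of the equivalence and check that they describe the same combinatorial condition on $\grid$, using one simple geometric observation about unit squares.

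First, I would record the geometric fact underlying the $w^2$ notation: two unit squares $u,v$ intersect iff $|\dot{u}_x-\dot{v}_x|\leq 1$ and $|\dot{u}_y-\dot{v}_y|\leq 1$, which is precisely $\dot{v}\in u^2$ (and symmetrically $\dot{u}\in v^2$). Consequently, for any $p\in V$, the closed neighborhood $N[\dot{p}]$ is exactly the set of $w\in V$ with $\dot{p}\in w^2$ (the case $w=p$ being trivially covered since $\dot{p}\in p^2$). This is the only nontrivial ingredient in the proof, and it is standard for unit squares.

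Next, I would rewrite the condition pointwise. Fix $\dot{p}_i\in\dot{P}$ and $\dot{q}_j\in\dot{Q}$; then by the previous paragraph,
\[N[\dot{p}_i]\cap N[\dot{q}_j]\neq\emptyset \iff \exists\, w\in V \text{ with } \dot{p}_i\in w^2 \text{ and } \dot{q}_j\in w^2.\]
By the very definitions of $I_w$ and $J_w$, the memberships $\dot{p}_i\in w^2$ and $\dot{q}_j\in w^2$ are equivalent to $i\in I_w$ and $j\in J_w$, respectively. Hence the right-hand side above is equivalent to the existence of $w\in V$ with $(i,j)\in I_w\times J_w$.

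Quantifying over all $(i,j)\in\grid$ then yields the claim: $N[\dot{p}]\cap N[\dot{q}]\neq\emptyset$ for every $(\dot{p},\dot{q})\in\dot{P}\times\dot{Q}$ iff every cell $(i,j)\in\grid$ lies in some rectangle $I_w\times J_w$, which is exactly the statement $\bigcup_{w\in V}(I_w\times J_w)\supseteq\grid$. Since each step is a biconditional, both directions follow simultaneously. The argument is largely definition-chasing, and I foresee no real obstacle beyond being careful with the closed-neighborhood convention, which automatically handles the degenerate case $\dot{p}=\dot{q}$ as well as the situation where the common neighbor $w$ coincides with $p$ or $q$.
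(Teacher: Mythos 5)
Your proof is correct and follows essentially the same route as the paper's: both directions reduce to the observation that $(i,j)\in I_w\times J_w$ holds exactly when $w^2$ contains both $\dot{p}_i$ and $\dot{q}_j$, i.e., when $w$ is a common (closed) neighbor. The only cosmetic difference is that you re-derive the unit-square intersection criterion inside the proof, whereas the paper establishes it earlier in the section before stating the claim.
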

	\begin{claimproof}
		If the union of all rectangles covers the whole grid, then for any
		pair $(\dot{p}_i,\dot{q}_j) \in \dot{P} \times \dot{Q}$ of centers, there is a rectangle
		$I_v\times J_v$ that covers $(i,j)$.
		Therefore, $v^2$ covers both $\dot{p}_i$ and $\dot{q}_j$. Thus, $\dot v$ is a
		shared neighbor of $\dot{p}$ and $\dot{q}$.
		
		If $N[\dot{p}] \cap N[\dot{q}] \ne \emptyset$ for all $ (\dot{p},\dot{q}) \in \dot{P} \times \dot{Q}$,
		then for each pair $(\dot{p}_i,\dot{q}_j)$ there is at least one square $v_
		{ij}$ such that $v^2_{ij}$ contains both $\dot{p}_i$ and $\dot{q}_j$. Hence, $
		(i,j)\in I_{v_{ij}} \times J_{v_{ij}}$ for each $
		(i,j)\in \grid$. As a result, the union of $I_v \times J_v$ over
		all squares $v^2$ covers $\grid$.
	\end{claimproof}
	
	Note that the problem in Claim \ref{cl:union} corresponds to determining
	whether a union of rectangles covers the full grid. This problem can
	be solved in $\Oh(n\log n)$ time with a plane sweep~\cite
	{Bentley,DBLP:journals/jal/LeeuwenW81}. The
	time needed to construct the rectangles in $\grid$ is
	$\Oh(n\log n)$ as there are $O(n)$ rectangles. This concludes the proof of
	Theorem~\ref{thm:alg}. 
\end{proof}

\section{Lower bounds based on the Orthogonal Vectors Hypothesis}\label{sec:lower}
In this section, we prove lower bounds for finding the diameter in various intersection graphs.

For a comparison to similar results on sparse graphs, let us briefly describe the result ruling out a $(3/2-\epsilon)$-approximation in time $\Oh(n^{2-\delta})$, for any $\eps,\delta>0$, due to Roditty and Vassilevska-Williams~\cite{RodittyW13}. While it is originally stated as a reduction from $k$-Dominating Set, we adapt it to give a reduction from OV: Given sets $A,B\subseteq \{0,1\}^d$, introduce \emph{vector nodes} for each $a\in A$ and $b\in B$ as well as \emph{coordinate nodes} for $k\in[d]$. Without loss of generality (see Section~\ref{sec:simplesegments}), one may assume that all vectors $a\in A$ have $(a)_{d-1} = 1$ and all vectors $b\in B$ have $(b)_{d}=1$. We connect each vector node $v\in A\cup B$ to the coordinate node $k\in [d]$ iff $(v)_k = 1$, and make all coordinate nodes a clique by adding all possible edges between coordinate nodes. The important observation is that (1) a pair $a\in A, b\in B$ has distance at most 2 iff there is a $k\in[d]$ such that $(a)_k = (b)_k = 1$, i.e., $a,b$ do \emph{not} form an orthogonal pair, and (2) all other types of node pairs have distance at most 2. Thus, $A,B$ contains an orthogonal pair iff the diameter of the constructed graph is at least 3. Since the reduction produces a sparse graph with $O(n+d)$ nodes and $O(nd)$ edges in time $O(nd)$, any $O(m^{2-\delta})$-time algorithm distinguishing between diameter 2 and 3 would give a $O(n^{2-\delta} \mathrm{poly}(d))$-time OV algorithm, refuting the OV Hypothesis.  

Generally speaking, implementing this reduction using low-dimensional geometric graphs is problematic: we must be able to implement an arbitrary bipartite graph on a vertex set $L\times R$ where $|L|=n$ and $|R|=d$. Instead, in this section we implement two different types of reductions via geometric graphs; the main ideas are as follows:

\subparagraph*{Diameter-3 graphs (Sections~\ref{sec:simplesegments} and Appendix~\ref{sec:unitsegandtriangles}).} Instead of \emph{coordinate nodes}, we introduce \emph{1-entry nodes} $(v)_k$ for all $v\in A \cup B, k\in [d]$ with $(v)_k = 1$. This increases the number of nodes only to $O(nd)$, while allowing us to geometrically implement edges of the form $\{v, (v)_k\}$ for all $v\in A\cup B, k\in [d]$ with $(v)_k = 1$ and $\{(v)_k, (v')_k\}$ for all $v,v'\in A\cup B, k\in [d]$ with $(v)_k=(v')_k=1$. Now, a witness of non-orthogonality of $a,b$ is a 3-path $a - (a)_k - (b)_k - b$. By showing that all other distances are bounded by 3, we obtain hardness for the \textsc{Diameter-3} problem. See Sections~\ref{sec:simplesegments} and~\ref{sec:unitsegandtriangles} for details, including the use of an additional node to make all 1-entry nodes sufficiently close in distance.

\subparagraph*{(Non-sparse) Diameter-\boldmath$\Theta(d)$ graphs (Sections~\ref{sec:unitball} and Appendix~\ref{sec:unitseg}).} 
Instead of \emph{coordinate nodes} or \emph{1-entry nodes}, we introduce \emph{vector-coordinate nodes} $(v)_k$ for all $v\in A\cup B, k\in [d]$, irrespective of whether $(v)_k=1$. As opposed to previously, we do not create a constant diameter instance: The idea is to create an instance where the most distant pairs are of the form $(a)_1, (b)_d$ for $a\in A, b\in B$, and a non-orthogonality witness is a path of the form $(a)_1 \rightsquigarrow \dots \rightsquigarrow (a)_k \rightsquigarrow (b)_k \rightsquigarrow \dots \rightsquigarrow (b)_d$ with $(a)_k=(b)_k=1$. This construction requires us to implement perfect matchings between vector-coordinate gadgets $(a)_k$ for $a\in A$ and $(a')_{k+1}$ for $a'\in A$ if $a = a'$, as well as a gadget for implementing short connections for $(a)_k \rightsquigarrow (b)_k$ that check whether $(a)_k = (b)_k = 1$. Interestingly, this type of reduction generally produces dense graphs with $\Omega(n^2)$ edges, so this approach crucially exploits the expressive power of geometric graphs to give a subquadratic reduction. See Section~\ref{sec:unitball} and~\ref{sec:unitseg} for details, including a description of auxiliary nodes not mentioned here.

Finally, we remark that the reduction for unit hypercubes given in Section~\ref{sec:hyperclique-lb} has the most similar structure to the reduction by Roditty and Vassilevska-Williams~\cite{RodittyW13}, despite starting from a different hypothesis, and has similarities to~\cite[Theorem 14]{AnGIJKPN21}. We crucially exploit properties of the hyperclique problem to implement it using hypercube graphs.

\subsection{The Diameter-3 problem for line segment intersection graphs}\label{sec:simplesegments}
In this section, we are going to present a lower bound on the running time of the algorithm for the \textsc{Diameter-$3$} problem for line segment intersection graphs, such that vertices are line segments with any length, and there is an edge between a pair of line segments if they intersect. This serves as a warm-up for the slightly more complicated reductions below.

\begin{theorem}\label{line-seg}
	For all $\epsilon > 0$, there is no $\Oh(n^{2 - \epsilon})$ time algorithm for the \textsc{Diameter-$3$} problem for line segment intersection graphs, unless the OV Hypothesis fails.
\end{theorem}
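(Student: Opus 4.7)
The plan is to reduce \textsc{OV} to \textsc{Diameter-3} on line segment intersection graphs. Given an \textsc{OV} instance $A,B\subseteq\{0,1\}^d$ with $|A|=|B|=n$, I first pad the vectors so that every $a\in A$ has $(a)_{d-1}=1$ and every $b\in B$ has $(b)_{d}=1$; this preserves orthogonality and gives each side a universal ``shortcut'' coordinate used for bounding within-side distances. The vertices of the graph are the vector nodes $a_i,b_j$, a \emph{1-entry node} $(v)_k$ for each $v\in A\cup B$ and each $k\in[d]$ with $(v)_k=1$, and a single auxiliary \emph{hub} vertex $h$, for a total of $N=\Oh(nd)$ segments.

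Geometrically, I would separate the two sides across the $y$-axis. Realize $a_i$ as a very short segment at $(i,-1)$ and $b_j$ as a very short segment at $(-j,-1)$. For each coordinate $k\in[d]$ designate the \emph{hub point} $H_k=(0,k)$. The 1-entry $(a_i)_k$ is the segment from $(i,-1)$ to $H_k$, and $(b_j)_k$ is the segment from $(-j,-1)$ to $H_k$. The hub vertex $h$ is the vertical segment $\{0\}\times[1,d]$, which passes through every $H_k$. A brief case analysis gives the intended adjacencies: each vector node touches exactly its own 1-entries at the shared $y=-1$ endpoint; two 1-entries of the same coordinate $k$ meet at $H_k$; and $h$ meets every 1-entry at its hub endpoint, but no vector node. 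The decisive geometric fact is that every $A$-side 1-entry lies in $x\geq 0$ while every $B$-side 1-entry lies in $x\leq 0$, so a cross-side pair $(a_i)_k,(b_j)_l$ can only meet at $x=0$, where each contributes the single point $H_k$ resp.\ $H_l$; hence $(a_i)_k\sim(b_j)_l$ if and only if $k=l$.

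The main claim is that $\diam(G)\leq 3$ if and only if $A,B$ contains no orthogonal pair. For the ``if'' direction, a non-orthogonal $a_i,b_j$ admits the 3-path $a_i-(a_i)_k-(b_j)_k-b_j$ at any shared coordinate $k$; any pair $a_i,a_{i'}$ connects via $a_i-(a_i)_{d-1}-(a_{i'})_{d-1}-a_{i'}$ using the padded coordinate (and symmetrically for $B$); every remaining pair is reached within three edges via the hub $h$. For the ``only if'' direction, suppose $a_i\perp b_j$ and consider any candidate 3-path $a_i-X-Y-b_j$. By construction $a_i$'s only neighbors are its own 1-entries (in particular $a_i$ does not touch $h$ nor any $B$-side segment), so $X=(a_i)_k$ for some $k$ with $(a_i)_k=1$; symmetrically $Y=(b_j)_l$. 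The edge $X\sim Y$ then forces $k=l$ by the decisive geometric fact, contradicting $a_i\perp b_j$. Hence $\dist_G(a_i,b_j)\geq 4$ and the diameter exceeds $3$.

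The main technical obstacle is to verify that no accidental intersections occur and that the hub cannot shortcut an orthogonal pair. The $A$/$B$ separation across the $y$-axis prevents cross-side collisions, while the vector segments, being very short and placed near $y=-1$, avoid crossings with foreign 1-entries (which reach $y=-1$ only at their own starting points); and since $h$ lives strictly on the $y$-axis, it is disjoint from every vector node. Finally, since the construction uses $\Oh(nd)$ segments and is computable in $\Oh(nd)$ time, any algorithm solving \textsc{Diameter-3} in $\Oh(N^{2-\epsilon})$ time on $N$ segments would solve \textsc{OV} in $\Oh((nd)^{2-\epsilon})=\Oh(n^{2-\epsilon}\poly(d))$ time, contradicting the OV Hypothesis.
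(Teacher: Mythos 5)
Your construction is essentially the paper's, up to a rotation of the plane: the paper places the $A$- and $B$-vector nodes as zero-length segments above and below a horizontal hub segment $\ell$ with hub points $w_k=(k,0)$, and your version separates the two sides across the $y$-axis with a vertical hub segment and hub points $(0,k)$ --- the logical skeleton (1-entry nodes, a single hub, padded coordinates for intra-side distances, and the fact that cross-side 1-entries meet only at a shared hub point) is identical. The argument is correct; just note that the vector nodes are safest taken as zero-length segments (as the paper does) so that they provably touch no foreign 1-entry.
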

Let  $A = \{a_1, a_2, \dots, a_n\}$ and $B = \{b_1, b_2, \dots, b_n\}$ be two sets of $n$ vectors in $\{0, 1\}^d$. We construct a set of segments such that the diameter of the  corresponding intersection graph is at most $3$ if and only if there is no orthogonal pair $(a, b) \in A \times B$.

Without loss of generality, we assume that for each $a_i \in A$ and $b_j \in B$, $\big((a_i)_{d -1}, (a_i)_{d}\big) = (1, 0)$ and $\big((b_j)_{d - 1}, (b_j)_{d}\big) = (0, 1)$, by adding two coordinates to the ends of the vectors. Note that adding these coordinates does not change whether vectors $a,b$ are orthogonal or not. 

For each vector $a_i \in A$, let $\bar{a}_i$ denote a zero-length line segment from $(i, 1)$ to $(i, 1)$. Analogously, for each vector $b_j \in B$, let $\bar{b}_j$ denote a line segment from $(j, -1)$ to $(j, -1)$. Furthermore, let $\ell$ be a line segment from $(1, 0)$ to $(d, 0)$, and let $\{w_1, w_2, \dots, w_{d}\}$ be $d$ different points on $\ell$ such that for all $k \in [d]$, $w_k$ is located at $(k, 0)$. Moreover, for each $a_i \in A$, if $(a_i)_k = 1$, we define a line segment $e_{i, k}$ from  $\bar{a}_i$ to $w_k$ (i.e., from $(i, 1)$ to $(k, 0)$). Analogously, for each $b_j \in B$, if $(b_j)_{k'} = 1$, we define a line segment $e'_{j, k'}$ from  $\bar{b}_j$ to $w_{k'}$ (i.e., from $(j, -1)$ to $(k', 0)$). Let $\bar{V}$ be the set of constructed line segments, and let $G$ be their intersection graph (see Figure \ref{line-seg-graphic}).

\begin{figure}[t]
	\centering
	\begin{tikzpicture}[scale=0.6,thick, main/.style = {draw,}]
		\draw [color = cyan](5.75,-4) -- (4.25,-2);
		\draw [color= cyan]node at (4.45,-3.25) {$e'_{j,k}$};
		\draw [color = cyan](5.25,0) -- (4.25,-2);
		\draw [color= cyan]node at (4.2,-.75) {$e_{i,k}$};
\node[draw,circle,inner sep=1pt,fill] at (.25,0) {};
		\draw node at (.25,0.4) {$\bar{a}_1$};
		\node[draw,circle,inner sep=1pt,fill] at (1.25,0) {};
		\draw node at (1.25,0.4) {$\bar{a}_2$};
		\node[draw,circle,inner sep=1pt,fill] at (2.25,0) {};
		\draw node at (2.25,0.4) {$\bar{a}_3$};
		\draw[dotted] (3.5,0) -- (3.75,0);
		\node[draw,circle,inner sep=1pt,fill] at (5.25,0) {};
		\draw node at (5.25,0.4) {$\bar{a}_i$};
		\draw[dotted] (6.5,0) -- (6.75,0);
		\node[draw,circle,inner sep=1pt,fill] at (8.25,0) {};
		\draw node at (8.25,0.4) {$\bar{a}_n$};
\draw node at (7.5,-2) {$\ell$};
		\draw (.25,-2) -- (6.5,-2);
		\node[draw,circle,inner sep=1pt] at (.25,-2) {};
		\draw node at (0,-1.6) {$w_1$};
		\node[draw,circle,inner sep=1pt] at (1.25,-2) {};
		\draw node at (1,-1.6) {$w_2$};
		\node[draw,circle,inner sep=1pt] at (2.25,-2) {};
		\draw node at (2,-1.6) {$w_3$};
		\node[draw,circle,inner sep=1pt] at (4.25,-2) {};
		\draw node at (4,-1.6) {$w_k$};
\node[draw,circle,inner sep=1pt] at (5.5,-2) {};
		\draw node at (5.35,-1.6) {$w_{d\!-\!1}$};
		\node[draw,circle,inner sep=1pt] at (6.5,-2) {};
		\draw node at (6.5,-1.6) {$w_{d}$};
\node[draw,circle,inner sep=1pt,fill] at (.25,-4) {};
		\draw node at (.25,-4.5) {$\bar{b}_1$};
		\node[draw,circle,inner sep=1pt,fill] at (1.25,-4) {};
		\draw node at (1.25,-4.5) {$\bar{b}_2$};
		\node[draw,circle,inner sep=1pt,fill] at (2.25,-4) {};
		\draw node at (2.25,-4.5) {$\bar{b}_3$};
		\draw[dotted] (3.75,-4) -- (4,-4);
		\node[draw,circle,inner sep=1pt,fill] at (5.75,-4) {};
		\draw node at (5.75,-4.5) {$\bar{b}_j$};
		\draw[dotted] (7,-4) -- (7.25,-4);
		\node[draw,circle,inner sep=1pt,fill] at (8.25,-4) {};
		\draw node at (8.25,-4.5) {$\bar{b}_n$};

	\end{tikzpicture}
	\caption{Reducing orthogonal vectors to \textsc{Diameter-3} in intersection graphs of line segments.}
	\label{line-seg-graphic}
\end{figure}
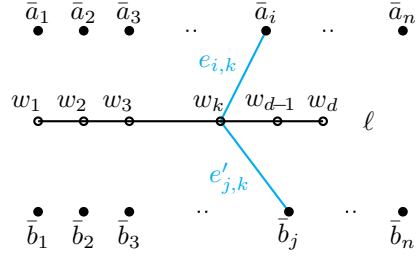

\begin{lemma}\label{lem:diamlessthan4}
	The sets $A$ and $B$ contain an orthogonal pair if and only if $\diam(G)\geq 4$.
\end{lemma}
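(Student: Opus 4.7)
The plan is to prove the two directions separately, with the key geometric observation being that every segment $e_{i,k}$ lies in the closed upper half-plane $\{y\geq 0\}$ while every $e'_{j,k'}$ lies in the closed lower half-plane $\{y\leq 0\}$, each meeting the $x$-axis only at its single endpoint $w_k$ (resp.\ $w_{k'}$).

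For the forward direction, suppose $(a_i,b_j)$ is an orthogonal pair; the goal is to show $\dist_G(\bar{a}_i,\bar{b}_j)\geq 4$. First I would identify the neighborhoods: since $\bar{a}_i$ is the single point $(i,1)$, its neighbors are precisely the segments $e_{i,k}$ with $(a_i)_k=1$, and symmetrically the neighbors of $\bar{b}_j$ are the $e'_{j,k'}$ with $(b_j)_{k'}=1$. This immediately rules out walks of length at most $2$, since no segment is simultaneously an $e_{i,\cdot}$ and an $e'_{j,\cdot}$. Next, the half-plane separation above implies that $e_{i,k}$ and $e'_{j,k'}$ can intersect only on the $x$-axis, and hence only when $k=k'$. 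Consequently, any length-$3$ walk $\bar{a}_i - X_1 - X_2 - \bar{b}_j$ must take the form $\bar{a}_i - e_{i,k} - e'_{j,k} - \bar{b}_j$ for some coordinate $k$ with $(a_i)_k=(b_j)_k=1$, contradicting orthogonality.

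For the converse direction, assuming no orthogonal pair I would show by case analysis that every pair of vertices lies at distance at most~$3$. The critical case $(\bar{a}_i,\bar{b}_j)$ is handled by any common $1$-coordinate $k$, which yields the $3$-path $\bar{a}_i - e_{i,k} - e'_{j,k} - \bar{b}_j$. For pairs within $A$, i.e.\ $(\bar{a}_i,\bar{a}_{i'})$ with $i\neq i'$, the padded coordinate $(a_i)_{d-1}=(a_{i'})_{d-1}=1$ lets me use the $3$-path $\bar{a}_i - e_{i,d-1} - e_{i',d-1} - \bar{a}_{i'}$, whose middle edge exists because both segments pass through $w_{d-1}$; the $(\bar{b}_j,\bar{b}_{j'})$ case is symmetric via coordinate $d$. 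All remaining pairs involve $\ell$ or intermediate $e$/$e'$-segments, and here the fact that $\ell$ is adjacent to every such segment (through the shared point $w_k$) makes the bookkeeping routine: pairs among $\ell$ and the $e$/$e'$-segments have distance at most~$2$, while $\bar{a}_i$ and $\bar{b}_j$ reach $\ell$ in two steps via $e_{i,d-1}$ and $e'_{j,d}$, respectively.

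The main obstacle will be making the forward argument fully airtight---specifically, ruling out any length-$3$ path that does not have the form $\bar{a}_i - e_{i,k} - e'_{j,k} - \bar{b}_j$; this reduces entirely to the clean half-plane separation and the single-point $x$-axis incidence of each $e$- or $e'$-segment. A secondary sanity check is that the two padding coordinates cannot accidentally produce an orthogonal witness: by construction $((a)_{d-1},(a)_d)=(1,0)$ and $((b)_{d-1},(b)_d)=(0,1)$, so neither padded coordinate is ever a common $1$-entry, and orthogonality of $a,b$ before and after padding coincide.
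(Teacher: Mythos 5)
Your proposal is correct and follows essentially the same route as the paper: all pairs outside $\bar{A}\times\bar{B}$ are shown to be within distance $3$ via the hub $\ell$ and the padding coordinates, while a pair $(\bar{a}_i,\bar{b}_j)$ is at distance $3$ or at least $4$ according to non-orthogonality or orthogonality. Your forward direction is in fact slightly more explicit than the paper's (which asserts tersely that no length-$3$ path exists), since you pin down the neighborhoods of $\bar{a}_i$ and $\bar{b}_j$ and use the half-plane separation to show any length-$\leq 3$ path must have the form $\bar{a}_i - e_{i,k} - e'_{j,k} - \bar{b}_j$.
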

Let $\bar{A}$ be the set of line segments corresponding to vectors in $A$. Analogously, let $\bar{B}$ be the set of line segments corresponding to vectors in $B$. To prove the lemma, we show that each pair of vertices is within distance at most $3$, unless it is in $\bar{A} \times \bar{B}$ (see Claim~\ref{cl:nofunnydistances} below and see Appendix~\ref{sec:claim} for its proof). The pairs in $\bar{A} \times \bar{B}$ have distance $4$ or $3$ depending on whether their corresponding vectors in $A\times B$ are orthogonal or not.

\begin{claim}\label{cl:nofunnydistances}
	$\dist(\bar{u}, \bar{v}) \leq 3$ for all $(\bar{u}, \bar{v})\in (\bar{V}\times \bar{V}) \setminus (\bar{A} \times \bar{B} \cup \bar{B} \times \bar{A})$.
\end{claim}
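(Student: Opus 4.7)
The plan is a case analysis on the types of the endpoints $\bar u,\bar v$. Every vertex of $G$ falls in one of five classes: $\bar a_i\in\bar A$, $\bar b_j\in\bar B$, the long segment $\ell$, an $A$-segment $e_{i,k}$ (which exists exactly when $(a_i)_k=1$), or a $B$-segment $e'_{j,k}$ (which exists exactly when $(b_j)_k=1$). The claim asks for a bound of $3$ on all pairs except those in $\bar A\times\bar B$ (or its reverse), so I plan to exhibit, for every other type combination, a path of length at most $3$.

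Three structural observations drive all the constructions. First, the segment $\ell$ passes through every point $w_1,\ldots,w_d$, so $\ell$ is adjacent in $G$ to every existing $e_{i,k}$ and every existing $e'_{j,k}$. Second, for a fixed index $k$ all existing segments $e_{i,k}$ and $e'_{j,k}$ share the common point $w_k$ and are therefore pairwise adjacent; I will call $w_k$ a \emph{hub}. Third, the padding trick that forces $(a_i)_{d-1}=1$ and $(b_j)_d=1$ for all $i,j$ ensures that the hub segments $e_{i,d-1}$ and $e'_{j,d}$ always exist and are adjacent to $\bar a_i$ and $\bar b_j$ respectively.

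With these three ingredients the cases are routine. For two points $\bar a_i,\bar a_{i'}\in\bar A$ the length-$3$ path $\bar a_i - e_{i,d-1} - e_{i',d-1} - \bar a_{i'}$ works: the endpoints are adjacent by (iii) and the middle edge exists by (ii) at the hub $w_{d-1}$. The $\bar B\times\bar B$ case is symmetric via the hub $w_d$. Every vertex other than an $\bar a$- or $\bar b$-point is adjacent to $\ell$ by (i), and each $\bar a_i$ (resp.\ $\bar b_j$) reaches $\ell$ in two steps through $e_{i,d-1}$ (resp.\ $e'_{j,d}$) by combining (iii) and (i). Consequently every remaining pair can be connected by a path of length at most $3$ that uses $\ell$ as intermediate hub; pairs of segments that already share a common $w_k$ are in fact adjacent by (ii), giving length $1$.

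The only mildly subtle case is the intra-$\bar A$ (and intra-$\bar B$) case, for which routing naively through $\ell$ gives the length-$4$ path $\bar a_i - e_{i,d-1} - \ell - e_{i',d-1} - \bar a_{i'}$. This is precisely the reason for introducing the two padding coordinates: they produce the common hubs $w_{d-1}$ and $w_d$ at which the length-$4$ path can be short-circuited to length $3$, while not affecting whether any pair $(a_i,b_j)$ is orthogonal. With this case resolved, the claim follows, and the ``bad'' pairs excluded from the claim, namely $\bar A\times\bar B$, are exactly the ones whose distance will encode orthogonality in Lemma~\ref{lem:diamlessthan4}.
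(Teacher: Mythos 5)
Your proposal is correct and follows essentially the same route as the paper's proof: intra-$\bar A$ and intra-$\bar B$ pairs are handled by length-$3$ paths through the padding segments $e_{i,d-1}$ and $e'_{j,d}$ at the hubs $w_{d-1}$ and $w_d$, and all remaining pairs are routed through $\ell$, using that $\ell$ is adjacent to every $e_{i,k}$ and $e'_{j,k}$ and within distance $2$ of every vertex. The case enumeration is complete and each claimed adjacency is justified by the geometry of the construction.
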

\begin{proof}[Proof of Lemma \ref{lem:diamlessthan4}]
If $a_i$ and $b_j$ are not orthogonal, then there is at least one $k \in [d]$ such that $(a_i)_k = (b_j)_k = 1$. Hence, the path $\bar{a}_i - e_{i,k} - e'_{j,k}-\bar{b}_j$ exists, and it has length 3.
If $a_i$ and $b_j$ are orthogonal, then there is no index $k $ such that $(a_i)_k = (b_j)_k = 1$. Consequently, there is no path of length 3 from $\bar{a}_i$ to $\bar{b}_j$, and $\dist(\bar{a}_i, \bar{b}_j) \geq 4$.
Together with Claim~\ref{cl:nofunnydistances} this proves the lemma.
\end{proof}

\begin{proof}[Proof of Theorem \ref{line-seg}]
	The above reduction creates a set of $N = \Oh(nd)$ segments in $\Oh(nd)$ time. If there is an algorithm solving \textsc{Diameter-3} in $\Oh(N^{2-\delta})$ time in segment intersection graphs, then combining this algorithm with the reduction would solve \textsc{OV} in time $\Oh(nd) + \Oh((nd)^{2 - \delta})=\Oh(n^{2-\delta}\mathrm{poly}(d))$, refuting the OV Hypothesis.
\end{proof}

\subsection{The diameter problem for unit ball graphs}\label{sec:unitball}

\begin{theorem}\label{thm:unit-ball}
	For all $\epsilon > 0$, there is no $\Oh(n^{2 - \epsilon})$ time algorithm
	for solving \textsc{Diameter} in unit ball graphs in $\Reals^3$ under the
	Orthogonal Vectors Hypothesis.
\end{theorem}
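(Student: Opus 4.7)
The plan is to follow the ``non-sparse Diameter-$\Theta(d)$'' blueprint from Section~\ref{sec:lower}: given an OV instance with sets $A,B \subseteq \{0,1\}^d$ of size $n$, construct $N = O(nd)$ unit balls in $\Reals^3$ whose intersection graph has diameter exactly $D$ if there is no orthogonal pair in $A\times B$ and at least $D+1$ otherwise, for some fixed $D = \Theta(d)$. Since $N = O(nd)$, a hypothetical $O(N^{2-\epsilon})$ algorithm for \textsc{Diameter} would solve OV in time $O(n^{2-\epsilon}\poly(d))$, refuting the OV Hypothesis.

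Construction: introduce a vector-coordinate node $(a_i)_k$ for every $i\in[n]$, $k\in[d]$, and analogously $(b_j)_k$ for $j\in[n]$, each realized by one or more unit balls placed in $\Reals^3$ along three orthogonal directions. The $x$-axis encodes the coordinate index $k$ with spacing calibrated so that only consecutive indices may interact; the $y$-axis indexes the vector with spacing strictly larger than a ball diameter, preventing distinct vectors at the same $k$ from becoming adjacent; the $z$-axis separates $A$-balls (at negative $z$) from $B$-balls (at positive $z$). For each fixed vector, I would place rail balls along the $x$-direction so that $(a_i)_k$ and $(a_i)_{k+1}$ intersect, thereby implementing the perfect matching between consecutive coordinate gadgets of the same vector. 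The bit $(a_i)_k$ is then encoded by an optional ``finger'' ball $F^A_{i,k}$ extending from the rail toward the midplane $z=0$ iff $(a_i)_k=1$, and symmetrically for $B$. Finally, place a grid of central auxiliary balls near $z=0$ so that an $A$-finger $F^A_{i,k}$ and a $B$-finger $F^B_{j,k}$ are within a bounded number of hops, independent of $i,j$, whenever both fingers exist, thereby implementing the crossing $(a_i)_k \rightsquigarrow (b_j)_k$ conditional on $(a_i)_k=(b_j)_k=1$.

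Distance analysis: the diameter-realizing candidates are pairs $((a_i)_1,(b_j)_d)$. When $a_i,b_j$ share some coordinate $k$ with both bits~$1$, the canonical path $(a_i)_1 \rightsquigarrow (a_i)_k \rightsquigarrow \text{(crossing)} \rightsquigarrow (b_j)_k \rightsquigarrow (b_j)_d$ has length exactly $D$, built from $k-1$ matching edges on the $A$-rail, a constant-length trip through the central gadget, and $d-k$ matching edges on the $B$-rail. When $a_i,b_j$ are orthogonal no such direct path exists, and any alternative must detour to a different vector on one side via a second crossing; a careful case analysis shows each such detour costs at least one additional edge, giving the desired $D$ vs.\ $D+1$ gap. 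Pairs not of the form $(\text{endpoint of }A\text{-rail},\text{endpoint of }B\text{-rail})$ are shown to always achieve distance $\leq D$ via an easier inspection.

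The main obstacle is the design and calibration of the central crossing gadget. It must (i) shortcut any bit-$1$/bit-$1$ pair at coordinate $k$ in $O(1)$ hops, yet (ii) introduce no unintended shortcuts — in particular, central balls at $(2k,\cdot,0)$ and $(2(k+1),\cdot,0)$ must not be adjacent, lest orthogonal pairs inherit a sub-$D$ route, and central balls at different $y$-coordinates must not merge unrelated vectors. Because a unit ball in $\Reals^3$ has only a constant number of potential neighbors at any fixed spacing, the interactions are inherently short-range, which I expect makes this calibration tight but achievable; verifying it, and establishing the full distance catalog needed for the $D$ vs.\ $D+1$ dichotomy, will be the main technical work of the proof.
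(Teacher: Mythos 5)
Your high-level plan is the same as the paper's (vector-coordinate rails in $x$, $A$/$B$ separation in $z$, optional ``1-bit'' gadgets reaching toward a central layer, and a $D$ vs.\ $D+1$ gap with $D=\Theta(d)$), but the one calibration you commit to explicitly is the one that breaks the construction. You stipulate that the $y$-axis indexes the vector ``with spacing strictly larger than a ball diameter, preventing distinct vectors at the same $k$ from becoming adjacent,'' and then ask for a central gadget that joins $F^A_{i,k}$ to $F^B_{j,k}$ in $O(1)$ hops \emph{independent of $i,j$}. These two requirements are geometrically incompatible: with that spacing, two fingers at the same coordinate $k$ but with $|i-j|=n-1$ have centers at Euclidean distance $\Omega(n)$, and any path of $L$ unit balls spans distance at most $2L$, so every path between them has length $\Omega(n)$, not $O(1)$. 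No grid of auxiliary balls can fix this. So the ``main obstacle'' you defer is not merely delicate --- under your stated spacing it is impossible.

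The paper's resolution is exactly the opposite choice: all $n$ gadgets belonging to a fixed coordinate layer are packed into a $y$-interval of length $1-1/n<1$ (centers at $y=i/n$), so that each layer $C^T_k$, $C^B_k$, $M^T_k$, $M^B_k$ is a \emph{clique}; separation between different coordinates and between the $A$-side and $B$-side is achieved purely via the $x$- and $z$-coordinates (adjacent layers at distance exactly $1$, non-adjacent ones farther). Because $M^T_k$ and $M^B_k$ are cliques, a \emph{single} extra ball $q_k$ adjacent to both completes the crossing $(a_i)_k \rightsquigarrow (a_j)_k$ in two hops for every pair $i,j$ with both bits equal to $1$, and the clique structure does not create shortcuts between distinct vectors' rails because leaving a rail still requires passing through some $M$-layer (which exists only where the bit is $1$) and through $q_k$. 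With that change, the distance analysis you sketch goes through essentially as in the paper: non-orthogonality at coordinate $k$ yields a path of length $(2k-1)+2+2+(2d-2k+1)=2d+4$ between the extreme rail endpoints, and orthogonality forces every crossing to fail, pushing the distance to at least $2d+5$. One further small remark: the paper reduces from single-set OV over $A$ alone and adds the all-ones vector to keep every $M$-layer nonempty; your bichromatic variant is equally fine, but you would need the analogous normalization.
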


Let $A=\{a_1, a_2, \dots, a_n\}$ be a given set of vectors from $\{0,1\}^d$.
First, we construct graph $G(A)$ and show that $G(A)$ has diameter $\geq2d + 5$
if and only if there is an orthogonal pair of vectors in $A$. Next, we show
how $G(A)$ can be realized as an intersection graph of unit balls in
$\Reals^3$. Without loss of generality, assume that the all-one vector is
an element of $A$ (if it is not in $A$, then adding the all-one vector does
not change whether there is an orthogonal~pair.) 

We construct a graph $G(A)$ as follows. Let $C^T_1, \dots, C^T_
{2d}$ and $C^B_1, \dots, C^B_{2d}$ be cliques, such
that for all $k \in [2d]$, $C^T_k =\{v^T_{k,1}, \dots, v^T_{k, n}\}$, $C^B_k
= \{v^B_{k,1},\dots, v^B_{k, n}\}$, and $v^T_{k,i}$ and  $v^B_
{k,i}$ correspond to $a_i$ for all $i \in [n]$, see Figure~\ref
{fig:unitball}. We add a perfect matching between each pair $C^T_k$ and $C^T_
{k + 1}$ for all $k \in[2d - 1]$ such that there is an edge incident to $v^T_
{k,i}$ and $v^T_{k + 1, i}$ for all $i \in[n]$. Analogously, there is a
perfect matching between each pair $C^B_k$ and $C^B_{k + 1}$.

Let $M^T_1, \dots, M^T_{d}$ be cliques such that if $
(a_i)_k = 1$, then there is a vertex $m^T_{k,i}$ in $M^T_{k}$ that is
adjacent to $v_{k,i}^T$. Similarly, let $M^B_1, \dots, M^B_{d}$ be
cliques such that if $(a_i)_k = 1$, then there is a vertex $m^B_
{k,i}$ in $M^B_{k}$ that is adjacent to $v_{k,i}^B$. Notice that because of
the addition of the all ones vector, the cliques $M^T_k$ and $M^B_k$ are all
non-empty.

Finally, let $Q =\{q_1, q_2, \dots, q_d\}$ be a set of vertices such that
$q_k$ has edges to all vertices in $M^T_k$ and $M^B_k$ for all $k \in [d]$.

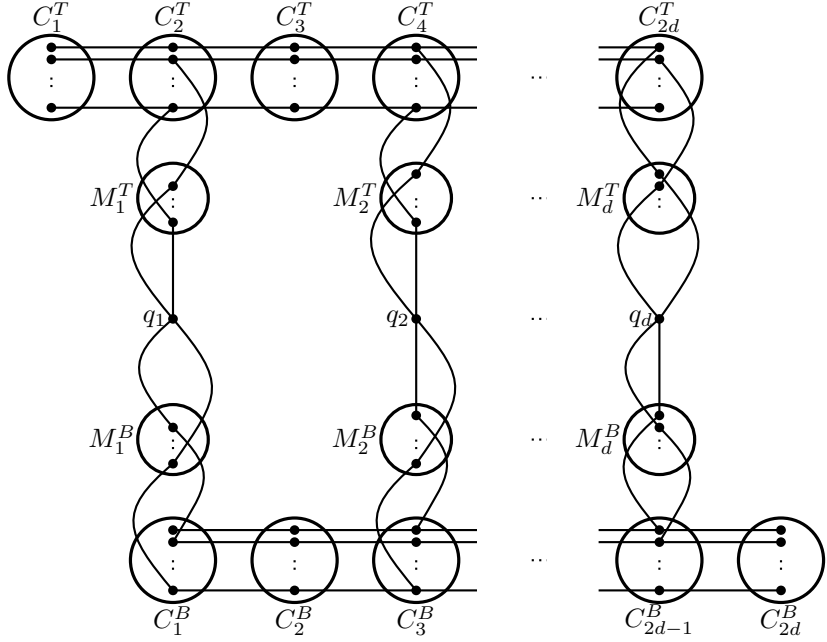
\begin{figure}[t]
	\centering
	\begin{tikzpicture}[thick, main/.style = {draw,circle},scale=0.8]\label{unit-ball-graphic}
		\draw node at (0,1) {$C^T_1$};
		\filldraw[color=black, fill=white, very thick](0,0) circle (.7);
		\node[draw,circle,inner sep=1pt,fill] at (0,.5) {};
		\node[draw,circle,inner sep=1pt,fill] at (0,.3) {};
		\draw[dotted] (0,0) -- (0,-.22);
		\node[draw,circle,inner sep=1pt,fill] at (0,-.5) {};
\draw node at (2,1) {$C^T_2$};
		\filldraw[color=black, fill=white, very thick](2,0) circle (.7);
		\node[draw,circle,inner sep=1pt,fill] at (2,.5) {};
		\node[draw,circle,inner sep=1pt,fill] at (2,.3) {};
		\draw[dotted] (2,0) -- (2,-.22);
		\node[draw,circle,inner sep=1pt,fill] at (2,-.5) {};
\draw node at (4,1) {$C^T_3$};
		\filldraw[color=black, fill=white, very thick](4,0) circle (.7);
		\node[draw,circle,inner sep=1pt,fill] at (4,.5) {};
		\node[draw,circle,inner sep=1pt,fill] at (4,.3) {};
		\draw[dotted] (4,0) -- (4,-.22);
		\node[draw,circle,inner sep=1pt,fill] at (4,-.5) {};
\draw node at (6,1) {$C^T_4$};
		\filldraw[color=black, fill=white, very thick](6,0) circle (.7);
		\node[draw,circle,inner sep=1pt,fill] at (6,.5) {};
		\node[draw,circle,inner sep=1pt,fill] at (6,.3) {};
		\draw[dotted] (6,0) -- (6,-.22);
		\node[draw,circle,inner sep=1pt,fill] at (6,-.5) {};
\draw[dotted] (7.9,0) -- (8.15,0);
\draw node at (10,1) {$C^T_{2d}$};
		\filldraw[color=black, fill=white, very thick](10,0) circle (.7);
		\node[draw,circle,inner sep=1pt,fill] at (10,.5) {};
		\node[draw,circle,inner sep=1pt,fill] at (10,.3) {};
		\draw[dotted] (10,0) -- (10,-.22);
		\node[draw,circle,inner sep=1pt,fill] at (10,-.5) {};
\draw node at (1,-2) {$M^T_{1}$};
		\filldraw[color=black, fill=white, very thick](2,-2) circle (.58);
\node[draw,circle,inner sep=1pt,fill] at (2,-1.8) {};
		\draw[dotted] (2,-2) -- (2,-2.22);
		\node[draw,circle,inner sep=1pt,fill] at (2,-2.4) {};
\draw node at (5,-2) {$M^T_{2}$};
		\filldraw[color=black, fill=white, very thick](6,-2) circle (.58);
		\node[draw,circle,inner sep=1pt,fill] at (6,-1.6) {};
\draw[dotted] (6,-2) -- (6,-2.22);
		\node[draw,circle,inner sep=1pt,fill] at (6,-2.4) {};
\draw[dotted] (7.9,-2) -- (8.15,-2);
\draw node at (9,-2) {$M^T_{d}$};
		\filldraw[color=black, fill=white, very thick](10,-2) circle (.58);
		\node[draw,circle,inner sep=1pt,fill] at (10,-1.6) {};
		\node[draw,circle,inner sep=1pt,fill] at (10,-1.8) {};
		\draw[dotted] (10,-2) -- (10,-2.22);
\draw node at (1.7,-4) {$q_1$};
		\node[draw,circle,inner sep=1pt,fill] at (2,-4) {};
		\draw node at (5.7,-4) {$q_2$};
		\node[draw,circle,inner sep=1pt,fill] at (6,-4) {};
		\draw[dotted] (7.9,-4) -- (8.15,-4);
		\draw node at (9.7,-4) {$q_d$};
		\node[draw,circle,inner sep=1pt,fill] at (10,-4) {};
\draw node at (1,-6) {$M^B_{1}$};
		\filldraw[color=black, fill=white, very thick](2,-6) circle (.58);
\node[draw,circle,inner sep=1pt,fill] at (2,-5.8) {};
		\draw[dotted] (2,-6) -- (2,-6.22);
		\node[draw,circle,inner sep=1pt,fill] at (2,-6.4) {};
\draw node at (5,-6) {$M^B_{2}$};
		\filldraw[color=black, fill=white, very thick](6,-6) circle (.58);
		\node[draw,circle,inner sep=1pt,fill] at (6,-5.6) {};
\draw[dotted] (6,-6) -- (6,-6.22);
		\node[draw,circle,inner sep=1pt,fill] at (6,-6.4) {};
\draw[dotted] (7.9,-6) -- (8.15,-6);
\draw node at (9,-6) {$M^B_{d}$};
		\filldraw[color=black, fill=white, very thick](10,-6) circle (.58);
		\node[draw,circle,inner sep=1pt,fill] at (10,-5.6) {};
		\node[draw,circle,inner sep=1pt,fill] at (10,-5.8) {};
		\draw[dotted] (10,-6) -- (10,-6.22);
\draw node at (2,-9) {$C^B_1$};
\filldraw[color=black, fill=white, very thick](2,-8) circle (.7);
		\node[draw,circle,inner sep=1pt,fill] at (2,-7.5) {};
		\node[draw,circle,inner sep=1pt,fill] at (2,-7.7) {};
		\draw[dotted] (2,-8) -- (2,-8.22);
		\node[draw,circle,inner sep=1pt,fill] at (2,-8.5) {};
\draw node at (4,-9) {$C^B_2$};
		\filldraw[color=black, fill=white, very thick](4,-8) circle (.7);
		\node[draw,circle,inner sep=1pt,fill] at (4,-7.5) {};
		\node[draw,circle,inner sep=1pt,fill] at (4,-7.7) {};
		\draw[dotted] (4,-8) -- (4,-8.22);
		\node[draw,circle,inner sep=1pt,fill] at (4,-8.5) {};
\draw node at (6,-9) {$C^B_3$};
		\filldraw[color=black, fill=white, very thick](6,-8) circle (.7);
		\node[draw,circle,inner sep=1pt,fill] at (6,-7.5) {};
		\node[draw,circle,inner sep=1pt,fill] at (6,-7.7) {};
		\draw[dotted] (6,-8) -- (6,-8.22);
		\node[draw,circle,inner sep=1pt,fill] at (6,-8.5) {};
\draw[dotted] (7.9,-8) -- (8.15,-8);
		\draw node at (10,-9) {$C^B_{2d - 1}$};
		\filldraw[color=black, fill=white, very thick](10,-8) circle (.7);
		\node[draw,circle,inner sep=1pt,fill] at (10,-7.5) {};
		\node[draw,circle,inner sep=1pt,fill] at (10,-7.7) {};
		\draw[dotted] (10,-8) -- (10,-8.22);
		\node[draw,circle,inner sep=1pt,fill] at (10,-8.5) {};
\draw node at (12,-9) {$C^B_{2d}$};
		\filldraw[color=black, fill=white, very thick](12,-8) circle (.7);
		\node[draw,circle,inner sep=1pt,fill] at (12,-7.5) {};
		\node[draw,circle,inner sep=1pt,fill] at (12,-7.7) {};
		\draw[dotted] (12,-8) -- (12,-8.22);
		\node[draw,circle,inner sep=1pt,fill] at (12,-8.5) {};
\draw (0,.5) -- (2,.5);
		\draw (0,.3) -- (2,.3);
		\draw (0,-.5) -- (2,-.5);
\draw (4,.5) -- (2,.5);
		\draw (4,.3) -- (2,.3);
		\draw (4,-.5) -- (2,-.5);
\draw (4,.5) -- (6,.5);
		\draw (4,.3) -- (6,.3);
		\draw (4,-.5) -- (6,-.5);
\draw (6,.5) -- (7,.5);
		\draw (6,.3) -- (7,.3);
		\draw (6,-.5) -- (7,-.5);
\draw (10,.5) -- (9,.5);
		\draw (10,.3) -- (9,.3);
		\draw (10,-.5) -- (9,-.5);
\draw (4,-7.5) -- (2,-7.5);
		\draw (4,-7.7) -- (2,-7.7);
		\draw (4,-8.5) -- (2,-8.5);
\draw (4,-7.5) -- (6,-7.5);
		\draw (4,-7.7) -- (6,-7.7);
		\draw (4,-8.5) -- (6,-8.5);
\draw (6,-7.5) -- (7,-7.5);
		\draw (6,-7.7) -- (7,-7.7);
		\draw (6,-8.5) -- (7,-8.5);
\draw (9,-7.5) -- (10,-7.5);
		\draw (9,-7.7) -- (10,-7.7);
		\draw (9,-8.5) -- (10,-8.5);
\draw (10,-7.5) -- (12,-7.5);
		\draw (10,-7.7) -- (12,-7.7);
		\draw (10,-8.5) -- (12,-8.5);

\draw (2,.3)  to [out=-50,in=55,looseness=1.5] (2, -1.8);
		\draw (2,-1.8)  to [out=220,in=130,looseness=1.5] (2, -4);
		\draw (2,-.5)  to [out=220,in=130,looseness=1.5] (2, -2.4);
		\draw (2,-2.4)  -- (2, -4);
		
		\draw (6,.5)  to [out=-50,in=55,looseness=1.5] (6, -1.6);
		\draw (6,-1.6)  to [out=220,in=130,looseness=1.5] (6, -4);
		\draw (6,-.5)  to [out=220,in=130,looseness=1.5] (6, -2.4);
		\draw (6,-2.4)  -- (6, -4);
		
		\draw (10,.5)  to [out=220,in=130,looseness=1.5] (10, -1.6);
		\draw (10,.3)  to [out=-50,in=55,looseness=1.5] (10, -1.8);
		\draw (10,-1.6)  to [out=-50,in=55,looseness=1.5] (10, -4);
		\draw (10,-1.8)  to [out=220,in=130,looseness=1.5] (10, -4);
\draw (2,-5.8)  to [out=130,in=220,looseness=1.5] (2, -4);
		\draw (2,-6.4)  to [out=220,in=130,looseness=1.5] (2, -8.5);
		\draw (2,-5.8)  to [out=-45,in=60,looseness=1.5] (2, -7.7);
		\draw (2,-6.4)  to [out=55,in=-50,looseness=1.5] (2, -4);
		
		\draw (6,-5.6)  -- (6, -4);
\draw (6,-5.6)  to [out=-45,in=60,looseness=1.5] (6, -7.5);
		\draw (6,-6.4)  to [out=220,in=130,looseness=1.5] (6, -8.5);
		\draw (6,-6.4)  to [out=55,in=-50,looseness=1.5] (6, -4);
		
		\draw (10,-5.6)  -- (10, -4);
		\draw (10,-5.8)  to [out=130,in=220,looseness=1.5] (10, -4);
		\draw (10,-5.6)  to [out=220,in=130,looseness=1.5] (10, -7.5);
		\draw (10,-5.8)  to [out=-45,in=60,looseness=1.5] (10, -7.7);
\end{tikzpicture}
	\caption{Schematic picture of the graph $G(A)$.}
	\label{fig:unitball}
\end{figure}
\begin{lemma}\label{lem:diam_2d+4_unit-ball}
	The graph $G(A)$ has diameter at most $2d + 4$ iff $A$ has no orthogonal pair.
\end{lemma}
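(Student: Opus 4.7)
The plan is to prove both directions by analyzing shortest-path distances, focusing on the extremal pair $(v^T_{1,i}, v^B_{2d,j})$ that is expected to realize the diameter.

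For the direction ``no orthogonal pair $\Rightarrow \diam(G(A)) \leq 2d+4$'', I would argue via case analysis. Pairs entirely within the top chain $C^T_1, \dots, C^T_{2d}$ lie at distance at most $2d$ by following the matchings between consecutive cliques and using at most one intra-clique edge to switch the vector index; the bottom chain is symmetric. Pairs involving $m$- or $q$-vertices are close to the corresponding chain vertex, adding only a small constant. The critical case is a top vertex associated with $a_i$ and a bottom vertex associated with $a_j$: any such path must traverse some $q_\ell$, and the non-orthogonality of $(a_i, a_j)$ provides an index $\ell \in [d]$ with $(a_i)_\ell = (a_j)_\ell = 1$, yielding a path through $m^T_{\ell, i} \to q_\ell \to m^B_{\ell, j}$. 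Summing the matching edges along both chains with the four transition edges around $q_\ell$ gives length at most $2d+4$.

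For the direction ``orthogonal pair $\Rightarrow \diam(G(A)) \geq 2d+5$'', let $(a_i, a_j)$ be orthogonal. Since $Q$ separates the top from the bottom half, any path from $v^T_{1,i}$ to $v^B_{2d,j}$ visits some $q_\ell$. For fixed $\ell$, I would decompose such a path into a top segment ending at $q_\ell$ and a bottom segment starting there, and then lower-bound each. The top segment must pass through some $m^T_{\ell, i^*}$ with $(a_{i^*})_\ell = 1$; since $m^T_{\ell, i^*}$ is adjacent only to its corresponding vertex in the top chain, reaching it from $v^T_{1,i}$ requires the matching edges plus, if $i^* \neq i$, one additional intra-clique ``switch'' edge, and this switch is forced exactly when $(a_i)_\ell = 0$. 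Symmetric reasoning applies to the bottom segment, with a switch forced when $(a_j)_\ell = 0$. Orthogonality guarantees $(a_i)_\ell = 0$ or $(a_j)_\ell = 0$ for every $\ell$, so at least one of the segments always pays the additional switch cost; combined with the offset between the attachment positions of $M^T_k$ and $M^B_k$ to their respective chains, every such path has length at least $2d+5$, and taking the minimum over $\ell$ yields the desired bound on $\dist(v^T_{1,i}, v^B_{2d,j})$.

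The principal obstacle is the careful counting in the lower-bound direction: one must verify that no alternative route can absorb the forced switch edge by short-cutting through clique or matching edges elsewhere, and precisely track how the offset between the top and bottom attachment positions contributes the extra edge needed to reach $2d+5$ rather than $2d+4$. Once this bookkeeping is completed, combining the per-$\ell$ lower bounds completes the proof.
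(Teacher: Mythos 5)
Your proposal takes essentially the same route as the paper: the upper bound is the identical explicit path $v^T_{1,i}\rightsquigarrow v^T_{2k,i}- m^T_{k,i}-q_k-m^B_{k,j}-v^B_{2k-1,j}\rightsquigarrow v^B_{2d,j}$ through a common $1$-coordinate, of length $(2k+1)+(2d-2k+3)=2d+4$, and the lower bound is the same forced-detour count at whichever $q_\ell$ a path crosses. If anything, the paper's lower-bound direction is terser than yours (it simply asserts $\dist(v^T_{2k,i},v^B_{2k-1,j})\geq 5$ for all $k$), so the bookkeeping you flag as the principal obstacle is left at the same level of detail in the published proof.
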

\begin{proof}
Assume that there is an orthogonal pair $
	(a_i, a_j) \in A$ such that $i\not = j$. Hence, $\sum^{n}_{k = 1}(a_i)_k
	(a_j)_k = 0$, which means that there is no $k \in [n]$ such that
	$(a_i)_k = (a_j)_k = 1$. Consequently, for all $k \in [d]$, the distance
	from $v^T_{2k, i} \in C_{2k}^T$ to $v^B_{2k - 1, j} \in C^B_{2k - 1}$
	is at least 5. Therefore, $2d+4 < \dist(v^T_{1,i}, v^B_{2d, j}) \le
	\diam(G(A))$.
	
	Now suppose that $A$ has no orthogonal pair.  We want to prove that $\diam
	(G(A)) \leq 2d + 4$. Since $A$ has no orthogonal pair, for each pair $(a_i, a_j)$ there is
	at least one $k \in [n]$ such that  $(a_i)_k = (a_j)_k = 1$.
	Therefore, there are cliques $M^T_k$ and $M^B_k$ that
	have the vertices $m^T_{k,i}$ and $m^B_{k,j}$ respectively. Since all
	vertices in $M^T_k$ and $M^B_k$ have an edge to $q_k$, we can reach
	$q_k$ from $v^T_{1, i}$ by a path of length $2k - 1 + 2$.
	Simultaneously, we can reach $q_k$ from $v^B_{2d, j}$ by a path of
	length  $2d - 2k + 1 + 2$. In total, this gives a path of length $2d +
	4$ between $v^T_{1, i}$ and $v^B_{2d, j}$. Furthermore, it is easy
	check that the distance of any pair of vertices where at least one
	vertex is outside $C^T_1\cup C^B_{2d}$ is at most $2d+4$. As a result,
	$\diam(G(A)) \leq 2d + 4$.
\end{proof}
\begin{lemma}
	$G(A)$ can be realized as an intersection graph of unit balls in $\mathbb{R}^3$.
\end{lemma}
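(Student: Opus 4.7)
The plan is to give an explicit embedding of $G(A)$ into $\mathbb{R}^3$ such that the resulting unit ball intersection graph matches $G(A)$ exactly. I will use a layered arrangement stacked along the $z$-axis: five horizontal ``slabs'' at heights $z = h_{CT}, h_{MT}, 0, -h_{MT}, -h_{CT}$ housing $C^T$, $M^T$, the $q_k$'s, $M^B$, and $C^B$ respectively (with the bottom half mirroring the top). Within each slab, the $x$-coordinate encodes the index $k$ (so different $k$-cliques are separated along $x$), and the $y$-coordinate is used for a tiny perturbation encoding the vector index $i\in[n]$, plus additional small offsets described below.

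Concretely, fix small parameters $0 < \eta \ll Y \ll \epsilon \ll 1$ and set $L$ to be just below $2$ with $L^2 + \epsilon^2 > 4$ (so $L = 2 - \eta$ with $\eta < \epsilon^2/4$). Place $v^T_{k,i}$ at $(kL, i\epsilon, h_{CT})$, and (symmetrically) $v^B_{k,i}$ at $(kL, i\epsilon, -h_{CT})$. Then $C^T_k$ is realized as a clique because its balls have pairwise diameter $\le(n-1)\epsilon < 2$; the matching edges $v^T_{k,i}$--$v^T_{k+1,i}$ are realized because the center-distance is exactly $L < 2$; and non-matched pairs $v^T_{k,i}$, $v^T_{k+1,j}$ with $j\neq i$ lie at distance $\sqrt{L^2 + ((i-j)\epsilon)^2} > 2$ and hence are non-adjacent. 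Pairs with $|k-k'|\ge 2$ are at distance $\ge 2L > 2$ automatically. Choosing $h_{CT}$ large enough keeps the top and bottom $C$-layers mutually non-intersecting.

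The main obstacle is placing $M^T_k$: each $m^T_{k,i}$ must intersect its assigned $v^T_{k,i}$ while avoiding the clustered siblings $v^T_{k,j}$ with $j\neq i$, \emph{and} the whole clique $M^T_k$ must not collide with its neighbor $M^T_{k+1}$ (whose $x$-projection is only $L < 2$ away). For the first issue, place $m^T_{k,i}$ directly above $v^T_{k,i}$ by choosing $h_{CT}-h_{MT}$ just below $2$ with $(h_{CT}-h_{MT})^2 + \epsilon^2 > 4$; the same near-critical argument then makes the slab $\{(x,y,h_{MT})\colon |y - i\epsilon| < \epsilon/2\}$ a private region of $v^T_{k,i}$, so $m^T_{k,i}$ intersects only $v^T_{k,i}$ among the balls of $C^T_k$. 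For the second issue, shift $M^T_k$ cumulatively in $y$ by $kY$ (so $m^T_{k,i}$ sits at $(kL, kY + i\epsilon, h_{MT})$); then neighboring $M^T$-cliques are at distance $\sqrt{L^2 + Y^2} > 2$ as soon as $Y > 2\sqrt{\eta}$, while keeping $dY$ below $\epsilon/2$ ensures every $m^T_{k,i}$ still lies in the private slab of $v^T_{k,i}$. One checks that the constraints $2\sqrt{\eta} < Y < \epsilon/(2d)$ are simultaneously satisfiable by taking, e.g., $\epsilon = 1/\mathrm{poly}(nd)$ and $\eta$, $Y$ polynomially smaller.

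A symmetric construction places $M^B_k$ below $C^B_k$, and each $q_k$ is placed on the $z=0$ plane at $(kL, 0, 0)$ (with a small $k$-dependent $y$-offset to prevent consecutive $q_k$'s from colliding), at a vertical distance just below $2$ from $M^T_k$ and $M^B_k$ so that $q_k$ intersects precisely the balls in $M^T_k\cup M^B_k$. The remainder of the proof is a (routine but tedious) case analysis over every pair of vertex types---$(C^T,C^T)$, $(C^T,M^T)$, $(C^T, q)$, $(C^T, C^B)$, $(M^T, M^T)$, $(M^T,q)$, etc.---verifying that in each case the chosen parameters make the ball-intersection pattern match the edges of $G(A)$. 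The hard part is purely the parameter balancing described above; once the four inequalities $L < 2$, $L^2+\epsilon^2>4$, $(h_{CT}-h_{MT})^2+\epsilon^2>4$, and $L^2+Y^2>4$ are arranged to hold with $dY<\epsilon/2$, all other constraints follow by standard estimates.
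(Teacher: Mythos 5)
Your construction is correct in spirit and has the same layered skeleton as the paper's (one coordinate for the clique index $k$, one for the vector index $i$, one for the layer $C^T/M^T/Q/M^B/C^B$), but you realize the edge/non-edge dichotomy quite differently. The paper uses balls of \emph{diameter} $1$ and places matched vertices of adjacent cliques so that they share the $y$-coordinate exactly and differ by exactly $1$ in a single coordinate (touching balls), while unmatched pairs differ additionally by at least $1/n$ in $y$ and hence have distance $\sqrt{1+1/n^2}>1$. This makes every verification a one-line computation and requires no parameter tuning at all. Your near-critical scheme (distances $L=2-\eta$ versus $\sqrt{L^2+\epsilon^2}>2$) achieves the same discrimination but forces the cascade of constraints on $\eta, Y, \epsilon, h_{CT}-h_{MT}$ that occupies most of your write-up; the paper's ``exact touching'' trick is what lets it avoid all of this. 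Note also that $G(A)$ attaches $M^T_k$ to the clique $C^T_{2k}$ (this is what makes the path lengths in Lemma~\ref{lem:diam_2d+4_unit-ball} telescope), so your $m^T_{k,i}$ should sit above $v^T_{2k,i}$ at $x=2kL$ rather than $kL$; this actually simplifies your life, since consecutive $M^T$-cliques are then $2L>2$ apart in $x$ and need no $y$-shift to be kept disjoint.

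One constraint in your parameter balancing is mis-stated. The condition $dY<\epsilon/2$ only guarantees that $m^T_{k,i}$ avoids the \emph{wrong} balls $v^T_{\cdot,j}$, $j\ne i$; it does not guarantee that $m^T_{k,i}$ still \emph{intersects} its intended partner after the cumulative shift $kY$. For that you need $(kY)^2+(h_{CT}-h_{MT})^2\le 4$, i.e.\ $dY\lesssim 2\sqrt{\eta'}$ where $\eta'=2-(h_{CT}-h_{MT})$ is the slack in the vertical gap --- a separate and much tighter requirement than $dY<\epsilon/2$ (and similarly for $q_k$, whose $y$-distance to the extremes of $M^T_k$ is about $n\epsilon$, not ``small''). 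Since $\eta'$ is a free parameter bounded only by roughly $\epsilon^2/16$ from the non-adjacency side, the system is still satisfiable (take $\eta'=\Theta(\epsilon^2)$ and $Y=\Theta(\epsilon/d)$ with suitable constants), so this is an omission rather than a fatal flaw --- but as written, the four inequalities you list do not imply all the adjacencies you need.
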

\begin{proof}
	For converting $G(A)$ into an intersection graph of unit balls, we should consider each vertex in $G(A)$ as the center of a unit diameter ball, and for those vertices that are adjacent, their corresponding unit balls should intersect. To this end, we choose the following coordinates for the centers of the unit balls in $\mathbb{R}^3$:
	\begin{itemize}
		\item For all $k \in [2d]$ and $ i \in [n]$, the center point of $v_{k,i}^T \in C^T_k$  is $(k, \frac{i}{n}, 0)$.
		\item For all $k \in [d]$ and $ i \in [n]$, if $m_{k,i}^T \in M^T_i$ exists, then its center point is $(2k, \frac{i}{n}, -1)$.
		\item For all $k \in [d]$, the center point of $q_{k} \in Q$  is $(2k, \frac{1}{2}, -1.6)$.
		\item For all $k \in [d]$ and $ i \in [n]$, if $m_{i,j}^B \in M^B_i$ exists, its  center point is $(2k, \frac{i}{n}, -2.2)$.
		\item For all $k \in [2d]$ and $ i \in [n]$, the center point of $v_{k,i}^B \in C^B_k$  is $(k, \frac{i}{n}, -3.2)$.
	\end{itemize}
	The distance between center points that correspond to adjacent vertices should be at most 1. 
	For each two vertices in the same clique in $C^T_k$, $C^B_k$, $M^T_k$, and $M^B_k$ their center points differ only in the $y$-coordinate. Since this difference is at most $1 - 1/n<1$, they form a clique. For each two adjacent vertices in two different cliques, their center points differ either only in the $x$-, or only in the $z$-coordinate, by exactly 1, hence, they intersect. For a vertex in $Q$ and $M^T_k$, if $m^T_{k, i}$ exists, the distance between $m^T_{k, i}$ and $q_k$ is 
	\begin{center}
		$\sqrt{(2k - 2k)^ 2 + (\frac{i}{n} - \frac{1}{2})^2 + (-1 - (- 1.6))^2} = \sqrt{(\frac{i}{n} - \frac{1}{2})^2 + (0.6)^2} \leq  \sqrt{(\frac{1}{2})^2 + (0.6)^2} < 1$
	\end{center}
	The same argument holds for adjacent vertices in $Q$ and $M^B$. One can easily check that the non-adjacent vertices have distance strictly greater than $1$.
\end{proof}
\begin{proof}[Proof of Theorem \ref{thm:unit-ball}]
The construction creates a set of $N = \Oh(nd)$ balls in $\Oh(nd)$ time. If
	there is an algorithm to solve \textsc{Diameter} in $\Oh(N^{2-\delta})$ time in
ball graphs, then we could combine this construction with the algorithm, and
solve the \textsc{Orthogonal Vectors} problem in $\Oh(nd) + \Oh((nd)^
	{2 - \delta})=O(n^{2-\delta}\mathrm{poly}(d))$ time. This contradicts the Orthogonal Vectors Hypothesis, and concludes
the theorem.
\end{proof}

A simple transformation of this construction shows that we can realize $G(A)$ also as an intersection graph of axis-parallel unit cubes.

\begin{corollary}
For all $\epsilon > 0$, there is no $\Oh(n^{2 - \epsilon})$ time algorithm
	for solving \textsc{Diameter} in intersection graphs of axis-parallel unit cubes in $\Reals^3$ under the
	Orthogonal Vectors Hypothesis.	
\end{corollary}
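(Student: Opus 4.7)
The plan is to reuse the graph $G(A)$ from the proof of Theorem~\ref{thm:unit-ball} and re-realize it as an intersection graph of axis-parallel unit cubes in $\Reals^3$. Since the graph is unchanged, Lemma~\ref{lem:diam_2d+4_unit-ball} will still characterize its diameter in terms of the existence of an orthogonal pair in $A$, so the OV-reduction of Theorem~\ref{thm:unit-ball} will transfer verbatim. The only new content is the geometric realization.

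First, I would note that the ball coordinates cannot be reused directly because $\ell_\infty \leq \ell_2$: for instance, the non-matching ladder pair $v_{k,i}^T, v_{k+1,j}^T$ has $\ell_2 > 1$ but $\ell_\infty = 1$ at the ball centers, which would introduce a spurious edge. My fix is to place each clique $C_k^T$ along an antichain in the $(y,z)$-plane, e.g., $(y_i, z_i) = (i/n,\, 1 - i/n)$, and to shift consecutive cliques by $(1,1)$ in $(y,z)$, so that $v_{k,i}^T$ sits at $(k,\, i/n + k - 1,\, k - i/n)$. A matched pair is then at $\ell_\infty$-distance exactly $1$ (intersect), whereas for a non-matched pair $i\neq j$ the shift contributes an extra $\pm (j-i)/n$ to one coordinate, forcing $|y|$-difference $>1$ when $j>i$ and $|z|$-difference $>1$ when $j<i$. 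The $B$-ladder is placed analogously but $z$-offset so that $C^T$ and $C^B$ are at $\ell_\infty$-distance $>1$ throughout, and each $m_{k,i}^T$ (resp.\ $m_{k,i}^B$) is placed at the matched $v_{2k,i}^{T/B}$ translated by roughly $(\tfrac{1}{2},-1,-1)$, so the same antichain argument keeps it adjacent only to its intended neighbours.

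The main obstacle will be the bridge vertex $q_k$. Because $C^T$ and $C^B$ must be $z$-separated by more than one unit, the clusters $M_k^T$ and $M_k^B$ end up at $z$-distance larger than $2$, so no single $q_k$-cube can be within $\ell_\infty$-distance $1$ of both. I would resolve this by replacing each $q_k$ with a short chain of a constant number of cubes stacked along the $z$-axis at $x = 2k + \tfrac{1}{2}$, so that one end meets $M_k^T$, the other end meets $M_k^B$, and consecutive cubes in the chain are at $\ell_\infty$-distance $1$. This additively increases the target diameter from $2d+4$ by the chain length, but the orthogonality-to-diameter characterization is preserved, since any short $v_{1,i}^T$-to-$v_{2d,j}^B$ path must still traverse some $q_k$-chain, which is usable exactly when some $k$ has $(a_i)_k = (a_j)_k = 1$. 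Everything else reduces to routine $\ell_\infty$ verifications, and the final reduction produces $\Oh(nd)$ cubes in $\Oh(nd)$ time, yielding the claimed $\Oh(n^{2-\varepsilon})$ lower bound under the OV Hypothesis.
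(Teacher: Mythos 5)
Your proposal is correct in approach but takes a genuinely different route from the paper. The paper keeps the ball-center point set $P$ from Theorem~\ref{thm:unit-ball} verbatim and applies a single global transformation: rotate by $\pi/4$ about the $y$-axis and scale by $\sqrt 2$, so that every inter-clique pair (which differed by exactly $1$ in the $x$- or $z$-coordinate) becomes a diagonal pair in a plane perpendicular to the $y$-axis whose unit cubes share an edge, while intra-clique pairs still differ essentially only in $y$. This realizes \emph{exactly} the graph $G(A)$, so Lemma~\ref{lem:diam_2d+4_unit-ball} applies unchanged. You instead rebuild the coordinates from scratch for the $\ell_\infty$ metric, using antichain (staircase) placements $(i/n,\,1-i/n)$ within each clique and diagonal $(1,1)$ shifts between consecutive cliques; your observation that the raw ball centers fail under $\ell_\infty$ (spurious edges between non-matched ladder vertices) is exactly the obstruction the paper's rotation is designed to kill, and your shift argument correctly restores the matching structure. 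The two approaches differ most in how they handle $q_k$: the rotation preserves the single bridge vertex and the threshold $2d+4$, whereas your construction forces a constant-length chain of cubes per $q_k$, so you are no longer realizing $G(A)$ itself and cannot invoke Lemma~\ref{lem:diam_2d+4_unit-ball} as a black box --- you must re-derive the characterization with the threshold shifted by the chain length and re-check that no pair involving a chain-internal vertex exceeds the new threshold. You acknowledge the first point; the second, together with the exact placement of the chain endpoints (which sit at $x$-distance only $1/2$ from the neighboring $C$-cliques and need a small perturbation to avoid accidental corner contacts), is deferred to ``routine verifications'' and would need to be carried out to make the argument complete. What your route buys is an explicit, self-contained $\ell_\infty$ construction; what the paper's route buys is brevity and the ability to reuse the ball lemma verbatim.
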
	

\begin{proof}
Let $P$ denote the set of centers constructed for unit balls. We rotate $P$ by $\pi/4$ around the $y$ axis, and scale $P$ by a factor of $\sqrt{2}$. Let $P'$ be the resulting set of points. Note that in $P$, all inter-clique edges were realized by a horizontal or vertical point pair of distance exactly~$1$. In $P'$, the corresponding pairs are diagonal segments in some plane perpendicular to the $y$-axis, therefore the unit side-length cubes centered at the corresponding pair of points will have a touching edge. It is routine to check that the unit side-length cubes centered at $P'$ realize the intersection graph $G(A)$.
\end{proof}

\begin{proof}[Proof of Corollary~\ref{cor:approxbounds}]
The lower bounds regarding constant-approximations in sub-quadratic time are
immediate consequences of our lower bounds for \textsc
{Diameter-2} and \textsc{Diameter-3}. Notice that our proofs for unit balls
and axis-parallel unit cubes in $\Reals^3$, as well as axis-parallel unit
segments in $\Reals^2$ use a construction where the resulting intersection
	graph has diameter $d^*=\Theta(d)$. Under OV, there exists no $
(1+\eps)$-approximation for these problems that would run in $n^
	{2-\delta}\poly(1/\eps)$ time, as setting $\eps=1/d^*=\Theta(1/d)$ would enable
	us to decide OV in $n^{2-\delta}\poly(d)$ time.
\end{proof}

\section{The Diameter-2 problem for hypercube graphs: a hyperclique lower bound}
\label{sec:hyperclique-lb}

\begin{theorem}
	For all $\epsilon > 0$ there is no $O(n^{2-\epsilon})$ algorithm for \textsc{Diameter-2} in unit hypercube graphs in $\Reals^{12}$, unless the Hyperclique Hypothesis fails.
\end{theorem}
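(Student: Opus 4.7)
The plan is to reduce from 3-uniform $k$-Hyperclique for a suitable $k\ge 4$ (concretely, $k=4$ works) to \textsc{Diameter-2} in unit hypercube graphs in $\Reals^{12}$, producing an $N = \Oh(n^{k/2})$-vertex instance. An $\Oh(N^{2-\epsilon})$-time algorithm would then give an $\Oh(n^{k-\epsilon k/2})$-time algorithm for $k$-Hyperclique, contradicting the Hyperclique Hypothesis.

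The abstract graph follows the Roditty--Vassilevska Williams template and is a close analogue of~\cite[Theorem~14]{AnGIJKPN21}. After color coding, I partition the hypergraph vertex set into $k$ parts $V_1, \ldots, V_k$ of size $n$ each, and create two families of \emph{pair gadgets} $A \subseteq V_1 \times \cdots \times V_{k/2}$ and $B \subseteq V_{k/2+1} \times \cdots \times V_k$, together with a \emph{witness vertex} for each non-hyperedge, tagged by its ``signature'' (which parts it touches). Adjacencies are designed so that any $(a,b) \in A \times B$ has a common neighbor iff some triple in $a\cup b$ is a non-hyperedge, while two hub vertices collapse all remaining vertex pair distances to at most $2$. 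Thus the graph has diameter at most $2$ iff no colorful $k$-hyperclique exists.

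The heart of the proof is realizing this abstract graph as a unit hypercube intersection graph in $\Reals^{12}$. Recall that two unit hypercubes intersect iff their centers are within distance $1$ in every coordinate, so we must position each vertex in $\Reals^{12}$ so that exactly the desired adjacencies arise. I would partition the $12$ dimensions into blocks: some dedicated to encoding hypergraph-vertex indices (with spacing $2$, so matching indices translates to intersection), some serving as \emph{type separators} that keep $A$- and $B$-pairs non-adjacent (for instance by placing them at coordinates differing by $2$ in a dedicated dimension, while witnesses sit at the midpoint and still intersect both), and the rest used to separate the different witness signatures. Crucially, since each witness is supported on only a $3$-element triple, it leaves exactly one part $V_m$ unconstrained, so the $V_m$-slot of the opposite-side pair gadget can be treated as a ``wildcard'' coordinate shared with the witness; this asymmetric sharing is what makes a low-dimensional realization possible.

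The main obstacle is implementing the entire family of adjacency patterns within only $12$ dimensions. Each witness type imposes its own pattern of encoded versus wildcarded coordinates, and these patterns must be simultaneously consistent across all gadget types. In particular, a witness must intersect both an $A$-pair and a $B$-pair in every one of the $12$ dimensions, which forces the two pair types to lie within distance $2$ in each such dimension while still being non-adjacent via at least one other dimension. Verifying that these local constraints can be satisfied globally---so that no spurious common neighbors appear between unrelated vertex pairs and all hub adjacencies are correctly implemented---will be the technical core of the proof, and the specific dimension $12$ arises as the minimum dimensionality at which this balance becomes achievable.
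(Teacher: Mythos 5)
Your high-level template is the right one and matches the paper's in spirit (split the $k$ parts into two halves, one gadget per half-tuple, one witness per triple, diameter $2$ versus $3$ detects the hyperclique). But there are two concrete problems.

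First, the parameter accounting is wrong, and with it the choice $k=4$. You claim the instance has $N=\Oh(n^{k/2})$ vertices, but you introduce ``a witness vertex for each non-hyperedge,'' and there are $\Theta(n^3)$ such triples in the worst case; these cannot be grouped, since each potential $k$-tuple needs its own reason to have (or not have) a common neighbor. So $N=\Theta(n^3)$ regardless of $k$, and for $k=4$ an $\Oh(N^{2-\epsilon})$ algorithm only yields time $\Oh(n^{6-3\epsilon})$, which does not beat the trivial $\Oh(n^4)$ bound for $4$-hyperclique and refutes nothing. You must take $k\ge 6$ so that the half-tuple gadgets ($n^{k/2}$ many) and the witnesses ($\Oh(n^3)$ many) balance; the paper uses exactly $k=6$, giving $N=\Oh(n^3)$ and a contradiction at $\Oh(n^{6-3\epsilon})$.

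Second, the geometric realization --- which is the actual content of the theorem and the reason it is stated in $\Reals^{12}$ --- is not carried out, and the one concrete mechanism you propose does not work. Encoding a vertex index ``with spacing $2$, so matching indices translates to intersection'' is incompatible with the wildcard property you correctly identify as essential: a witness that leaves part $V_m$ unconstrained would have to lie within distance $1$ of $n$ coordinates spread over an interval of length $2n$, which is impossible for a unit hypercube. The resolution is to compress all indices into an interval of length $1$ and spend \emph{two} antagonistic coordinates per part, placing a half-tuple gadget at $(\tfrac{v}{n+1},\,1-\tfrac{v}{n+1})$ and a witness at $(1+\tfrac{v'}{n+1},\,2-\tfrac{v'}{n+1})$ in that coordinate pair, so that both coordinates are simultaneously within distance $1$ iff $v=v'$, while a wildcard witness sits at $(1,1)$ and meets everything. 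This is where $12=2\times 6$ comes from; it is not ``the minimum dimensionality at which the balance becomes achievable'' by some counting argument you have yet to supply. Separately, your two hub vertices must be realized as unit hypercubes without becoming common neighbors of an $A$--$B$ pair (which would collapse the diameter to $2$ unconditionally); the paper instead makes each of the three families a geometric clique and adds dummy hyperedges so that every half-tuple gadget is adjacent to some witness, avoiding hubs altogether.
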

\begin{proof}
	Observe that under the Hyperclique Hypothesis, it requires time $n^{6-o(1)}$ to find a hyperclique of size 6 in a given $3$-uniform hypergraph $G=(V,E)$. In fact, using a standard color-coding argument, we can assume without loss of generality that $G$ is $6$-partite: We have $V=V_1\cup \cdots \cup V_6$ for disjoint sets $V_i$ of size $n$ each, and any 6-hyperclique must choose exactly one vertex from each $V_i$.
	By slight abuse of notation, we view each $V_i$ as a disjoint copy of~$[n]$, i.e., node $j\in [n]$ in $V_i$ is different from node $j$ in $V_{i'}$ with $i'\ne i$.
Furthermore, by complementing the edge set, we arrive at the equivalent task of determining whether $G$ has an \emph{independent set} of size 6, i.e., whether there are $(v_1,\dots,v_6)\in V_1\times \cdots \times V_6$ such that $\{v_i,v_j,v_k\}\notin E$ for all distinct $i,j,k\in [6]$.
	Finally, for technical reasons, we assume without loss of generality that for each $v_i \in V_i$ and distinct $j,k \in [6]\setminus \{i\}$, there are $v_j\in V_j, v_k \in V_k$ with $\{v_i,v_j,v_k\}\in E$: To this end, simply add, for every $\ell \in [6]$,  a dummy vertex $v'_\ell$ to $V_\ell$, and add, for every $i,j,k$ and $v_j \in V_j,v_k\in V_k$, the edge $\{v_i',v_j,v_k\}$ to $E$, i.e., each dummy vertex is connected to all other pairs of vertices (including other dummy vertices). Observe that this yields an equivalent instance, since no dummy vertex can be contained in an independent set. 

	The reduction is given by constructing a set of $O(n^3)$ unit hypercubes in $\mathbb{R}^{12}$, which we specify by their centers. These (hyper)cubes are of three types: \emph{left-half cubes} representing a choice of the vertices $(x_1,x_2,x_3)\in V_1\times V_2 \times V_3$, \emph{right-half cubes} representing a choice of the vertices $(y_1,y_2,y_3)\in V_4\times V_5 \times V_6$ and \emph{edge cubes} representing an edge $\{v_i,v_j,v_k\}\in E$. In particular, the choice of a vertex in $V_i$ will be encoded in the dimensions $2i - 1$ and $2i$.

	Specifically, for each $(x_1,x_2,x_3)\in V_1\times V_2 \times V_3$ such that $\{x_1,x_2,x_3\}\notin E$, we define the center of the left-half cube $X_{x_1,x_2,x_3}$ as
	\[	\left(\frac{x_1}{n+1}, 1 - \frac{x_1}{n+1}, \frac{x_2}{n+1}, 1 - \frac{x_2}{n+1}, \frac{x_3}{n+1}, 1 - \frac{x_3}{n+1}, 2, \dots, 2\right). \]
	Similarly, for each $(y_1,y_2,y_3)\in V_4\times V_5 \times V_6$ such that $\{y_1,y_2,y_3\}\notin E$, we define the center of the right-half cube $Y_{y_1,y_2,y_3}$ as
	\[\left(2, \dots, 2, \frac{y_1}{n+1}, 1 - \frac{y_1}{n+1}, \frac{y_2}{n+1}, 1 - \frac{y_2}{n+1}, \frac{y_3}{n+1}, 1 - \frac{y_3}{n+1}\right).\]

	Finally, for each edge $e=\{v_i,v_j,v_k\} \in E$ not already in $V_1 \times V_2 \times V_3 \cup V_4 \times V_5 \times V_6$, we define a corresponding edge cube $E_{v_i,v_j,v_k}$ with the following center point: We set the $2i - 1$-th coordinate to $1 + \frac{v_i}{n+1}$, the $2i$-th coordinate to $2 - \frac{v_i}{n+1}$, and similarly we set the coordinates $2j - 1,2j,2k - 1,2k$ to $1 + \frac{v_j}{n+1}, 2 - \frac{v_j}{n+1}, 1 + \frac{v_k}{n+1}, 2 - \frac{v_k}{n+1}$, respectively, and we set all remaining coordinates to 1. For example, if $i=1,j=2,k=4$, the center point of $E_{v_1,v_2,v_4}$ is
\[	\left(1 + \frac{v_1}{n+1}, 2 - \frac{v_1}{n+1}, 1 + \frac{v_2}{n+1}, 2 - \frac{v_2}{n+1}, 1, 1, 1 + \frac{v_4}{n+1}, 2 - \frac{v_4}{n+1}, 1, 1, 1, 1\right). \]

	Let $S$ denote the set of all unit cubes $X_{x_1,x_2,x_3},Y_{y_4,y_5,y_6},E_{v_i,v_j,v_k}$ constructed above and let $G_S$ denote the geometric intersection graph of the unit cubes. We prove that $\diam(G_S) \le 2$ if and only if there is no independent set $(v_1,\dots,v_6)\in V_1\times \cdots \times V_6$ in the 3-uniform hypergraph $G=(V_1\cup \cdots \cup V_6, E)$: 

	\begin{enumerate}
		\item \textbf{Intra-set distances:} We have that the left- and right-half cubes as well as the edge cubes form cliques, i.e., $\dist_{G_S}(X_{x_1,x_2,x_3}, X_{x'_1,x'_2,x'_3}) \le 1$, $\dist_{G_S}(Y_{y_1,y_2,y_3}, Y_{y'_1,y'_2,y'_3}) \le 1$ and $\dist_{G_S}(E_{v_1,v_2,v_3}, E_{v'_1,v'_2,v'_3}) \le 1$:
			Observe that the center of each $X_{x_1,x_2,x_3}$ is contained in $[0,1]^6\times \{2\}^6$ and thus in a hypercube of side length at most 1. Thus, all cubes $X_{x_1,x_2,x_3}$ intersect each other, proving $\dist_{G_S}(X_{x_1,x_2,x_3}, X_{x'_1,x'_2,x'_3}) \le 1$. The remaining claims follow analogously by observing that the centers of $Y_{y_1,y_2,y_3}$ and $E_{v_1,v_2,v_3}$ are contained in $\{2\}^6 \times [0,1]^6$ and $[1,2]^{12}$, respectively, and thus also in hypercubes of side length at most~1.
		\item \textbf{Equality checks:} Let $x_1\in V_1, x_2\in V_2, x_3\in V_3$ and $v_i \in V_i,v_j \in V_j, v_k\in V_k$. Then $\dist_{G_S}(X_{x_1,x_2,x_3}, E_{v_i,v_j,v_k}) = 1$ iff $v_\ell = x_\ell$ whenever $\ell \in \{1,2,3\}\cap \{i,j,k\}$: Consider $\ell \in \{1,2,3\}\cap \{i,j,k\}$. Then the dimensions $(2\ell - 1, 2\ell)$ of $X_{x_1,x_2,x_3}$ and $E_{v_i,v_j,v_k}$ are equal to $(\frac{x_\ell}{n+1},1-\frac{x_\ell}{n+1})$ and $(1+\frac{v_\ell}{n+1}, 2-\frac{v_\ell}{n+1})$, respectively. Note that $(1+\frac{v_\ell}{n+1}) - \frac{x_\ell}{n+1}\le 1$ and $(2-\frac{v_\ell}{n+1}) - (1-\frac{x_\ell}{n+1})\le 1$ hold simultaneously iff $x_\ell = v_\ell$. All other dimensions $\ell'\notin \{1,2,3\}\cap \{i,j,k\}$ are trivially within distance 1, since dimensions $(2\ell' - 1, 2\ell')$ of $X_{x_1,x_2,x_3}$ and $E_{v_i,v_j,v_k}$ are $(2,2)$ and in $[1,2]^2$, respectively (if $\ell'\notin \{1,2,3\}$), or in $[0,2]^2$ and $(1,1)$, respectively (if $\ell'\notin \{i,j,k\}$). The analogous claim holds for distances between $Y_{y_1,y_2,y_3}$ and $E_{v_i,v_j,v_k}$. 
		\item \textbf{Edge distances:} We have that $\dist_{G_S}(X_{x_1,x_2,x_3},E_{v_i,v_j,v_k})\le 2$: By our technical assumption, we have that there is an edge $\{x_1, v'_4,v'_5\}\in E$ for some vertices $v'_4\in V_4$ and $v'_5\in V_5$. Thus, by the previous properties, we obtain that 
			\[\dist_{G_S}(X_{x_1,x_2,x_3},E_{v_i,v_j,v_k})\le \dist_{G_S}(X_{x_1,x_2,x_3},E_{x_1,v'_4,v'_5}) + \dist_{G_S}(E_{x_1,v'_4,v'_5}, E_{v_i,v_j,v_k}) \le 2.\]
		\item \textbf{Distances of left- and right-half cubes:} Let $x_1\in V_1, x_2\in V_2, x_3\in V_3$ and $y_1\in V_4, y_2\in V_5, y_3\in V_6$ such that $\{x_1,x_2,x_3\},\{y_1,y_2,y_3\}\notin E$ (thus, the left-half/right-half cubes for $\{x_1,x_2,x_3\},\{y_1,y_2,y_3\}$ exist). Then we have that $\dist_{G_S}(X_{x_1,x_2,x_3},Y_{y_1,y_2,y_3})> 2$ iff $(x_1,x_2,x_3,y_1,y_2,y_3)$ is an independent set in $G$: If the tuple $(x_1,x_2,x_3,y_1,y_2,y_3)$ is not an independent set, then there must be an edge $\{x_i,y_j,y_k\}$ or $\{x_i,x_j,y_k\}$ with $i,j,k\in [3]$, since $\{x_1,x_2,x_3\}$ and $\{y_1,y_2,y_3\}$ are non-edges. Consider the first case, the other is symmetric. Then by the equality-check property, that $\dist_{G_S}(X_{x_1,x_2,x_3}, E_{x_i,x_j,y_k})=1$ and $\dist_{G_S}(E_{x_i,x_j,y_k},Y_{y_1,y_2,y_3})=1$, which yields $\dist_{G_S}(X_{x_1,x_2,x_3}, Y_{y_1,y_2,y_3}) \le 2$. It remains to consider the case that the tuple $(x_1,x_2,x_3,y_1,y_2,y_3)$ is an independent set. Since there cannot be any edge between a left-half cube $X_{x_1',x_2',x_3'}$  -- which is contained in $(0,1)^6\times \{2\}^6$ -- and a right-half cube $Y_{y_1',y_2',y_3'}$ -- which is contained in $\{2\}^6 \times (0,1)^6$ --, the only way to reach $Y_{y_1,y_2,y_3}$ from $X_{x_1,x_2,x_3}$ via a path of length 2 would have to use some edge cube $E_{v_i,v_j,v_k}$. However, by the equality-check property, a path $X_{x_1,x_2,x_3} - E_{v_i,v_j,v_k} - Y_{y_1,y_2,y_3}$ would imply that the vertices chosen by $(x_1,x_2,x_3,y_1,y_2,y_3)$ would agree with $v_i,v_j,v_k$ in the sets $V_i,V_j,V_k$. Thus, we would have found an edge $\{v_i,v_j,v_k\}$ among $(x_1,x_2,x_3,y_1,y_2,y_3)$, contradicting the assumption that it is an independent set.
	\end{enumerate}

	Finally, observe that given a 3-uniform hypergraph $G$, we can construct the corresponding cube set $S$, containing $O(n^3)$ nodes, in time $O(n^3)$. Thus, if we had an $O(N^{2-\epsilon})$-time algorithm for determining whether an $N$-vertex unit cube graph $G_S$ has a diameter of at most $2$, we could detect existence of an independent set (or equivalently, hyperclique) of size 6 in $G$ in time $O(n^{6-3\epsilon})$, which would refute the Hyperclique Hypothesis.
\end{proof}

\bibliography{udgdiameter}

\begin{thebibliography}{10}

\bibitem{AbboudBDN18}
Amir Abboud, Karl Bringmann, Holger Dell, and Jesper Nederlof.
\newblock More consequences of falsifying {SETH} and the orthogonal vectors
  conjecture.
\newblock In Ilias Diakonikolas, David Kempe, and Monika Henzinger, editors,
  {\em Pro. 50th Annual {ACM} {SIGACT} Symposium on Theory of Computing ({STOC}
  2018)}, pages 253--266. {ACM}, 2018.
\newblock \href {https://doi.org/10.1145/3188745.3188938}
  {\path{doi:10.1145/3188745.3188938}}.

\bibitem{AbboudWY15}
Amir Abboud, Ryan Williams, and Huacheng Yu.
\newblock More applications of the polynomial method to algorithm design.
\newblock In {\em Proceedings of the 26th Annual {ACM-SIAM} Symposium on
  Discrete Algorithms}, SODA '15, pages 218--230. {SIAM}, 2015.

\bibitem{AingworthCIM99}
Donald Aingworth, Chandra Chekuri, Piotr Indyk, and Rajeev Motwani.
\newblock Fast estimation of diameter and shortest paths (without matrix
  multiplication).
\newblock {\em {SIAM} J. Comput.}, 28(4):1167--1181, 1999.
\newblock \href {https://doi.org/10.1137/S0097539796303421}
  {\path{doi:10.1137/S0097539796303421}}.

\bibitem{AnGIJKPN21}
Haozhe An, Mohit~Jayanti Gurumukhani, Russell Impagliazzo, Michael Jaber,
  Marvin Künnemann, and Maria~Paula Parga~Nina.
\newblock The fine-grained complexity of multi-dimensional ordering properties.
\newblock In Petr~A. Golovach and Meirav Zehavi, editors, {\em Proc. 16th
  International Symposium on Parameterized and Exact Computation (IPEC 2021)},
  volume 214 of {\em Leibniz International Proceedings in Informatics
  (LIPIcs)}, pages 3:1--3:15, Dagstuhl, Germany, 2021. Schloss Dagstuhl --
  Leibniz-Zentrum f{\"u}r Informatik.
\newblock URL: \url{https://drops.dagstuhl.de/opus/volltexte/2021/15386}, \href
  {https://doi.org/10.4230/LIPIcs.IPEC.2021.3}
  {\path{doi:10.4230/LIPIcs.IPEC.2021.3}}.

\bibitem{BackursRSWW21}
Arturs Backurs, Liam Roditty, Gilad Segal, Virginia~Vassilevska Williams, and
  Nicole Wein.
\newblock Toward tight approximation bounds for graph diameter and
  eccentricities.
\newblock {\em {SIAM} J. Comput.}, 50(4):1155--1199, 2021.
\newblock \href {https://doi.org/10.1137/18M1226737}
  {\path{doi:10.1137/18M1226737}}.

\bibitem{Bentley}
J.~L. Bentley.
\newblock Solutions to {K}lee’s rectangle problems.
\newblock Unpublished manuscript, 1977.

\bibitem{BonamyBBCGKRST21}
Marthe Bonamy, {\'{E}}douard Bonnet, Nicolas Bousquet, Pierre Charbit, Panos
  Giannopoulos, Eun~Jung Kim, Pawel Rzazewski, Florian Sikora, and
  St{\'{e}}phan Thomass{\'{e}}.
\newblock {EPTAS} and subexponential algorithm for maximum clique on disk and
  unit ball graphs.
\newblock {\em J. {ACM}}, 68(2):9:1--9:38, 2021.
\newblock \href {https://doi.org/10.1145/3433160} {\path{doi:10.1145/3433160}}.

\bibitem{Bonnet21icalp}
{\'{E}}douard Bonnet.
\newblock 4 vs 7 sparse undirected unweighted diameter is {SETH}-hard at time
  $n^{4/3}$.
\newblock In Nikhil Bansal, Emanuela Merelli, and James Worrell, editors, {\em
  48th International Colloquium on Automata, Languages, and Programming,
  {ICALP} 2021, July 12-16, 2021, Glasgow, Scotland (Virtual Conference)},
  volume 198 of {\em LIPIcs}, pages 34:1--34:15. Schloss Dagstuhl -
  Leibniz-Zentrum f{\"{u}}r Informatik, 2021.
\newblock \href {https://doi.org/10.4230/LIPIcs.ICALP.2021.34}
  {\path{doi:10.4230/LIPIcs.ICALP.2021.34}}.

\bibitem{Bonnet21stacs}
{\'{E}}douard Bonnet.
\newblock Inapproximability of diameter in super-linear time: Beyond the 5/3
  ratio.
\newblock In Markus Bl{\"{a}}ser and Benjamin Monmege, editors, {\em 38th
  International Symposium on Theoretical Aspects of Computer Science, {STACS}
  2021, March 16-19, 2021, Saarbr{\"{u}}cken, Germany (Virtual Conference)},
  volume 187 of {\em LIPIcs}, pages 17:1--17:13. Schloss Dagstuhl -
  Leibniz-Zentrum f{\"{u}}r Informatik, 2021.
\newblock \href {https://doi.org/10.4230/LIPIcs.STACS.2021.17}
  {\path{doi:10.4230/LIPIcs.STACS.2021.17}}.

\bibitem{BrandstadtD03}
Andreas Brandst{\"{a}}dt and Feodor~F. Dragan.
\newblock On linear and circular structure of (claw, net)-free graphs.
\newblock {\em Discret. Appl. Math.}, 129(2-3):285--303, 2003.
\newblock \href {https://doi.org/10.1016/S0166-218X(02)00571-1}
  {\path{doi:10.1016/S0166-218X(02)00571-1}}.

\bibitem{BringmannFK19}
Karl Bringmann, Nick Fischer, and Marvin K{\"{u}}nnemann.
\newblock A fine-grained analogue of {S}chaefer's theorem in {P:} dichotomy of
  $\exists^k \forall$-quantified first-order graph properties.
\newblock In Amir Shpilka, editor, {\em Proc. 34th Computational Complexity
  Conference ({CCC} 2019)}, volume 137 of {\em LIPIcs}, pages 31:1--31:27.
  Schloss Dagstuhl - Leibniz-Zentrum f{\"{u}}r Informatik, 2019.
\newblock \href {https://doi.org/10.4230/LIPIcs.CCC.2019.31}
  {\path{doi:10.4230/LIPIcs.CCC.2019.31}}.

\bibitem{Cabello19}
Sergio Cabello.
\newblock Subquadratic algorithms for the diameter and the sum of pairwise
  distances in planar graphs.
\newblock {\em {ACM} Trans. Algorithms}, 15(2):21:1--21:38, 2019.
\newblock \href {https://doi.org/10.1145/3218821} {\path{doi:10.1145/3218821}}.

\bibitem{CairoGR16}
Massimo Cairo, Roberto Grossi, and Romeo Rizzi.
\newblock New bounds for approximating extremal distances in undirected graphs.
\newblock In Robert Krauthgamer, editor, {\em Proceedings of the Twenty-Seventh
  Annual {ACM-SIAM} Symposium on Discrete Algorithms, {SODA} 2016, Arlington,
  VA, USA, January 10-12, 2016}, pages 363--376. {SIAM}, 2016.
\newblock \href {https://doi.org/10.1137/1.9781611974331.ch27}
  {\path{doi:10.1137/1.9781611974331.ch27}}.

\bibitem{Chan03}
Timothy~M. Chan.
\newblock Polynomial-time approximation schemes for packing and piercing fat
  objects.
\newblock {\em J. Algorithms}, 46(2):178--189, 2003.
\newblock \href {https://doi.org/10.1016/S0196-6774(02)00294-8}
  {\path{doi:10.1016/S0196-6774(02)00294-8}}.

\bibitem{ChanS16}
Timothy~M. Chan and Dimitrios Skrepetos.
\newblock All-pairs shortest paths in unit-disk graphs in slightly subquadratic
  time.
\newblock In Seok{-}Hee Hong, editor, {\em 27th International Symposium on
  Algorithms and Computation, {ISAAC} 2016, December 12-14, 2016, Sydney,
  Australia}, volume~64 of {\em LIPIcs}, pages 24:1--24:13. Schloss Dagstuhl -
  Leibniz-Zentrum f{\"{u}}r Informatik, 2016.
\newblock \href {https://doi.org/10.4230/LIPIcs.ISAAC.2016.24}
  {\path{doi:10.4230/LIPIcs.ISAAC.2016.24}}.

\bibitem{Chan2019AllPairsSP}
Timothy~M. Chan and Dimitrios Skrepetos.
\newblock All-pairs shortest paths in geometric intersection graphs.
\newblock {\em J. Comput. Geom.}, 10:27--41, 2019.

\bibitem{ChanS19a}
Timothy~M. Chan and Dimitrios Skrepetos.
\newblock Approximate shortest paths and distance oracles in weighted unit-disk
  graphs.
\newblock {\em J. Comput. Geom.}, 10(2):3--20, 2019.
\newblock \href {https://doi.org/10.20382/jocg.v10i2a2}
  {\path{doi:10.20382/jocg.v10i2a2}}.

\bibitem{ChanW16}
Timothy~M. Chan and Ryan Williams.
\newblock Deterministic {APSP}, orthogonal vectors, and more: Quickly
  derandomizing {Razborov}-{Smolensky}.
\newblock In {\em Proceedings of the 27th Annual {ACM-SIAM} Symposium on
  Discrete Algorithms}, SODA '16, pages 1246--1255. {SIAM}, 2016.

\bibitem{ChechikLRSTW14}
Shiri Chechik, Daniel~H. Larkin, Liam Roditty, Grant Schoenebeck, Robert~Endre
  Tarjan, and Virginia~Vassilevska Williams.
\newblock Better approximation algorithms for the graph diameter.
\newblock In Chandra Chekuri, editor, {\em Proceedings of the Twenty-Fifth
  Annual {ACM-SIAM} Symposium on Discrete Algorithms, {SODA} 2014, Portland,
  Oregon, USA, January 5-7, 2014}, pages 1041--1052. {SIAM}, 2014.
\newblock \href {https://doi.org/10.1137/1.9781611973402.78}
  {\path{doi:10.1137/1.9781611973402.78}}.

\bibitem{Corneil04survey}
Derek~G. Corneil.
\newblock Lexicographic breadth first search - {A} survey.
\newblock In Juraj Hromkovic, Manfred Nagl, and Bernhard Westfechtel, editors,
  {\em Graph-Theoretic Concepts in Computer Science, 30th International
  Workshop,WG 2004, Bad Honnef, Germany, June 21-23, 2004, Revised Papers},
  volume 3353 of {\em Lecture Notes in Computer Science}, pages 1--19.
  Springer, 2004.
\newblock \href {https://doi.org/10.1007/978-3-540-30559-0\_1}
  {\path{doi:10.1007/978-3-540-30559-0\_1}}.

\bibitem{CorneilDHP01}
Derek~G. Corneil, Feodor~F. Dragan, Michel Habib, and Christophe Paul.
\newblock Diameter determination on restricted graph families.
\newblock {\em Discret. Appl. Math.}, 113(2-3):143--166, 2001.
\newblock \href {https://doi.org/10.1016/S0166-218X(00)00281-X}
  {\path{doi:10.1016/S0166-218X(00)00281-X}}.

\bibitem{CorneilDK03}
Derek~G. Corneil, Feodor~F. Dragan, and Ekkehard K{\"{o}}hler.
\newblock On the power of {BFS} to determine a graph's diameter.
\newblock {\em Networks}, 42(4):209--222, 2003.
\newblock \href {https://doi.org/10.1002/net.10098}
  {\path{doi:10.1002/net.10098}}.

\bibitem{DLVW21arxiv}
Mina Dalirrooyfard, Ray Li, and Virginia~Vassilevska Williams.
\newblock Hardness of approximate diameter: Now for undirected graphs.
\newblock {\em CoRR}, abs/2106.06026, 2021.
\newblock URL: \url{https://arxiv.org/abs/2106.06026}, \href
  {http://arxiv.org/abs/2106.06026} {\path{arXiv:2106.06026}}.

\bibitem{DalirrooyfardW21}
Mina Dalirrooyfard and Nicole Wein.
\newblock Tight conditional lower bounds for approximating diameter in directed
  graphs.
\newblock In Samir Khuller and Virginia~Vassilevska Williams, editors, {\em
  {STOC} '21: 53rd Annual {ACM} {SIGACT} Symposium on Theory of Computing,
  Virtual Event, Italy, June 21-25, 2021}, pages 1697--1710. {ACM}, 2021.
\newblock \href {https://doi.org/10.1145/3406325.3451130}
  {\path{doi:10.1145/3406325.3451130}}.

\bibitem{DVVW19}
Mina Dalirrooyfard, Virginia~Vassilevska Williams, Nikhil Vyas, and Nicole
  Wein.
\newblock Tight approximation algorithms for bichromatic graph diameter and
  related problems.
\newblock In Christel Baier, Ioannis Chatzigiannakis, Paola Flocchini, and
  Stefano Leonardi, editors, {\em 46th International Colloquium on Automata,
  Languages, and Programming, {ICALP} 2019, July 9-12, 2019, Patras, Greece},
  volume 132 of {\em LIPIcs}, pages 47:1--47:15. Schloss Dagstuhl -
  Leibniz-Zentrum f{\"{u}}r Informatik, 2019.
\newblock \href {https://doi.org/10.4230/LIPIcs.ICALP.2019.47}
  {\path{doi:10.4230/LIPIcs.ICALP.2019.47}}.

\bibitem{Dragan99}
Feodor~F. Dragan.
\newblock Almost diameter of a house-hole-free graph in linear time via
  {L}ex{BFS}.
\newblock {\em Discret. Appl. Math.}, 95(1-3):223--239, 1999.
\newblock \href {https://doi.org/10.1016/S0166-218X(99)00077-3}
  {\path{doi:10.1016/S0166-218X(99)00077-3}}.

\bibitem{DraganNB97}
Feodor~F. Dragan, Falk Nicolai, and Andreas Brandst{\"a}dt.
\newblock {L}ex{BFS}-orderings and powers of graphs.
\newblock In Fabrizio d'Amore, Paolo~Giulio Franciosa, and Alberto
  Marchetti-Spaccamela, editors, {\em Graph-Theoretic Concepts in Computer
  Science}, pages 166--180, Berlin, Heidelberg, 1997. Springer Berlin
  Heidelberg.

\bibitem{GaoZ05}
Jie Gao and Li~Zhang.
\newblock Well-separated pair decomposition for the unit-disk graph metric and
  its applications.
\newblock {\em {SIAM} J. Comput.}, 35(1):151--169, 2005.
\newblock \href {https://doi.org/10.1137/S0097539703436357}
  {\path{doi:10.1137/S0097539703436357}}.

\bibitem{GawrychowskiKMS21}
Pawel Gawrychowski, Haim Kaplan, Shay Mozes, Micha Sharir, and Oren Weimann.
\newblock Voronoi diagrams on planar graphs, and computing the diameter in
  deterministic {$\tilde{O}(n^{5/3})$} time.
\newblock {\em {SIAM} J. Comput.}, 50(2):509--554, 2021.
\newblock \href {https://doi.org/10.1137/18M1193402}
  {\path{doi:10.1137/18M1193402}}.

\bibitem{HochbaumM85}
Dorit~S. Hochbaum and Wolfgang Maass.
\newblock Approximation schemes for covering and packing problems in image
  processing and {VLSI}.
\newblock {\em J. {ACM}}, 32(1):130--136, 1985.
\newblock \href {https://doi.org/10.1145/2455.214106}
  {\path{doi:10.1145/2455.214106}}.

\bibitem{HuntMRRRS98}
Harry B.~Hunt III, Madhav~V. Marathe, Venkatesh Radhakrishnan, S.~S. Ravi,
  Daniel~J. Rosenkrantz, and Richard~Edwin Stearns.
\newblock {NC}-approximation schemes for {NP-} and {PSPACE}-hard problems for
  geometric graphs.
\newblock {\em J. Algorithms}, 26(2):238--274, 1998.
\newblock \href {https://doi.org/10.1006/jagm.1997.0903}
  {\path{doi:10.1006/jagm.1997.0903}}.

\bibitem{ImpagliazzoP01}
Russell Impagliazzo and Ramamohan Paturi.
\newblock On the complexity of k-{SAT}.
\newblock {\em J. Comput. Syst. Sci.}, 62(2):367--375, 2001.

\bibitem{koebe1936kontaktprobleme}
Paul Koebe.
\newblock {\em Kontaktprobleme der konformen Abbildung}.
\newblock Hirzel, 1936.

\bibitem{KuhnWZ08}
Fabian Kuhn, Roger Wattenhofer, and Aaron Zollinger.
\newblock Ad hoc networks beyond unit disk graphs.
\newblock {\em Wirel. Networks}, 14(5):715--729, 2008.
\newblock \href {https://doi.org/10.1007/s11276-007-0045-6}
  {\path{doi:10.1007/s11276-007-0045-6}}.

\bibitem{KungLP75}
H.~T. Kung, Fabrizio Luccio, and Franco~P. Preparata.
\newblock On finding the maxima of a set of vectors.
\newblock {\em J. {ACM}}, 22(4):469--476, 1975.
\newblock \href {https://doi.org/10.1145/321906.321910}
  {\path{doi:10.1145/321906.321910}}.

\bibitem{KunnemannM20}
Marvin K{\"{u}}nnemann and D{\'{a}}niel Marx.
\newblock Finding small satisfying assignments faster than brute force: {A}
  fine-grained perspective into boolean constraint satisfaction.
\newblock In Shubhangi Saraf, editor, {\em Proc. 35th Computational Complexity
  Conference ({CCC} 2020)}, volume 169 of {\em LIPIcs}, pages 27:1--27:28.
  Schloss Dagstuhl - Leibniz-Zentrum f{\"{u}}r Informatik, 2020.
\newblock \href {https://doi.org/10.4230/LIPIcs.CCC.2020.27}
  {\path{doi:10.4230/LIPIcs.CCC.2020.27}}.

\bibitem{Li21}
Ray Li.
\newblock Settling {SETH} vs. approximate sparse directed unweighted diameter
  (up to {(NU)NSETH}).
\newblock In Samir Khuller and Virginia~Vassilevska Williams, editors, {\em
  {STOC} '21: 53rd Annual {ACM} {SIGACT} Symposium on Theory of Computing,
  Virtual Event, Italy, June 21-25, 2021}, pages 1684--1696. {ACM}, 2021.
\newblock \href {https://doi.org/10.1145/3406325.3451045}
  {\path{doi:10.1145/3406325.3451045}}.

\bibitem{LincolnVWW18}
Andrea Lincoln, Virginia Vassilevska~Williams, and R.~Ryan Williams.
\newblock Tight hardness for shortest cycles and paths in sparse graphs.
\newblock In Artur Czumaj, editor, {\em Proc. 29th Annual {ACM-SIAM} Symposium
  on Discrete Algorithms ({SODA} 2018)}, 2018.
\newblock \href {https://doi.org/10.1137/1.9781611975031.80}
  {\path{doi:10.1137/1.9781611975031.80}}.

\bibitem{NesetrilP85}
Jaroslav Ne\v{s}et\v{r}il and Svatopluk Poljak.
\newblock On the complexity of the subgraph problem.
\newblock {\em Commentationes Mathematicae Universitatis Carolinae},
  026(2):415--419, 1985.

\bibitem{RodittyW13}
Liam Roditty and Virginia Vassilevska~Williams.
\newblock Fast approximation algorithms for the diameter and radius of sparse
  graphs.
\newblock In Dan Boneh, Tim Roughgarden, and Joan Feigenbaum, editors, {\em
  Symposium on Theory of Computing Conference, STOC'13, Palo Alto, CA, USA,
  June 1-4, 2013}, pages 515--524. {ACM}, 2013.
\newblock \href {https://doi.org/10.1145/2488608.2488673}
  {\path{doi:10.1145/2488608.2488673}}.

\bibitem{Seidel95}
Raimund Seidel.
\newblock On the all-pairs-shortest-path problem in unweighted undirected
  graphs.
\newblock {\em J. Comput. Syst. Sci.}, 51(3):400--403, 1995.
\newblock \href {https://doi.org/10.1006/jcss.1995.1078}
  {\path{doi:10.1006/jcss.1995.1078}}.

\bibitem{Thorup99}
Mikkel Thorup.
\newblock Undirected single-source shortest paths with positive integer weights
  in linear time.
\newblock {\em J. {ACM}}, 46(3):362--394, 1999.
\newblock \href {https://doi.org/10.1145/316542.316548}
  {\path{doi:10.1145/316542.316548}}.

\bibitem{DBLP:journals/jal/LeeuwenW81}
Jan van Leeuwen and Derick Wood.
\newblock The measure problem for rectangular ranges in d-space.
\newblock {\em J. Algorithms}, 2(3):282--300, 1981.
\newblock \href {https://doi.org/10.1016/0196-6774(81)90027-4}
  {\path{doi:10.1016/0196-6774(81)90027-4}}.

\bibitem{VassilevskaW18}
Virginia Vassilevska~Williams.
\newblock On some fine-grained questions in algorithms and complexity.
\newblock In {\em Proceedings of the International Congress of Mathematicians},
  ICM '18, pages 3447--3487, 2018.

\bibitem{Williams05}
Ryan Williams.
\newblock A new algorithm for optimal 2-constraint satisfaction and its
  implications.
\newblock {\em Theor. Comput. Sci.}, 348(2-3):357--365, 2005.

\end{thebibliography}

\clearpage
\appendix
\begin{figure}[h!]
	\centering
	\begin{tikzpicture}[scale=0.6,thick, main/.style = {draw,}]
\draw (0,0) -- (0, 1.5);
		\draw[dotted] (0,2) -- (0, 2.4);
		\scriptsize
		\draw node at (0,2.7) {$\bar{a}_1$};
		
		\draw (.5,0) -- (.5, 1.5);
		\draw[dotted] (.5,2) -- (.5, 2.4);
		\scriptsize
		\draw node at (.5,2.7) {$\bar{a}_2$};
		
		\draw (1,0) -- (1, 1.5);
		\draw[dotted] (1,2) -- (1, 2.4);
		\scriptsize
		\draw node at (1,2.7) {$\bar{a}_3$};
		
		\draw (1.5,0) -- (1.5, 1.5);
		\draw[dotted] (1.5,2) -- (1.5, 2.4);
		\scriptsize
		\draw node at (1.5,2.7) {$\bar{a}_4$};
		
		\draw[dotted] (2,0) -- (2.4,0);
		\draw (3,0) -- (3, 1.5);
		\draw[dotted] (3,2) -- (3, 2.4);
		\scriptsize
		\draw node at (3,2.7) {$\bar{a}_n$};
		
		\draw[purple][to-to] (0, 3.2) -- (3,3.2);
		\draw [color=purple]node at (1.5,3.6) {$\frac{1}{n}$};
		
		\draw[color=purple][to-to] (3.3, 0) -- (3.3, 2.4);
		\draw [color=purple]node at (3.6,1.2) {$1$};

		\draw[color=purple][to-to] (-.3, -3.9) -- (-.3, 0);
		\small
		\draw [color=purple]node at (-2.2,-2) {$y^{\star} = \sqrt{1 - \frac{1}{n^2}}$};
		
		\draw [color=purple][to-to](0,-.2) -- (.5,-.2);
		\draw [color=purple]node at (0.25,-.8) {$\frac{1}{n^2}$};
		
		\node[draw,circle,inner sep=.01cm,fill] at (0,-4) {};
		\scriptsize
		\draw node at (-.1,-4.3) {$w_1$};
		\node[draw,circle,inner sep=.01cm,fill] at (.5,-4) {};
		\scriptsize
		\draw node at (.5,-4.3) {$w_2$};
		\node[draw,circle,inner sep=.01cm,fill] at (1,-4) {};
		\scriptsize
		\draw node at (1.1,-4.3) {$w_3$};
		\draw[dotted] (1.55,-4.3) -- (1.85, -4.3);
		\node[draw,circle,inner sep=.01cm,fill] at (2.5,-4) {};
		\scriptsize
		\draw node at (2.5,-4.3) {$w_{d + 2}$};
		\draw (-.5, -4) -- (3.5, -4);
		\draw[dotted] (3.95, -4) -- (4.35, -4);
		\draw node at (4.55,-4) {$\ell$};
		\draw[purple][to-to] (-.5, -4.7) -- (4.35,-4.7);
		\draw [color=purple]node at (1.85,-5) {$1$};

		\draw[color=green!60!black!100!] (1.5,0) -- (.5, -4);
		\draw [green!60!black!100!] (1.5, 0) -- +(75.6:.125cm);
		\draw [green!60!black!100!] (.5, -4) -- +(255.6:.125cm);
		\draw[color=green!60!black!100!] node at (1.5,-2) {$e_{4,2}$};

		\draw[color=pink] (.75,-.25) rectangle ++(1,.5);
		\draw[pink][dotted] (1.75, 0) -- (10.5, -1.3);
		\draw[color=pink] (10.5,-2.8) rectangle ++(4.4,3);s

		\draw (11,-1.5) -- (11, -.25);
		\draw (13,-1.5) -- (13, -.25);
		\draw [green!60!black!100!] (13, -1.5) -- +(75.6:1cm);
		\draw [purple] [to-to](13.2, -1.5) -- +(75.6:1cm);
		\draw [color=purple]node at (14.1, -1) {$< \frac{0.26}{n^2}$};
		\draw [green!60!black!100!] (13, -1.5) -- +(255.6:1cm);

	\end{tikzpicture}
	\caption{Construction with unit segments. The green edges $e_{i,k}$ can extend at most $\frac{0.26}{n^2}$ beyond the bottom endpoint of $\bar a_i$ and beyond $w_k$.}
	\label{fig:unit-segments}
\end{figure}
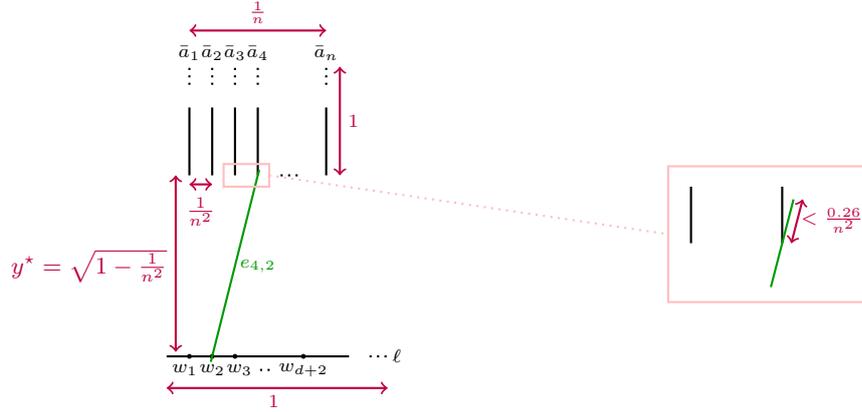
 
\section{The proof of Claim~\ref{cl:nofunnydistances}}\label{sec:claim}

\begin{claimproof} We need to consider the following distances:
	\begin{itemize}
		\item $\dist(\bar{a}_i, \bar{a}_j)$ for all $\bar{a}_i, \bar{a}_j \in \bar{A}$ and $\dist(\bar{b}_i, \bar{b}_j)$ for all $\bar{b}_i, \bar{b}_j \in \bar{B}$:\\
		By construction, $(a)_{d - 1} = 1$ for all $a \in A$, so we have a path $\bar{a}_i - e_{i, d - 1} - e_{j, d - 1} - \bar{a}_j$. As a result, $\dist(\bar{a}_i, \bar{a}_j)\leq 3$. Similarly, $(b)_d=1$ for all $b\in B$ implies that $\dist(\bar{b}_i, \bar{b}_j)\leq 3$.
		\item	$\dist(\ell, \bar{v}) $ for all $\bar{v}\in \bar{V}$:
		\begin{itemize}
			\item Every $e_{i, k}$ and $e'_{i,k} $ for all $i\in[n], k\in[d]$ is directly connected to $\ell$. As a result,  $\dist(e_{i, k}, \ell)=\dist(e'_{i,k}, \ell) = 1 $ for all $i\in[n], k\in[d]$.
			\item Every line segment in $(\bar{A} \cup \bar{B})$ is connected to $\ell$ via $e_{i, d-1}$ and $e'_{i, d}$, respectively. As a result,  $\dist(\ell, \bar{v}), \leq 2 $ for all $\bar{v} \in (\bar{A} \cup \bar{B})$.
		\end{itemize}
		In total, $\dist(\ell, \bar{v}) \leq 2$ for all $\bar{v} \in \bar{V}$.
		\item $\dist(e_{i, k}, \bar{v})$ and $\dist(e'_{j, k}, \bar{v})$ for all $\bar{v} \in \bar{V}, i,j \in [n], k \in [d]$:\\
		We have $\dist(e_{i, k}, \bar{v}) \leq \dist(e_{i, k}, \ell) + \dist(\ell, \bar{v}) \leq 1 + 2  = 3$.
		Similarly, $\dist(e'_{j, k}, \bar{v})\leq 3$.
	\end{itemize}
	Consequently, $\dist(\bar{u}, \bar{v}) \leq 3$ for all $(\bar{u}, \bar{v})\in (\bar{V}\times \bar{V}) \setminus (\bar{A} \times \bar{B} \cup \bar{B} \times \bar{A})$.
\end{claimproof}

\section{Unit segments and congruent equilateral triangles}\label{sec:unitsegandtriangles}
\begin{theorem}\label{thm:unit-line-seg}
	For all $\epsilon > 0$, there is no $\Oh(n^{2 - \epsilon})$ time algorithm
	for solving the \textsc{Diameter-3} problem for intersection graphs
	of unit line segments in $\Reals^2$, unless the Orthogonal Vectors
	Hypothesis fails.
\end{theorem}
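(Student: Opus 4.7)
The plan is to adapt the reduction of Theorem~\ref{line-seg} so that every segment used has length exactly~$1$, while preserving the combinatorial structure of the intersection graph. Following Figure~\ref{fig:unit-segments}, I would scale the horizontal coordinates of the vector gadgets down to spacing $1/n^2$, and realise each $\bar a_i$ as the vertical unit segment from $(i/n^2,0)$ to $(i/n^2,1)$, with each $\bar b_j$ as its mirror image below. The segment $\ell$ becomes a horizontal unit segment at height $-y^*$, where $y^*:=\sqrt{1-1/n^2}$, and I would place $w_k:=(k/n^2,-y^*)$ on $\ell$. For every pair $(i,k)$ with $(a_i)_k=1$, $e_{i,k}$ would be the unique unit segment whose underlying line contains the bottom of $\bar a_i$ and $w_k$; likewise for $e'_{j,k}$. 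The central numerical observation is that the chord between $(i/n^2,0)$ and $w_k$ has length in $[y^*,1]$, so the resulting $e_{i,k}$ extends beyond each of these two points by at most $(1-y^*)/2<0.26/n^2$, which is strictly less than the $1/n^2$-spacing between consecutive gadgets.

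Given this realisation, the rest of the proof should mirror Theorem~\ref{line-seg}: I would argue that the intersection graph is combinatorially identical to the one constructed there, after which Lemma~\ref{lem:diamlessthan4} and Claim~\ref{cl:nofunnydistances} transfer verbatim and yield the claimed diameter-$3$ dichotomy. The easy direction is that all intended intersections are present: $e_{i,k}$ contains the bottom of $\bar a_i$ and the point $w_k\in\ell$ by construction, and for $k=k'$ the segments $e_{i,k}$ and $e'_{j,k'}$ share $w_k$, while $e_{i,k}$ and $e_{i,k'}$ share $(i/n^2,0)\in \bar a_i$, so every edge of the original construction survives.

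The main obstacle is ruling out the spurious intersections introduced by the tiny extensions: I need to verify (i) the extension of $e_{i,k}$ beyond $(i/n^2,0)$ does not cross $\bar a_{i'}$ for any $i'\ne i$, and (ii) for $k\ne k'$ the segments $e_{i,k}$ and $e'_{j,k'}$ do not meet, neither through $\ell$ nor through their short extensions into the opposite half-plane. Both should follow from an elementary calculation: because $e_{i,k}$ is close to vertical (horizontal span $\Oh(1/n)$, vertical span $\approx 1$), its extension has $y$-span at most $(1-y^*)/2=\Oh(1/n^2)$ and horizontal span $\Oh(1/n^3)$, and both quantities are dominated by the $1/n^2$-spacing between the $\bar a_i$ (respectively the $w_k$). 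One may assume $d\le n$ without loss of generality (the OV hardness regime is $d=\Oh(\log n)$), or alternatively replace the scale $1/n^2$ by $1/(n+d)^2$. Finally, since the construction produces $\Oh(nd)$ unit segments in $\Oh(nd)$ time, any $\Oh(N^{2-\eps})$ algorithm for \textsc{Diameter-3} on unit segment graphs would yield an $\Oh(n^{2-\eps}\poly(d))$ algorithm for \textsc{OV}, contradicting the OV Hypothesis.
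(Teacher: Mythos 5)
Your proposal is correct and follows essentially the same construction as the paper's proof: vertical unit segments for the vector gadgets at horizontal spacing $1/n^2$, a unit segment $\ell$ at vertical distance $y^*=\sqrt{1-1/n^2}$ from their endpoints, connector segments extending at most $(1-y^*)/2<0.26/n^2$ beyond the intended intersection points, and the observation that these extensions are too short to create spurious adjacencies; the only difference is an immaterial vertical translation of the coordinate system. The explicit remark that one may assume $d\le n$ (or rescale) is a nice touch, as the paper leaves this implicit when bounding the chord length by $1$.
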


\begin{proof}
We claim that the intersection graph of Subsection~\ref
{sec:simplesegments} can also be realized with unit segments, see Figure~\ref
{fig:unit-segments}. Indeed, set $w_k=(\frac{k}{n^2},0), k\in [d]$. Then
replace the segments $\bar a_i$ with the vertical unit segments $\{\frac{i}
{n^2}\}\times[y^*,y^*+1]$ so that consecutive segments have distance at most
$1/n^2$. Set $y^*=\sqrt{1-\frac{1}{n^2}}$, so that the lower endpoint of
$\bar a_i$ has distance at most $1$ and at least $y^*$ from any point $(\frac
{k}{n^2},0), k\in [d]$. If $n$ is large enough, then by the Taylor expansion
of $\sqrt{1-t}$ around $t=0$ we get that the distance between the lower
endpoint of $\bar a_i$ and $w_k$ is at least
\[y^*=\sqrt{1-\frac{1}{n^2}} =
1- \frac{1}{2n^2}-\Oh\left(\frac{1}{n^4}\right)>1-\frac{0.52}{n^2}.\]
Similarly, we set $b_j=\frac{j}{n^2}\times[-y^*-1,-y^*]$, so that the
construction has mirror symmetry on the $x$-axis. The segment $\ell$ is set
to $[0,1]\times 0$, and the segment $e_{i,k}$ is a segment containing the
bottom endpoint of $\bar a_i$ and $w_k$ so that the distance from $w_k$ to
the bottom endpoint of $e_{i,k}$ is equal to the distance from the bottom
endpoint of $\bar a_i$ to the top endpoint of $e_{ik}$. We define 
$e'_{j,k}$ in an analogous manner.

Notice that the distance of the bottom endpoint of $\bar a_i$ from the top
endpoint of $e_{i,k}$ is at most
\[\frac{1}{2}\Bigg(1-\min_{i\in [n],k\in [d]}
\dist\left(\left(\frac{i}{n^2},y^*\right),\left(\frac{k}{n^2},0
\right)\right)\Bigg)< \frac{1-y^*}{2}< \frac{0.26}{n^2}.\]
Since the pairwise distances between the segments $\bar a_k$ are at least
$1/n^2$, we have that $e_{i,k}$ intersects only $\bar a_k$ among these
segments. Similarly, for a fixed $k\in [d]$, the segments $e_{i,k}$ and $e'_
{j,k}$ extend beyond $w_k$ by at most $\frac{0.26}{n^2}$, so these extensions
are disjoint for distinct $k$ values as the distance between consecutive
$w_k$ points is $1/n^2$.

Since we have the same intersection graph as in Subsection~\ref
{sec:simplesegments}, the rest of the proof also works here.
\end{proof}

\begin{theorem}\label{thm:equilateral}
	For all $\epsilon > 0$, there is no $\Oh(n^{2 - \epsilon})$ time algorithm
	for solving the \textsc{Diameter-3} problem for intersection graphs
	of congruent equilateral triangles, unless the Orthogonal Vectors
	Hypothesis fails.
\end{theorem}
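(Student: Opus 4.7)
The plan is to adapt the unit segment reduction of Theorem~\ref{thm:unit-line-seg} to congruent equilateral triangles, realizing a (possibly denser) intersection graph on the same vertex set that still satisfies the diameter-$3$ characterization of Lemma~\ref{lem:diamlessthan4}. That is, the overall architecture of the reduction is preserved: we introduce vector triangles $\bar a_i, \bar b_j$; a hub structure replacing $\ell$; and, for each $(i,k)$ with $(a_i)_k = 1$ (respectively $(b_j)_k = 1$), an edge triangle $e_{i,k}$ (resp. $e'_{j,k}$). The reduction produces $\Oh(nd)$ unit equilateral triangles in $\Oh(nd)$ time, so an $\Oh(N^{2-\delta})$ algorithm for \textsc{Diameter-3} would yield an $\Oh(n^{2-\delta}\poly(d))$ OV algorithm.

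Concretely, I would place $\bar a_i$'s along a horizontal line at height $+H$ (with $H$ close to, but strictly less than, $1$) and $\bar b_j$'s at $-H$, horizontally spaced at positions $(i\varepsilon, +H)$ and $(j\varepsilon, -H)$ for some small $\varepsilon$; the hub (a single triangle or a short chain of triangles) sits along $y=0$ and contains all points $w_k = (k\varepsilon, 0)$. Each edge triangle $e_{i,k}$ is a unit equilateral triangle containing both the bottom vertex of $\bar a_i$ and the point $w_k$; by orienting each so that it lies entirely in the upper half-plane $y \ge 0$ and touches $y = 0$ only at $w_k$, we obtain the crucial ``half-plane separation'' property: upper and lower edge triangles only meet on the line $y=0$, and there only through a shared $w_k$, so $e_{i,k}$ and $e'_{j,k'}$ are adjacent iff $k = k'$. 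Consequently, a length-$3$ path $\bar a_i - e_{i,k} - e'_{j,k} - \bar b_j$ exists iff there is some $k$ with $(a_i)_k = (b_j)_k = 1$, i.e., iff $a_i$ and $b_j$ are non-orthogonal, which yields the desired diameter-$3$ characterization once one also verifies (as in Claim~\ref{cl:nofunnydistances}) that all other vertex pairs lie within distance $3$ via the hub.

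The main obstacle is the geometric packing: we need congruent unit equilateral triangles that simultaneously (1) keep neighboring $\bar a_i, \bar a_{i+1}$ non-interfering, (2) let $e_{i,k}$ contain the two designated points $\bar a_i$ and $w_k$, and (3) prevent $e_{i,k}$ from containing any unintended $\bar a_{i'}$ or $w_{k'}$ for $(i',k') \neq (i,k)$. For (1) and (2), taking $H$ just below $1$ (so that the distance from $\bar a_i$'s bottom vertex to $w_k$ is at most $1$ even for $|i-k|$ up to $\Theta(1/\varepsilon)$, analogous to the $y^\star = \sqrt{1-1/n^2}$ calculation in the unit segment proof), combined with a sufficiently small $\varepsilon$, is the natural choice. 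For (3), I would orient each $e_{i,k}$ along the direction from $\bar a_i$'s bottom to $w_k$ so that its narrow extent perpendicular to this direction excludes the other $\bar a_{i'}$'s and $w_{k'}$'s; a careful but routine calculation shows this is achievable by scaling the construction appropriately. With the intersection graph matching the unit segment construction on the relevant pairs, the OV Hypothesis rules out any $\Oh(n^{2-\delta})$ algorithm for \textsc{Diameter-3} in intersection graphs of congruent equilateral triangles, proving the theorem.
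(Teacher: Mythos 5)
Your proposal follows the paper's approach: take the unit-segment OV reduction of Theorem~\ref{thm:unit-line-seg}, realize an adjacency-equivalent intersection graph with congruent equilateral triangles, and note that the extra edges created by fattening the objects (e.g., the $\bar a_i$ now form a clique) do not affect the characterization of Lemma~\ref{lem:diamlessthan4}. Your half-plane normalization---forcing each upper edge triangle to meet the line $y=0$ only at its vertex $w_k$, so that $e_{i,k}\cap e'_{j,k'}\ne\emptyset$ iff $k=k'$ by pure separation---is a clean variant; the paper instead lets the edge triangles overhang $y=0$ by at most $0.26/n^2$ and checks that these overhangs are pairwise disjoint for distinct $k$.

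The one step that is wrong as stated is your justification for requirement (3). A unit equilateral triangle has no ``narrow extent perpendicular to its axis'': oriented with a vertex at $w_k$ and pointing toward $\bar a_i$, it widens to a full unit-length side at the far end and covers a constant-area region of the strip $0<y<H$, so perpendicular narrowness cannot be what excludes the other $\bar a_{i'}$. What actually saves the construction (and is the substance of the paper's proof) is that the triangle barely crosses the line $y=H$: since it must contain a point at distance at least $H$ from its vertex $w_k$, its apex overshoots $(i\varepsilon,H)$ by at most $1-H=O(1/n^2)$, so its intersection with the half-plane $y\ge H$ is a sliver whose longest side is $O(1/n^2)$ (the paper bounds it by $0.46/n^2$ via the law of sines), and this sliver misses $\bar a_{i\pm1}$ provided each $\bar a_{i'}$ presents a vertex, not a side, toward the line $y=H$---an orientation choice you must fix explicitly. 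You should also record that the hub stays within distance $\sqrt{3}/2<H$ of the line $y=0$, so it cannot create a shortcut from $\bar a_i$ to $\bar b_j$. With these points made precise, your construction goes through and coincides with the paper's argument.
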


\begin{proof}
Consider each unit segment $s$ in the construction of Theorem~\ref
{thm:unit-line-seg} (except for $\ell$) replace $s$ with the equilateral
triangle that has $s$ as one of its sides, and points left(i.e., intersects
the $y$-axis). See Figure~\ref{fig:congruent-equilateras-triangles} for an
illustration. We replace $\ell$ with the equilateral triangle that has the
origin as its leftmost vertex, and it is symmetric on the $x$-axis. In what
follows, let us use the notations for the line segments also for the
corresponding triangles.

Notice that the triangles $\bar a_i$ form a clique, and also the triangles
$\bar b_j$ form a clique. This however does not affect the proof of
Lemma~\ref{lem:diamlessthan4}. It is easy to check that $\ell$ intersects
only the triangles $e_{i,k}$ and $e'_{j,k}$. It remains to show that $e_
{i,k}$ intersects only $\bar a_i$ among $\bar a_1,\dots \bar a_n$, and it
intersects exactly those $e'_{j,k'}$ where $k=k'$. The analogous statement
for all $e'_{j,k}$ can then be proven the same way. Proving this statement
ensures that the incidences between the sets $\{\bar a_1,\dots, \bar a_n\},
\{e_{1,1},\dots,e_{n,d}\},\{\ell\},\{e'_{1,1},\dots,e'_{n,d}\},
\{\bar b_1,\dots, \bar b_n\}$ are the same as in Subsection~\ref
{sec:simplesegments}, and thus the reduction works the same way.

Consider now a triangle $\Delta=e_{i,k}$. The intersection of $\Delta$ with
the half-plane $y\geq y^*$ is a triangle $\Delta^*$, where one of the
vertices is $v=(\frac{i}{n^2},y^*)$, the shortest side has length at most
$\sigma=\frac{0.26}{n^2}$, one of the angles is $\pi/3$, and the angle at $v$
is within the range $[\frac{\pi}{2}-0.01,\frac{\pi}{2}+0.01]$ (assuming that
$n$ is large enough). Consequently, the angle of $\Delta^*$ opposite the
shortest side is in the interval $[\frac{\pi}{6}-0.01,\frac{\pi}
{6}+0.01]$. By the law of sines, the longest side of $\Delta$ has length at
most $\frac{\sigma\cdot \sin(\pi/3)}{\sin(\frac{\pi}{6}-0.01)}<\frac{0.46}
{n^2}$. Consequently, $\Delta^*$ is contained in the rectangle $R=[\frac
{i-0.46}{n^2},\frac{i+0.46}{n^2}]\times[y^*,y^*+\frac{0.26}{n^2}]$. Notice
that $R$ is disjoint from both $\bar a_{i-1}$ and $\bar a_
{i+1}$. Therefore $\Delta^*$ (and therefore $\Delta$) cannot intersect any
triangle $\bar a_{i'}$ for $i'\neq i$.

By symmetry, the triangle $\Delta_*$ formed by the intersection of $\Delta$
and the half-plane $y\leq 0$ fits in the rectangle $R^*=[\frac{k-0.46}
{n^2},\frac{k+0.46}{n^2}]\times[-\frac{0.26}{n^2},0]$. Notice that $\Delta_*$
intersects $e'_{j,k'}$ if and only if $k=k'$. Analogously, for a $e'_
{j,k'}$, let $\Delta'$ be the intersection of $e'_{j,k'}$ with the half-plane
$y\geq 0$. Then $\Delta'$ intersects $e_{i,k}$ if and only if $k=k'$. Thus
$e_{i,k}\cap e'_{j,k'}\neq \emptyset$ if and only if $k=k'$, as required.

Thus the construction has the required adjacencies between the groups
$\{\bar a_1,\dots, \bar a_n\}$, $\{e_{1,1}$, $\dots,e_{n,d}\}$, $\{\ell\}$,
$\{e'_{1,1},\dots,e'_{n,d}\}$, $\{\bar b_1,\dots, \bar b_n\}$, 
so Lemma~\ref{lem:diamlessthan4} also applies in the intersection graph
defined by this construction, which concludes the proof.
\end{proof}

\begin{figure}[t]
	\centering
	\resizebox{\textwidth}{!}{
	\begin{tikzpicture}[scale=0.6,thick, main/.style = {draw,fill=purple!30,angle radius=3mm, 
			angle eccentricity=2.5, 
			left, inner sep=.5pt}]
		
		\fill[white!80!black] (12.15,-1.5) rectangle (19.75,0.5);
		\draw [color=black]node at (19,-1) {$R$};

		\draw [white] (16.35,.42) coordinate (a) -- (16, -1.5) coordinate (c) --
		(12.2,-1.5) coordinate  (b) -- cycle;
		\pic[color=purple][main, "$\SI{}{}$"] {angle = b--a--c};
		\pic[color=purple][main, "$\SI{}{}$"] {angle = a--c--b};
		\scriptsize
		\draw [color=purple]node at (15.75,-.2) {$\alpha$};
		\scriptsize
		\draw [color=purple,thick]node at (15.5,-.9) {$\beta$};

		\node[isosceles triangle,isosceles triangle apex angle=60, draw, rotate=180,minimum size =2cm] (T60)at (0,2){};
		\scriptsize
		\draw node at (1.1,4.2) {$\bar{a}_1$};
		\node[isosceles triangle,isosceles triangle apex angle=60, draw, rotate=180,minimum size =2cm] (T60)at (.5,2){};
		\scriptsize
		\draw node at (1.6,4.2) {$\bar{a}_2$};
		\node[isosceles triangle,isosceles triangle apex angle=60, draw, rotate=180,minimum size =2cm] (T60)at (1,2){};
		\scriptsize
		\draw node at (2.1,4.2) {$\bar{a}_3$};
		\node[isosceles triangle,isosceles triangle apex angle=60, draw, rotate=180,minimum size =2cm] (T60)at (1.5,2){};
		\scriptsize
		\draw node at (2.6,4.2) {$\bar{a}_4$};
		\node[isosceles triangle,isosceles triangle apex angle=60, draw, rotate=180,minimum size =2cm] (T60)at (2,2){};
		\scriptsize
		\draw node at (3.1,4.2) {$\bar{a}_5$};
		\node[isosceles triangle,isosceles triangle apex angle=60, draw, rotate=180,minimum size =2cm] (T60)at (3,2){};
		\scriptsize
		\draw node at (4.1,4.2) {$\bar{a}_n$};
\draw[purple][to-to] (1, 4.6) -- (4,4.6);
		\draw [color=purple]node at (2.5,5) {$\frac{1}{n}$};
		
		\draw[color=purple][to-to] (4.4, 0) -- (4.4, 4);
		\draw [color=purple]node at (4.7,2) {$1$};
		
		\node [color=green!60!black!100!][isosceles triangle,isosceles triangle apex angle=60, draw, rotate=176,minimum size =2cm] (T60)at (1.4,-1.75){};
		\draw[color=green!60!black!100!] node at (1.5,-2) {$e_{4,k}$};
		
		\draw[color=purple][to-to] (-1.1, -3.8) -- (-1.1, 0);
		\small
		\draw [color=purple]node at (-3,-2) {$y^{\star} = \sqrt{1 - \frac{1}{n^2}}$};
		
		\draw [color=purple][to-to](1,-.2) -- (1.5,-.2);
		\draw [color=purple]node at (.6,-.2) {$\frac{1}{n^2}$};
		
		\node[draw,circle,inner sep=.01cm,fill] at (2.32,-3.78) {};
		\draw node at (2.32,-4.08) {$w_k$};

		\draw  (8.4, -1.5) -- (8.4, 1.5);
		\draw  (8.4, -1.5) -- +(150:1.3cm);
		\draw (16, -1.5) -- (16, 1.5);
		\draw  (16, -1.5) -- +(150:5cm);
		\draw (23.6, -1.5) -- (23.6, 1.5);
		\draw  (23.6, -1.5) -- +(150:5cm);
		\draw  [color=green!60!black!100!] (16, -1.5) -- +(80:1.95cm);
		\draw  [color=green!60!black!100!] (16, -1.5) -- +(260:1cm);
		\draw  [color=green!60!black!100!] (16.35,.42) -- +(205:6cm);
\draw[purple][to-to] (12.2, -1.7) -- (16,-1.7);
		\draw [color=purple]node at (14.1,-2.2) {$< \frac{0.46}{n^2}$};
		
		\draw[purple][to-to] (16.55, .42) -- (16.2,-1.5);
		\draw [color=purple]node at (17.2,-.5) {$<\frac{0.26}{n^2}$};

		\draw[color=pink] (1.75,-.2) rectangle ++(1.6,.5);
		\draw[dotted] (16, -1.5) -- (8.9, -1.5);
		\draw[pink][dotted] (2.85, 0) -- (7, -.4);
		\draw[color=pink] (7,-2.8) rectangle ++(17.1,5.8);
\end{tikzpicture}
	}
	\caption{Construction with congruent equilateral triangles. The green triangle $e_{i,k}$ goes only slightly above the line $y=y^*$ and is unable to intersect any triangle $\bar a_j$ other than $\bar a_i$. The marked angles are $\alpha = \frac{\pi}{3}, \beta \in [\frac{\pi}{2} - 0.01, \frac{\pi}{2} + 0.01]$.}
	\label{fig:congruent-equilateras-triangles}
\end{figure}
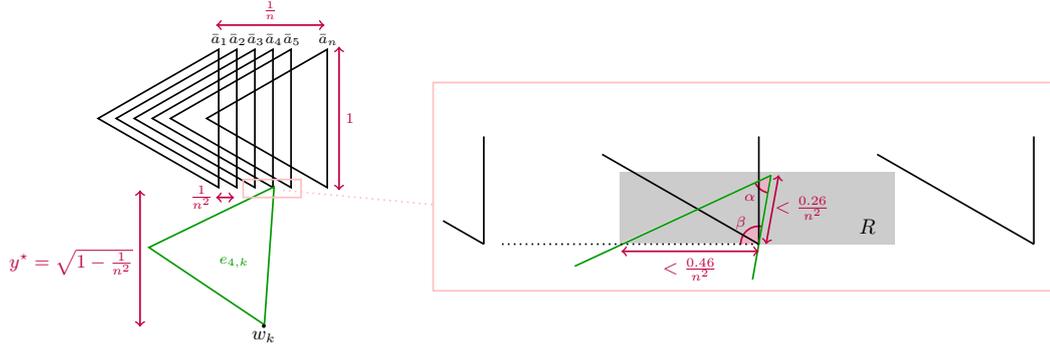

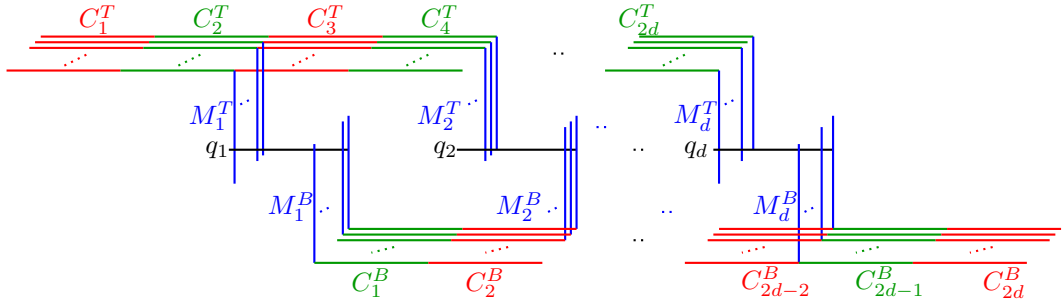
\begin{figure}[t]
	\centering
	\begin{tikzpicture}[scale=0.75, thick, main/.style = {draw,circle}]
		\draw [color=red]node at (1,.3) {$C^T_1$};
		\draw [color=red](0,0) -- (2,0);
		\draw [color=red](-.1,-.1) -- (1.9,-.1);
		\draw [color=red](-.2,-.2) -- (1.8,-.2);
		\draw[color=red][dotted] (.8,-.32) -- (.4,-.52);
		\draw [color=red](-.6,-.6) -- (1.4,-.6);

		\draw [color=green!60!black!100!]node at (3,.3) {$C^T_2$};
		\draw [color=green!60!black!100!](2,0) -- (4,0);
		\draw [color=green!60!black!100!](1.9,-.1) -- (3.9,-.1);
		\draw [color=green!60!black!100!](1.8,-.2) -- (3.8,-.2);
		\draw[color=green!60!black!100!][dotted] (2.8,-.32) -- (2.4,-.52);
		\draw [color=green!60!black!100!](1.4,-.6) -- (3.4,-.6);

		\draw [color=red]node at (5,.3) {$C^T_3$};
		\draw [color=red](4,0) -- (6,0);
		\draw [color=red](3.9,-.1) -- (5.9,-.1);
		\draw [color=red](3.8,-.2) -- (5.8,-.2);
		\draw[color=red][dotted] (4.8,-.32) -- (4.4,-.52);
		\draw [color=red](3.4,-.6) -- (5.4,-.6);

		\draw [color=green!60!black!100!]node at (7,.3) {$C^T_4$};
		\draw [color=green!60!black!100!](6,0) -- (8,0);
		\draw [color=green!60!black!100!](5.9,-.1) -- (7.9,-.1);
		\draw [color=green!60!black!100!](5.8,-.2) -- (7.8,-.2);
		\draw[color=green!60!black!100!][dotted] (6.8,-.32) -- (6.4,-.52);
		\draw [color=green!60!black!100!](5.4,-.6) -- (7.4,-.6);

\draw[dotted] (9,-.3) -- (9.25,-.3);

	\draw [color=green!60!black!100!]node at (10.5,.3) {$C^T_{2d}$};
	\draw [color=green!60!black!100!](10.5,0) -- (12.5,0);
	\draw [color=green!60!black!100!](10.4,-.1) -- (12.4,-.1);
	\draw [color=green!60!black!100!](10.3,-.2) -- (12.3,-.2);
	\draw[color=green!60!black!100!][dotted] (11.3,-.32) -- (10.9,-.52);
	\draw[color=green!60!black!100!](9.9,-.6) -- (11.9,-.6);

	\draw [color=blue] node at (3,-1.4) {$M^T_{1}$};
	\draw [color=blue] (3.4,-.6) -- (3.4,-2.6);
	\draw[color=blue] [dotted](3.5, -1.2) -- (3.7,-1.1);
	\draw [color=blue] (3.8,-.2) -- (3.8, -2.2);
	\draw [color=blue] (3.9,-.1) -- (3.9, -2.1);

	\draw [color=blue] node at (7,-1.4) {$M^T_{2}$};
\draw[color=blue] [dotted](7.5, -1.2) -- (7.7,-1.1);
	\draw [color=blue] (7.8,-.2) -- (7.8, -2.2);
	\draw [color=blue] (7.9,-.1) -- (7.9, -2.1);
	\draw [color=blue] (8,0) -- (8, -2);

\draw [color=blue] [dotted] (9.75,-1.6) -- (10,-1.6);

	\draw [color=blue] node at (11.5,-1.4) {$M^T_{d}$};
	\draw [color=blue] (11.9,-.6) -- (11.9,-2.6);
	\draw[color=blue] [dotted](12, -1.2) -- (12.2,-1.1);
	\draw [color=blue] (12.3,-.2) -- (12.3, -2.2);
\draw [color=blue] (12.5,0) -- (12.5, -2);
\draw node at (3.1,-2) {$q_1$};
	\draw (3.3, -2) -- (5.4, -2);
	\draw node at (7.1,-2) {$q_2$};
	\draw (7.3, -2) -- (9.4, -2);
	\draw [dotted](10.4, -2) -- (10.65, -2);
	\draw node at (11.5, -2) {$q_d$};
	\draw (11.8, -2) -- (13.9, -2);

	\draw [color=blue] node at (4.4,-3) {$M^B_{1}$};
	\draw [color=blue] (4.8,-1.9) -- (4.8,-4);
	\draw[color=blue] [dotted](4.9, -3.1) -- (5.1,-3);
\draw [color=blue] (5.3,-1.5) -- (5.3, -3.5);
	\draw [color=blue] (5.4,-1.4) -- (5.4, -3.4);

	\draw [color=blue] node at (8.4,-3) {$M^B_{2}$};
\draw[color=blue] [dotted](8.9, -3.1) -- (9.1,-3);
	\draw [color=blue] (9.2,-1.6) -- (9.2, -3.6);
	\draw [color=blue] (9.3,-1.5) -- (9.3, -3.5);
	\draw [color=blue] (9.4,-1.4) -- (9.4, -3.4);

		\draw[color=blue][dotted] (10.9,-3.1) -- (11.15,-3.1);

	\draw [color=blue] node at (12.9,-3) {$M^B_{d}$};
	\draw [color=blue] (13.3,-1.9) -- (13.3,-4);
	\draw[color=blue] [dotted](13.4, -3.1) -- (13.6,-3);
	\draw [color=blue] (13.7,-1.6) -- (13.7, -3.6);
\draw [color=blue] (13.9,-1.4) -- (13.9, -3.4);

	\draw [color=green!60!black!100!]node at (5.8,-4.35) {$C^B_1$};	
	\draw [color=green!60!black!100!](5.4,-3.4) -- (7.4,-3.4);
	\draw [color=green!60!black!100!](5.3,-3.5) -- (7.3,-3.5);
	\draw [color=green!60!black!100!](5.2,-3.6) -- (7.2,-3.6);
	\draw [color=green!60!black!100!][dotted] (6.2,-3.72) -- (5.8,-3.82);
	\draw [color=green!60!black!100!](4.8,-4) -- (6.8,-4);

	\draw [color=red]node at (7.8,-4.35) {$C^B_2$};	
	\draw [color=red](7.4,-3.4) -- (9.4,-3.4);
	\draw [color=red](7.3,-3.5) -- (9.3,-3.5);
	\draw [color=red](7.2,-3.6) -- (9.2,-3.6);
	\draw [color=red][dotted] (8.2,-3.72) -- (7.8,-3.82);
	\draw [color=red](6.8,-4) -- (8.8,-4);

	\draw [dotted] (10.4, -3.6) -- (10.65, -3.6);

	\draw [color=red]node at (12.9,-4.35) {$C^B_{2d - 2}$};	
	\draw [color=red](11.9,-3.4) -- (13.9,-3.4);
	\draw [color=red](11.8,-3.5) -- (13.8,-3.5);
	\draw [color=red](11.7,-3.6) -- (13.7,-3.6);
	\draw [color=red][dotted] (12.7,-3.72) -- (12.3,-3.82);
	\draw [color=red](11.3,-4) -- (13.3,-4);

	\draw [color=green!60!black!100!]node at (14.9,-4.35) {$C^B_{2d - 1}$};	
	\draw [color=green!60!black!100!](13.9,-3.4) -- (15.9,-3.4);
	\draw [color=green!60!black!100!](13.8,-3.5) -- (15.8,-3.5);
	\draw [color=green!60!black!100!](13.7,-3.6) -- (15.7,-3.6);
	\draw [color=green!60!black!100!][dotted] (14.7,-3.72) -- (14.3,-3.82);
	\draw [color=green!60!black!100!](13.3,-4) -- (15.3,-4);

	\draw [color=red]node at (16.9,-4.35) {$C^B_{2d}$};	
	\draw [color=red](15.9,-3.4) -- (17.9,-3.4);
	\draw [color=red](15.8,-3.5) -- (17.8,-3.5);
	\draw [color=red](15.7,-3.6) -- (17.7,-3.6);
	\draw [color=red][dotted] (16.7,-3.72) -- (16.3,-3.82);
	\draw [color=red](15.3,-4) -- (17.3,-4);

	\end{tikzpicture}
	\caption{Axis-parallel unit segments realizing the graph $G'(A)$.}
	\label{fig:axpar-unit-segments}
\end{figure}

\section{The diameter problem for axis-parallel unit segment graphs}\label{sec:unitseg}
In this section, we present a lower bound on the running time of the algorithm for the \textsc{Diameter} problem in intersection graphs of axis-parallel unit segments.

\begin{theorem}\label{thm:axis-par-unit-seg}
	For all $\epsilon > 0$, there is no $\mathcal{O}(n^{2 - \epsilon})$ time algorithm for solving \textsc{Diameter} in intersection graphs of axis-parallel unit segments, unless the Orthogonal Vectors Hypothesis fails.
\end{theorem}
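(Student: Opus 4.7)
The plan is to mirror the proof of Theorem~\ref{thm:unit-ball}: realize (a suitable variant of) the graph $G(A)$ from Section~\ref{sec:unitball} as an intersection graph of $\Oh(nd)$ axis-parallel unit segments in $\Reals^2$, so that essentially the same distance analysis as in Lemma~\ref{lem:diam_2d+4_unit-ball} gives the reduction from OV. Once this is achieved, an $\Oh(N^{2-\epsilon})$ algorithm for \textsc{Diameter} would solve OV in time $\Oh(n^{2-\epsilon}\poly(d))$, contradicting the OV Hypothesis.

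The main obstacle is that axis-parallel unit segments cannot easily host large cliques: two parallel unit segments intersect only when collinear and overlapping, and two perpendicular unit segments meet at a single point. I therefore first replace $G(A)$ by a variant $G'(A)$ obtained by dropping all intra-clique edges within each $C^T_k, C^B_k, M^T_k, M^B_k$ (retaining the perfect matchings between consecutive $C$-layers, the adjacencies between $v^T_{k,i}$ and $m^T_{k,i}$, and the adjacencies between $q_k$ and the vertices of $M^T_k, M^B_k$). A straightforward adaptation of Lemma~\ref{lem:diam_2d+4_unit-ball} should show that $G'(A)$ still has diameter $\Theta(d)$ iff $A$ has no orthogonal pair: whenever $(a_i)_k=(a_j)_k=1$, the path $v^T_{1,i}\to\cdots\to v^T_{k,i}\to m^T_{k,i}\to q_k\to m^B_{k,j}\to\cdots\to v^B_{2d,j}$ is a short witness, while orthogonality forces any path from $v^T_{1,i}$ to $v^B_{2d,j}$ to detour between different columns on its top and bottom halves, paying an additive cost of at least one extra step.

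For the geometric realization (sketched in Figure~\ref{fig:axpar-unit-segments}), I assign each $i\in [n]$ a distinct horizontal row-line $y=y_i^T$, with all $y_i^T$ packed into a vertical strip of height strictly less than $1$. The vertex $v^T_{k,i}$ is a horizontal unit segment on $y=y_i^T$ whose left endpoint lies at $x=(k-1)(1-\epsilon)$ for a small $\epsilon>0$, so consecutive segments $v^T_{k,i}, v^T_{k+1,i}$ overlap in a short interval of length $\epsilon$ while all pairs $v^T_{k,i}, v^T_{k',i'}$ with $i\neq i'$ are disjoint by construction (different rows). The middle gadget $M^T_k$ is realized by placing, for each $i$ with $(a_i)_k=1$, a vertical unit segment $m^T_{k,i}$ at an $x$-coordinate chosen inside the ``fingerprint strip'' uniquely covered by the $v^T_{k,\cdot}$'s, with $y$-range reaching from $y_i^T$ down to the row $y=y_k^q$ of the connector $q_k$, which is itself a horizontal unit segment crossing every vertical segment of $M^T_k$ and $M^B_k$. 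The bottom chain is placed symmetrically below. The technical heart will be tuning the overlap widths and row spacings (both of order $\Theta(1/n)$) so that all segments have length exactly $1$ and only the intended intersections appear; this parameter juggling is in the same spirit as (and arguably simpler than) the calculations in the proofs of Theorems~\ref{thm:unit-line-seg} and~\ref{thm:equilateral}, and a routine verification completes the reduction.
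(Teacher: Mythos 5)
Your high-level plan matches the paper's: replace the cliques of $G(A)$ by independent sets, realize the $C$-chains as overlapping horizontal unit segments, the $M$-gadgets as vertical unit segments, and each $q_k$ as a horizontal connector. The genuine gap is at the step you call a ``straightforward adaptation'' of Lemma~\ref{lem:diam_2d+4_unit-ball}. In your pruned graph (pure matchings, no intra-level edges of any kind) the only way to move between two different columns $i\neq j$ is to travel to some $q_k$ with $(a_i)_k=(a_j)_k=1$ and back, which costs $\Theta(k)$ in each direction. Hence a pair such as $v^T_{1,i},v^T_{1,j}$ (or $v^T_{1,i},v^B_{1,j}$) has distance roughly $4k$, where $k$ is the smallest common $1$-coordinate of $a_i,a_j$; if that coordinate is close to $d$, this is about $4d$, which exceeds $2d+4$ even though no orthogonal pair exists. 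So ``diameter $\leq 2d+4$ iff no orthogonal pair'' fails, and no other threshold rescues it, since the maximum distances in the two cases then overlap. In $G(A)$ this is precisely what the cliques are for: together with the all-ones vector they let any vertex switch to the all-ones column for free and reach every $q_k$ cheaply, which is how the ``all other pairs are close'' half of the lemma is proved. Deleting the cliques without a substitute breaks that half.

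There is a second, related problem in the geometric realization: a vertical unit segment for $m^T_{k,i}$ that reaches from row $y_i^T$ down to the row of $q_k$ must cross the horizontal segments of the other rows it passes, because all rows' segments at a given level occupy essentially the same $x$-interval (shifted only by $O(1/n)$); the adjacency ``$m^T_{k,i}$ meets only $v^T_{2k,i}$'' is not realizable in your layout. The paper handles both issues simultaneously: it offsets row $i$ diagonally by $i/(4n)$ in both coordinates so that the unavoidable extra crossings form a controlled monotone pattern ($m^T_{k,i}$ additionally meets $v^T_{2k+1,j}$ for all $j\leq i$, and symmetrically on the bottom), it defines $G'(A)$ to contain exactly these ``staircase'' edges, and these same edges provide the column-switching hubs that restore the distance analysis. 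Your argument needs either these extra edges together with a re-verified diameter lemma, or some other mechanism playing the role of the cliques.
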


\begin{proof}
Let $A = \{a_1, a_2, \dots, a_n\}$ be a set of $n$ vectors in $\{0, 1\}^d$.
First, we will construct a graph $G'(A)$ and show that $G'(A)$ has diameter
$2d + 4$ if and only if there is an orthogonal pair of vectors in $A$, and
then we will show how $G'(A)$ can be realized as an intersection graph of
axis parallel unit-segments in $\mathbb{R}^2$. Note that the constructed
graph $G'(A)$ will be very similar to the graph $G(A)$ constructed in
Section~\ref{sec:unitball}, with two crucial differences: (i) the cliques $M$
and $C$ will now be independent sets, and (ii) there will be additional edges
between $M^T_k$ and $C^T_{2k+1}$, as well as between $M^B_k$ and $C^B_
{2k}$. As in the other construction, assume without loss of generality that
the all-ones vector is in $A$.

We construct the graph $G'(A)$ as follows. Let $C^T_1, \dots, C^T_{2d}$ and
$C^B_1,\dots, C^B_{2d}$ be independent sets of size $n$, such that for all
$k \in [2d]$, $C^T_k =\{v^T_{k,1}, \dots, v^T_{k, n}\}$, $C^B_k = \{v^B_
{k,1}, \dots, v^B_{k, n}\}$. We add a perfect matching between each pair
$C^T_k$ and $C^T_{k + 1}$ for all $k \in [2d - 1]$ such that there is an edge
incident to $v^T_{k,i}$ and $v^T_{k + 1, i}$ for all $i \in[n]$. Analogously,
there is a perfect matching between each pair $C^B_k$ and $C^B_{k + 1}$ for
all $k \in [2d - 1]$. 

Let $M^T_1, \dots, M^T_{d}$ be independent sets of size at most $n$ such that
$M^T_i \subseteq\{m^T_{k,1}, \dots, m^T_{k, n}\}$, where $m^T_{k,i}\in M^T_k$
if and only if $(a_i)_k=1$. The vertex $m^T_{k,i}$ is adjacent to $v_
{2k,i}^T$; moreover, $m^T_{k,i}$ is also adjacent to $v_{2k+1,j}^T$ for all
$j \leq i$. We define the independent sets $M^B_1, \dots, M^B_{d}$ in
an analogous manner: $m^B_{k,i}$ is adjacent to $v_{2k-1,i}^B$, and also to
$v^B_{2k-2,j}$ for all $j \geq i$. Finally, let $Q = \{q_1, q_2, \dots,
q_d\}$ be a set of vertices such that $q_k$ is adjacent to all vertices of
$M^T_k$ and $M^B_k$.

\begin{lemma}
	The graph $G(A)$ has diameter $2d + 4$ if and only if there is no
	orthogonal pair in $A$.
\end{lemma}
The proof of this lemma is analogous to the proof of
Lemma~\ref{lem:diam_2d+4_unit-ball}.

\medskip
We now define a set of segments that realize $G'(A)$; see Figure~\ref
{fig:axpar-unit-segments}.
\begin{itemize}
		\item For all $k \in [2d]$ and $ i \in [n]$, the segment corresponding to $v^T_{k,i}\in C^T_k$ is horizontal, and its left endpoint is $(k-1+\frac{i}{4n}, \frac{i}{4n})$.
		\item For all $k \in [d]$ and $ i \in [n]$, the segment corresponding to $m_{k,i}^T \in M^T_i$ is vertical, and its top endpoint is $(2k+\frac{i}{4n}, \frac{i}{4n})$.
		\item For all $k \in [d]$, the left endpoint of the horizontal segment corresponding to $q_{k}$  is $(2k, -0.75)$.
		\item For all $k \in [d]$ and $ i \in [n]$, the segment corresponding to $m_{k,i}^B \in M^B_i$ is vertical, and its top endpoint is $(2k+0.75+\frac{i}{4n}, -0.75+\frac{i}{4n})$.
		\item For all $k \in [2d]$ and $ i \in [n]$, the segment corresponding to $v^B_{k,i}\in C^B_k$ is horizontal and its left endpoint is $(k+1.75+\frac{i}{4n}, -1.75+\frac{i}{4n})$.
\end{itemize}

It is routine to check that the intersection graph realized by these segments is $G'(A)$. The proof of Theorem~\ref{thm:axis-par-unit-seg} can now be wrapped up similarly to the proof of Theorem~\ref{thm:unit-ball}.
\end{proof}
  
\end{document}